\DeclareDelimiter{\H}[\mathcal{H}]{\lparen}{\rparen}
\DeclareDelimiter{\tv}[\mathnormal{d}_{\operatorname{TV}}]{\lparen}{\rparen}
\DeclareDelimiter{\kl}[\mathcal{D}_{\operatorname{KL}}]{\lparen}{\rparen}
\DeclareDelimiter{\codim}[\operatorname{co\text{-}dim}]{\lparen}{\rparen}
\title{Parallel Sampling via Counting}
\date{}
    \author{}
    \author{Nima Anari}
    \author{Ruiquan Gao}
    \author{Aviad Rubinstein}
    \affil{Stanford University, \url{{anari,ruiquan,aviad}@stanford.edu}}
	\author{Nima Anari}
	\email{anari@cs.stanford.edu}
	\affiliation{
	  \institution{Stanford University}
	  \city{Stanford}
	  \state{California}
	  \country{USA}
	}
	\author{Ruiquan Gao}
	\email{ruiquan@cs.stanford.edu}
	\affiliation{
	  \institution{Stanford University}
	  \city{Stanford}
	  \state{California}
	  \country{USA}
	}
	\author{Aviad Rubinstein}
	\email{aviad@cs.stanford.edu}
	\affiliation{
	  \institution{Stanford University}
	  \city{Stanford}
	  \state{California}
	  \country{USA}
	}
\begin{document}
	\Tag<sigconf>{
		\begin{abstract}
			We show how to use parallelization to speed up sampling from an arbitrary distribution $\mu$ on a product space $[q]^n$, given oracle access to counting queries: $\mathbb{P}_{X\sim \mu}[X_S=\sigma_S]$ for any $S\subseteq [n]$ and $\sigma_S \in [q]^S$. Our algorithm takes $O({n^{2/3}\cdot \operatorname{polylog}(n,q)})$ parallel time, to the best of our knowledge, the first sublinear in $n$ runtime for arbitrary distributions. Our results have implications for sampling in autoregressive models. Our algorithm directly works with an equivalent oracle that answers conditional marginal queries $\mathbb{P}_{X\sim \mu}[X_i=\sigma_i\;\vert\; X_S=\sigma_S]$, whose role is played by a trained neural network in autoregressive models. This suggests a roughly $n^{1/3}$-factor speedup is possible for sampling in any-order autoregressive models. We complement our positive result by showing a lower bound of $\widetilde{\Omega}(n^{1/3})$ for the runtime of any parallel sampling algorithm making at most $\operatorname{poly}(n)$ queries to the counting oracle, even for $q=2$.
		\end{abstract}
		\keywords{parallel sampling, counting, conditional marginals, autoregressive models}
		\maketitle
	}
	\Tag{
    	\maketitle
    	\begin{abstract}
    		
    	\end{abstract}
		\clearpage
	}

    \section{Introduction}\label{sec:intro}

The seminal work of \textcite{JVV86} established an algorithmic equivalence between the tasks of \emph{approximate sampling} and \emph{approximate counting}, for the ubiquitous class of self-reducible problems. This key equivalence is at the heart of the Monte Carlo Markov Chain approach to approximate counting \cite{SVV09}, which has enabled breakthroughs like approximating the permanent \cite{JSV04} or the volume of convex sets \cite{DFK91}. In this paper, we focus on one side of this equivalence, sampling via counting.

Self-reducibility, in its most widely applied form, concerns distributions $\mu$ on a product space $[q]^n$ and their pinnings: conditional distributions obtained by selecting a subset $S\subseteq[n]$ and partial configuration $\sigma_S\in [q]^S$ and conditioning $X\sim \mu$ to have coordinates in $S$ pinned to $\sigma_S$: $X_S=\sigma_S$. In this setting, sampling means producing a random $X$ distributed according to a specified pinning of $\mu$. Counting, on the other hand, refers to computing the partition functions of pinnings: $\P_{X\sim \mu}{X_S=\sigma_S}$.\footnote{In the literature, often $\mu$ is assumed to be an unnormalized measure. The partition function for unnormalized measures is simply $\mu\parens*{\set{X\in [q]^n\given X_S=\sigma_S}}$. Counting algorithms for unnormalized measures and normalized measures are easily reducible to each other, so w.l.o.g.\ we assume $\mu$ is normalized.} Sampling via counting is in fact very easy to describe in this setting. Assuming access to a counting oracle, we can produce samples from $\mu$ via the following \emph{autoregressive generation process}:
\begin{Algorithm*}
	Initialize $\sigma\gets \emptyset$\;
	\For{$i=1,\dots,n$}{
		\For{$x\in [q]$}{
			$p_x\gets \P*_{X\sim \mu}{X_i=x\given X_{[i-1]}=\sigma_{[i-1]}}$\;
		}
		$\sigma_i\gets$random sample in $[q]$ distributed $\sim(p_1,\dots,p_q)$\;
	}
	\Return{$\sigma$}
\end{Algorithm*}
Note that we only need to use the counting oracle to compute the computationally equivalent \emph{conditional marginals}:
\[ \P*_{X\sim \mu}{X_i=x\given X_{[i-1]}=\sigma_{[i-1]}}=\Tag{\frac{\P*_{X\sim \mu}{X_{[i-1]}=\sigma_{[i-1]}, X_i=x}}{\P*_{X\sim \mu}{X_{[i-1]}=\sigma_{[i-1]}}}}\Tag<sigconf>{\frac{\P*{X_{[i-1]}=\sigma_{[i-1]}, X_i=x}}{\P*{X_{[i-1]}=\sigma_{[i-1]}}}}. \]

This process, despite its simplicity, is how the widely successful autoregressive models generate their output \cite[see, e.g.,][]{LM11,VKK16,VSPUJLAKP17,DCLT18,YDYCSL19,BOpenAI20}. State-of-the-art large language models, or even some competitive vision models, train a neural network to answer \emph{conditional marginal queries} and then use the aforementioned process to generate samples. In the context of language models, $[q]$ represents a token space, and $n$ is the length of generated text or context length, while in pixel-space vision models, $[q]$ is possible values for a pixel, and $n$ is the number of pixels in the image.

One downside of this simple sampling process is that it is extremely sequential. One has to generate coordinates $1,\dots,i-1$, to know which conditional marginals need to be queried in the $i$th iteration. So, it is natural to ask if there is a more \emph{parallelizable} sampling process. More precisely, suppose that an oracle\footnote{E.g., a neural network in learned autoregressive models.} can answer conditional marginal queries of the form $\P{X_i=x\given X_S=\sigma_S}$, and we can interact with this oracle in rounds, each time asking polynomially many queries simultaneously. We are interested in finding the \emph{adaptive complexity} of sampling:
\begin{question}What is the minimum number of rounds before we can produce a sample?\end{question}

At first glance, it might seem that $\simeq n$ is roughly the optimal number of rounds. Indeed, if we are restricted to asking queries where we always pin a prefix $X_{[i-1]}$ and ask for the conditional marginal of the next coordinate $X_i$, not much better is possible. Imagine the adversarially chosen distribution $\mu$ being a Dirac delta on a single randomly chosen $\sigma \in [q]^n$. One cannot ``guess'' more than $\widetilde{O}(1)$ coordinates of $\sigma$ at a time, and thus any query pinning more than $\widetilde{O}(1)$ new coordinates is useless. Thus it takes $\widetilde{\Omega}(n)$ rounds to find the hidden $\sigma$.

Perhaps surprisingly, we show that when pinning is allowed on \emph{any subset} of the coordinates, we can significantly improve over $n$. For details of the algorithm, see \cref{sec:alg}.
\begin{theorem}[Main]\label{thm:main}
	There is an algorithm that produces a random sample from any distribution $\mu$ on $[q]^n$ by interacting in rounds with an oracle that answers conditional marginal queries, with each query returning $\P_{X\sim \mu}{X_i=x \given X_S=\sigma_S}$ for all $x\in [q]$. The total number of queries is $O(n)$, and the expected number of rounds is \[O(n^{2/3}\cdot \min\{\log^{2/3}{n}\cdot \log{q},q^{1/3}\log^{1/3}{q}\}).\]
\end{theorem}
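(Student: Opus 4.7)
The plan is a \emph{parallel speculative sampling} algorithm using a block size $b = \Theta(n^{1/3})$. The algorithm maintains a partial assignment $\sigma_S$ on a growing subset $S \subseteq [n]$ of already-sampled coordinates, and in each round tries to extend it by a fresh block $B$ of $b$ new coordinates via two parallel query phases. The \emph{proposal phase} queries $Q_j(\cdot) = \mathbb{P}_{X \sim \mu}[X_j = \cdot \mid X_S = \sigma_S]$ for every $j \in B$ in parallel and independently draws a proposal $Z_j \sim Q_j$. The \emph{verification phase} queries the true sequential conditionals $P_j(\cdot) = \mathbb{P}_{X \sim \mu}[X_j = \cdot \mid X_S = \sigma_S,\, X_{B_{<j}} = Z_{B_{<j}}]$ for every $j \in B$ in parallel; since each query conditions on the fixed string $(\sigma_S, Z_{B_{<j}})$ known at the start of the phase, all of them fit into a single round. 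Finally, a coordinate-wise maximum coupling converts the proposal $Z_B$ into a true-joint sample $Y_B$, and we append to $\sigma$ the longest prefix on which $Y_j = Z_j$ together with the corrected coordinate $Y_{j^*}$ at the first mismatch.

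Correctness is immediate from the max coupling, which guarantees that $Y_B$ is distributed exactly as the true joint conditional on $X_S = \sigma_S$, so by induction the final output is distributed as $\mu$. The total query count is $O(bR)$ since each round performs $2b$ queries and advances $\sigma$ by at least one coordinate; with $b = n^{1/3}$ and the round bound $R = \widetilde{O}(n^{2/3})$ established below, this is $O(n)$.

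For the round bound, define the per-round block KL divergence $\Delta^{(t)} = D_{\mathrm{KL}}(P_B \,\|\, \prod_{j \in B} Q_j)$, where $P_B$ is the true joint of $X_B$ given $X_S = \sigma_S$. Pinsker combined with Cauchy--Schwarz yields $\sum_{j \in B} d_{\mathrm{TV}}(P_j, Q_j) \leq \sqrt{b \Delta^{(t)}/2}$ in expectation over the proposal, which lower-bounds the expected accepted prefix length by $b\bigl(1 - \sqrt{b\Delta^{(t)}/2}\bigr)$ whenever this quantity is positive. The sum $\sum_t \Delta^{(t)}$ telescopes via the chain rule of entropy to at most $H(\mu) \leq n \log q$. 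Amortizing ``good'' rounds (with $\Delta^{(t)} \leq 1/b$, progressing by $\Omega(b)$) against ``bad'' rounds (progressing by $\Omega(1)$), and choosing $b = \Theta(n^{1/3})$, gives $R = O(n^{2/3} \log q \cdot \log^{2/3} n)$. The alternative $O(n^{2/3} q^{1/3} \log^{1/3} q)$ bound follows by replacing Pinsker with a tighter TV bound specialized to $[q]$-valued distributions, which beats $\log q$ when $q$ is small.

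The main obstacle is the amortization of ``bad'' rounds: a direct count against the $n \log q$ entropy budget only rules out $O(n^{4/3})$ bad rounds, since a single high-$\Delta$ round might contribute as little as one accepted coordinate while consuming only $1/b$ of the budget. Closing this gap requires tracking a sharper potential --- for instance, the remaining conditional entropy $H(X_{\bar S} \mid X_S = \sigma_S)$ or a convex function thereof --- and arguing that each round reduces this potential in expectation by an amount scaling with the realized prefix length. Combined with a possibly adaptive choice of $b$ (shrinking it after observing a high-$\Delta$ round) and a martingale concentration argument on the random stopping time, this should yield the final $O\bigl(n^{2/3} \cdot \mathrm{polylog}\bigr)$ bound claimed in the theorem.
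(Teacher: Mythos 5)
There is a genuine gap, and it is not just the amortization issue you flag at the end. Your algorithm processes the coordinates in a \emph{fixed} order, and in that form the claimed round bound is false no matter how cleverly the entropy is bookkept. Consider the distribution where $X_1,X_3,\dots,X_{n-1}$ are i.i.d.\ uniform bits and $X_{2i}=X_{2i-1}$ for every $i$ (this is exactly the example the paper uses to show its worst-case bound is tight for a fixed order). In any round, for every even coordinate whose partner also lies in the current block, your proposal $Z_{2i}$ is drawn independently of $Z_{2i-1}$ from the uniform marginal, while the verified conditional is a point mass on $Z_{2i-1}$; the coordinate-wise maximum coupling therefore rejects at each such coordinate independently with probability $1/2$, so the accepted prefix has expected length $O(1)$ and the algorithm needs $\Omega(n)$ rounds. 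The same example shows your telescoping claim $\sum_t \Delta^{(t)}\le H(\mu)\le n\log q$ is incorrect: rejected coordinates are re-proposed in later rounds and their correlations are counted again, so here $\Delta^{(t)}=\Theta(b)$ in essentially every round and $\sum_t\Delta^{(t)}=\Theta(nb)$. The missing idea is to process the coordinates in a \emph{uniformly random order}; this is not an optimization but the crux of the upper bound.

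With a random order, the correct replacement for your entropy budget is a pinning-lemma statement: for a uniform random permutation $\sigma$ and any gap $\theta$, the expectation over $\sigma$ and over the partial samples of $\sum_{i\ge\theta} d_{\mathrm{TV}}\bigl(X_{\sigma(i)}\mid X_{\sigma(1..i-\theta)},\,X_{\sigma(i)}\mid X_{\sigma(1..i-1)}\bigr)^2$ is at most $(\theta-1)\log q/2$ --- a \emph{per-coordinate average} bound of order $\theta\log q/n$, obtained by a chain-rule/telescoping argument that crucially exploits the exchangeability of the random order. Combined with a deterministic bound of the form (rounds) $\le n/\theta+1+\#\{i:\text{coordinate }i\text{ is ``unstable'' at distance }\theta\}$ and Cauchy--Schwarz, this bounds the expected number of small-progress rounds by $\widetilde{O}(\sqrt{n\theta\log q})$, and optimizing $\theta\approx n^{1/3}$ gives the theorem; your global $n\log q$ budget cannot recover this, as you yourself note ($O(n^{4/3})$ bad rounds), and the fixes you sketch (sharper potential, adaptive $b$, martingale concentration) remain speculative. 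Two further points where the paper does real work that your sketch elides: the $q^{1/3}\log^{1/3}q$ branch of the min requires a separate $\ell_2$/$\sqrt{q}$-type coupling bound (in the paper, via Doob's maximal inequality applied to the martingale of conditional marginals), not merely ``a tighter TV bound''; and the paper keeps a single shared random seed per coordinate across rounds via a robust universal coupler, which is what makes its progress accounting (the quantity $\overline{a}_i$) well-defined --- your per-round fresh maximum coupling is fine for correctness, but you would need to redo the round-complexity accounting for it rather than inherit the paper's.
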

We note that, although we mostly care about parallelizing interactions with the oracle, our algorithm's internal computation can also be parallelized, and up to polylogarithmic factors, the runtime on a \Class{PRAM} would be the same as the bound in \cref{thm:main}. We also note that the guarantee on the expected number of rounds for our algorithm also holds with high probability at the cost of extra logarithmic factors, see \cref{thm:tail-bounds}.

\begin{remark}
	In autoregressive models, especially large language models, $q$ is usually very large, but \cref{thm:main} has a mild dependency on $q$, at most logarithmic. Since autoregressive models are often run on hardware already capable of massive amounts of parallelism, e.g., GPUs or TPUs, one can expect our algorithm to speed up generation time even in practice. We leave experimental evaluation to future works, but we also note two potential issues. First, while many autoregressive models, such as XLNet \cite{YDYCSL19} or generally any-order autoregressive models \cite{SSE22}, allow pinning of any subset, many others only allow pinning of prefixes; as noted before, no significant parallel speedup is possible for just prefix pinnings. Second, in practice, the oracle's role is played by a trained neural network, which clearly returns only approximate answers. While we can handle approximate oracles, \Tag<sigconf>{see the full version of this article,}\Tag{see \cref{sec:approximate-oracle},} the guarantees we need in theory might not hold in practice.
\end{remark}

One might wonder if the number of rounds can be further improved, perhaps by using a different algorithm. In a dream scenario, would a polylogarithmic number of rounds be feasible? We answer this question \emph{negatively}, by providing a lower bound of $\widetilde{\Omega}(n^{1/3})$ for any algorithm.
\begin{theorem}[Lower bound, informal]\label{thm:lowerbound-informal}
	Even for $q=2$, any algorithm sampling from arbitrary distributions on $[q]^n$ needs to interact with the conditional marginal oracle for at least $\widetilde{\Omega}(n^{1/3})$ rounds.
\end{theorem}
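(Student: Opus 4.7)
The plan is a Yao-style argument: exhibit a prior over instances $\{\mu_\theta\}_\theta$ so that no deterministic algorithm with fewer than $\widetilde{\Omega}(n^{1/3})$ rounds of $\operatorname{poly}(n)$ queries correctly samples from a random $\mu_\theta$. I would take $\mu_\theta$ to be a product of $n^{2/3}$ disjoint ``chain gadgets,'' each on $k = n^{1/3}$ coordinates and each carrying an independent hidden parameter $\theta_G$, so that sampling $\mu_\theta$ amounts to correctly sampling every gadget in parallel. The product structure means the algorithm's queries can be decomposed by gadget, and by independence it suffices to lower-bound the round complexity of learning any single $\theta_G$.

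The design requirement is that each gadget is inherently sequential: to sample position $j$ of the chain one must already have learned positions $1,\dots,j-1$. I would enforce this with a hidden one-way chain, e.g.\ a secret permutation $\pi$ of $[k]$ together with secret transition rules, engineered so that $\mathbb{P}[X_i = 1 \mid X_S = \sigma_S]$ equals $1/2$ unless $S$ pins all $\pi$-predecessors of $i$ with their correct values, and otherwise leaks one new bit of $\theta_G$. The pinnings that carry information (the ``certifying'' pinnings for the next chain position) should form an exponentially sparse combinatorial family; with $\operatorname{poly}(n)$ per-round queries the algorithm can then blindly enumerate guesses for at most $c = O(\log n)$ next positions in a single round.

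Iterating the per-round progress bound across $R$ rounds and combining it with a Fano-style bound on decoding $\theta_G$ from the oracle transcript gives $R \cdot O(\log n) \geq k$, i.e., $R = \widetilde{\Omega}(n^{1/3})$. To promote this learning lower bound into a genuine sampling lower bound I would arrange for distinct completions of the chain suffix to yield essentially disjoint supports on the remaining coordinates, so any sampler with TV error bounded away from $1$ must implicitly determine the entire chain of at least one gadget; a Bayesian/coupling argument between two plausible completions of an unlearned suffix then forces a constant TV error.

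The main obstacle I anticipate is defending the per-round progress bound against adversarial batches of $\operatorname{poly}(n)$ simultaneous queries that exploit overlapping pinnings to exfiltrate more than $O(\log n)$ new chain positions in a single round. This is where the gadget design must be precise: the certifying pinnings need to form a family sparse enough that a union bound over the round's $\operatorname{poly}(n)$ queries keeps the expected number of certifying hits to $O(\log n)$, and queries outside this family must have likelihood ratio exactly $1$ across plausible continuations (possibly achieved by adding randomized padding bits to the gadget). Once the gadget has this one-wayness-against-parallel-attack property, the round lower bound follows from standard information-theoretic aggregation across rounds and across gadgets.
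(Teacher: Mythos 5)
There is a genuine gap, and it sits exactly where your proposal defers to ``engineering'': the chain gadget with the stated oracle behavior does not exist, and the natural attempts to build it break one of your two requirements. Suppose, as you specify, that $\mathbb{P}[X_i=1\mid X_S=\sigma_S]=1/2$ for every non-certifying pinning, while the certifying pinning (all $\pi$-predecessors of $i$ set to their correct values, an event of probability $p$ under the gadget distribution) reveals a bit $b\in\{0,1\}$. By the law of total probability the \emph{unpinned} marginal is then $\mathbb{P}[X_i=1]=1/2+p\,(b-1/2)\neq 1/2$, and more generally pinning a single coordinate to a correct predecessor value doubles this bias while pinning it to a wrong value kills it. Since the oracle in this model answers \emph{exactly}, an algorithm can issue the $O(k^2 q)$ queries $\mathbb{P}[X_i=1\mid X_\ell=v]$ in a single round and read off, from which biases double and which vanish, the entire permutation $\pi$ and all correct values of every gadget at once; your $O(\log n)$-per-round progress bound collapses. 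The alternative fix---making the off-path conditionals \emph{exactly} uniform so that no such bias exists---forces the hidden chain to matter only on an event of probability $2^{-\Omega(j)}$ at depth $j$, and then a sampler that ignores the deep chain incurs only exponentially small total variation error, so the learning lower bound no longer implies a sampling lower bound. Your ``randomized padding bits'' remark gestures at this tension but does not resolve it, and resolving it is the actual content of the theorem: one needs a family of distributions for which \emph{every} insufficient or mis-aimed pinning yields an exactly uninformative answer with high probability over the instance, yet producing even one support element requires the hidden information.

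The paper resolves this tension by abandoning the chain-inside-a-gadget picture altogether. It takes $\mu$ uniform over a random affine subspace of $\mathbb{F}_2^n$: the coordinates are split by a \emph{random} partition into $r\approx n^{1/3}$ blocks of size $n^{2/3}$, block $i$ carries a random linear code whose number of constraints is staggered so that a query whose codimension inside the block is noticeably below the block's threshold $a_i$ returns the deterministic value $2^{a_i-d}$ (no information) and noticeably above it returns $0$ (no information), with high probability over the code. The randomness of the partition concentrates every query's per-block codimension, so a single query can be ``tuned'' to at most one undiscovered block's threshold; an induction over blocks then shows they can only be unlocked one per round, and the hardness is proved for the easier task of outputting \emph{any} point of the support, which immediately yields the sampling statement. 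So the sequentiality comes from staggered codimension thresholds across $n^{1/3}$ large blocks plus a concentration argument, not from per-gadget chains of length $n^{1/3}$; your proposal is missing precisely the construction that makes all non-informative queries \emph{exactly} non-informative while keeping the sample itself dependent on the hidden data, and without it both the per-round progress bound and the TV-error conversion are unsupported.
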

For the more formal statement of our lower bound, see \cref{sec:hardness}. This shows that the optimal number of rounds, while sublinear in $n$, must still be a polynomially large function of $n$, at least with no further assumption on the distribution $\mu$.

\subsection{Related Work}

Interest in parallel sampling started decades ago. As an early example, \textcite{MVV87}, having found an algorithm to generate perfect matchings in parallel, asked if a \emph{uniformly random} one can also be generated in parallel. \Textcite{Ten95} provided negative evidence for this. Recently, there has been a significantly increased interest in parallel sampling algorithms.

Markov chains, arguably the most successful sampling tool, are na\"ively sequential, but recent works have shown techniques for parallelizing some classes of Markov chains, including Glauber dynamics, under tractability conditions on the distribution $\mu$ \cite{FHY21,LY22,Lee23}. We note that even sequential implementations of Markov chains such as Glauber dynamics take exponential time on \emph{worst-case distributions} $\mu$, so it is natural that these works need further assumptions on $\mu$.

Most related to our work, parallel sampling was raised as an open question by \textcite{AHSS20} for several challenge distributions that admit parallel (\Class{NC}, i.e., polylogarithmic time on polynomially many machines) counting algorithms. These include the distributions of uniformly random arborescences, directed Eulerian tours, planar perfect matchings, and determinantal point processes. They showed polylogarithmic sampling is possible for one of these challenges: sampling uniformly random arborescences. Later, \textcite{ABTV23} showed polynomial parallel speedups are possible for the class of \emph{entropically independent} distributions, which included all challenges except for \emph{planar perfect matchings}. Most recently, \textcite{AHLVXY23} achieved polylogarithmic sampling for all challenge distributions except for \emph{planar perfect matchings}, using the stronger ``weighted counting oracle.'' This stronger oracle returns marginals not just after pinnings, but under all ``exponential tilts,'' and interestingly, is what another class of generative AI models, namely \emph{diffusion models}, attempt to learn.

We note that all of these prior works use some tractability assumption about the distribution $\mu$. In fact, none of them are able to nontrivially speed up sampling of \emph{planar perfect matchings}, one of the original challenges. In contrast, in our work, the emphasis is on \emph{arbitrary distributions} $\mu$, as none of the tractability assumptions of prior work is likely to hold for example by distributions learned by autoregressive models. As an application of our results, we show how to nontrivially speed up parallel sampling of \emph{planar perfect matchings} in \cref{sec:applications}.

Recently, generative modeling in AI has produced amazing results. State-of-the-art models, depending on the domain or modality, are often autoregressive or diffusion-based. Given their huge importance in practice, significant attention has been paid to improving the sampling efficiency of these models, particularly via parallelism. For example, Picard iterations in diffusion models \cite{SBESA23} and speculative decoding in autoregressive models \cite{CBILSJ23,LKM23} have shown practical accelerations. There are many other techniques introduced in the literature, evaluated experimentally, by way of example ``prediction and forecasting'' \cite{WH20} and fixed-point iterations based on Jacobi and Gauss-Seidel equations \cite{SMLE21}. To the best of our knowledge, these works focus on real-world distributions and do not theoretically prove an unconditional asymptotic parallel speedup. Interestingly, some of these practical parallelization techniques, for example, speculative decoding, share similarities with our sampling algorithm, \cref{alg:sample-on-hypergrid2}. In speculative decoding, a \emph{draft model}, a much faster but less accurate model, is used to generate guesses sequentially for future tokens and these guesses are ``verified'' using a larger but more accurate model in parallel. Our algorithm is also based on a guessing and verification paradigm but differs from speculative decoding because we cannot afford to sequentially run a draft model. We emphasize that our work is focused on theoretical guarantees, and works with a single oracle, not tiers of oracles with cost/accuracy tradeoffs.

Finally, adaptive complexity has been studied for many other computational problems, for example, submodular maximization \cite{BRS19,LLV20} and minimization \cite{BS20,CCK21,CGJS22}. Most notably, the parallel complexity of search via a decision oracle was studied in the seminal work of \textcite{KUW88}, who showed, similarly to our results, that a polynomial speedup, and no better than a polynomial speedup, was possible. 

\subsection{Techniques}

Our algorithm works by modifying the autoregressive sampling process in two ways. First, we choose the order of coordinates according to a uniformly random permutation. Second, to break sequentiality, we generate ``guesses'' of future coordinates, by computing the marginal of each $X_i$ conditioned on the current pinning, \emph{in parallel}, and sampling from these marginal distributions independently for each $i$. While these independent samples clearly ignore dependencies between coordinates, we can in a second stage \emph{verify in parallel} that each $X_i$ would have been the sample produced if we had continued sequentially. We advance up to the point where our guesses successfully pass verification and then iterate.

The key idea behind our analysis is that as we pin more and more random coordinates, the dependencies between the remaining coordinates weaken in an \emph{average sense}. This intuitive idea is formalized by the so-called \text{pinning lemmas} \cite{RT12,Mon08} which we use in our analysis, see \cref{sec:pinning}. Weakened dependencies intuitively mean that our guesses are not likely to deviate from what sequential sampling would have produced. This is formally proved in \cref{sec:alg}.

Finally, a tool we use from existing literature on parallel sampling is a universal coupler \cite{LY22}. This is used to ensure consistency between the guessing and verification stages. In both of these phases, for each $X_i$, we would like to sample from a marginal distribution. Universal couplers ensure that when the marginal distributions are ``close'', the samples are likely to be exactly equal. We extend the analysis of universal coupling to multiple distributions, as needed by our work, see \cref{sec:universal-coupling}.

To prove our lower bound, we construct a challenge distribution that is a uniform distribution on an affine subspace of $\F_2^n=\set{0, 1}^n$, and we show that it is hard to even output \emph{anything} in its support in fewer than $\widetilde{\Omega}(n^{1/3})$ rounds. We group the coordinates in $[n]$ into roughly $\widetilde{\Omega}(n^{1/3})$ randomly chosen buckets and put varying numbers of affine constraints on each bucket. We prove that with high probability the buckets can only be discovered one at a time, from the most constrained bucket to the least. This is because queries pinning too many coordinates will not be useful at all, as they will violate the constraints of the most constrained bucket. On the other hand, if the number of pinnings is just right for the most-constrained undiscovered bucket, no information is gained about less-constrained buckets; with high probability all of the marginals in the less-constrained buckets remain uniform.

\subsection{Organization}

In \cref{sec:prelims} we discuss and further develop two of the main tools we use for parallelization: pinning lemmas and universal coupling. In \cref{sec:alg}, we describe our parallel sampling algorithm and prove our main result \cref{thm:main}. In \cref{sec:applications}, we provide an application of \cref{thm:main} to the problem of sampling planar perfect matchings. In \cref{sec:hardness}, we prove a lower bound against all algorithms, i.e., \cref{thm:lowerbound-informal}. \Tag{In \cref{sec:tightness}, we prove that our analysis of the algorithm presented in \cref{sec:alg} is tight, and $n^{2/3}$ cannot be improved for this particular algorithm.}

\Tag{
	\subsection*{Acknowledgments}
	Nima Anari was supported by NSF CAREER Award CCF-2045354. Ruiquan Gao was supported by NSF CCF-1954927, and a Stanford Graduate Fellowship. Aviad Rubinstein was supported by NSF CCF-1954927, and a David and Lucile Packard Fellowship.
}
    \section{Preliminaries}\label{sec:prelims}

We use $[n]$ to denote the set $\braces{1,2,\cdots, n}$. 
For any vector $x$ and set $S$, we use $x_S$ to denote $x$ restricted to $S$. We use $\pm x$ to indicate the interval $[-x,x]$. We use $\mathcal{S}_n$ to denote the set of permutations on $n$ elements. For any two sets $A,B$, we use $A\times B$ to denote the Cartesian product of $A$ and $B$, i.e., $A\times B = \braces{(a,b) \given a\in A, b\in B}$.

For a distribution $\mu$, we use $x\sim \mu$ to denote that $x$ is sampled from $\mu$. Similarly, for a set $S$, we use $x\sim S$ to indicate that $x$ is sampled uniformly at random from $S$.

\subsection{Pinning Lemmas}\label{sec:pinning}

The pinning lemma formalizes an intuition that randomly pinning coordinates of an arbitrary distribution should in an average sense lower the correlation between remaining coordinates. Intuitively, this should make parallel sampling easier; for example, if all coordinates become fully independent, one can in parallel sample from the marginals. We do not directly use the classical statement of the pinning lemma, mentioned below for comparison, but rather prove a statement in the same vein and using the same proof strategy.

\begin{lemma}[pinning lemma \cite{RT12,Mon08}]
Let $X_1, X_2,\dots, X_n$ be random variables, each supported on $\set{0,1}$.
For any $\l\in [n]$, there exists a set $S$ such that $\card{S}\leq \l$ and
\[
    \E*_{X_S}{
    \E*_{
        u,v\sim \binom{[n]}{2}
    }{
        \operatorname{Cov}
        \parens*{
            X_u,X_v
            \given
            X_S
        }^2
    }}
    \leq 
    \frac{O(1)}{\l}.
\]
\end{lemma}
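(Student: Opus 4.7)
The plan is to reduce squared covariance to conditional mutual information, then bound the total mutual information by applying the chain rule for entropy along a random pinning order. First, for binary pairs, Pinsker's inequality together with a direct computation (the total variation between the joint law of $(X_u, X_v)$ and the product of its marginals evaluates exactly to $2\card{\operatorname{Cov}(X_u, X_v \mid X_S = x_S)}$) yields $\operatorname{Cov}(X_u, X_v \mid X_S = x_S)^2 \leq \tfrac{1}{8} \cdot I(X_u; X_v \mid X_S = x_S)$ pointwise in $x_S$. Averaging over $X_S$ gives $\mathbb{E}_{X_S}[\operatorname{Cov}(X_u, X_v \mid X_S)^2] \leq O(I(X_u; X_v \mid X_S))$, so it suffices to exhibit $S$ with $\card{S} \leq \ell$ satisfying $\mathbb{E}_{u,v \sim \binom{[n]}{2}}[I(X_u; X_v \mid X_S)] = O(1/\ell)$.

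To produce such an $S$, I would draw a uniformly random permutation $\pi$ of $[n]$ and an independent $T$ uniform on $\set{0, 1, \ldots, \ell - 1}$, and set $S_T := \set{\pi(1), \ldots, \pi(T)}$, so that $\card{S_T} \leq \ell - 1$ deterministically. The chain rule gives, for every fixed $v$ and $\pi$, the telescoping identity
\[
\sum_{t=0}^{n-1} I(X_v;\, X_{\pi(t+1)} \mid X_{\pi(1)}, \ldots, X_{\pi(t)}) = H(X_v) - H(X_v \mid X_{[n]}) \leq \log 2.
\]
Summing over $v \in [n]$, averaging over $T$, and then over $\pi$, one obtains
\[
\mathbb{E}_{\pi, T} \, \mathbb{E}_{v \sim [n]}\left[I(X_v;\, X_{\pi(T+1)} \mid X_{S_T})\right] \leq \frac{\log 2}{\ell}.
\]

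The last step is to translate this expectation over the ordered pair $(v, \pi(T+1))$ into the expectation over a uniform unordered pair from $\binom{[n]}{2}$. Over the joint randomness, $\pi(T+1)$ is marginally uniform on $[n]$ and independent of $v$, so $(v, \pi(T+1))$ is uniform on $[n]^2$. The diagonal $v = \pi(T+1)$ contributes only $O(1/n)$, and every pair with $v \in S_T$ yields $I = 0$ because $X_v$ is then determined by $X_{S_T}$. A short counting argument (valid when $\ell \leq n/2$) converts the ordered-pair average into the average over $\binom{[n]}{2}$ up to a constant factor; choosing a realization $(\pi^*, T^*)$ attaining at most the mean gives the desired $S$. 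The main obstacle I expect is this final bookkeeping — relating the averaging domains $(v, \pi(T+1))$, $\binom{[n] \setminus S_T}{2}$, and $\binom{[n]}{2}$ without losing a factor of $n$. Both the chain rule and Pinsker steps are entirely standard; the delicate part is ensuring that the diagonal and the $v \in S_T$ case are treated correctly so the final bound comes out to $O(1/\ell)$ with the stated dependence.
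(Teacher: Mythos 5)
Your reduction of squared covariance to conditional mutual information is correct (for binary variables the total variation between the joint law of $(X_u,X_v)$ and the product of its marginals is exactly $2\lvert\operatorname{Cov}\rvert$, so Pinsker gives $\operatorname{Cov}^2\le I/8$), and the telescoping bound $\mathbb{E}_{\pi,T}\,\mathbb{E}_{v}[I(X_v;X_{\pi(T+1)}\mid X_{S_T})]\le (\log 2)/\ell$ is also correct; this is the same Pinsker-plus-chain-rule-over-a-random-permutation strategy that the paper uses for its own variant (the paper only cites the classical statement and never proves it). The genuine gap is in your last step, ``choosing a realization $(\pi^*,T^*)$ attaining at most the mean gives the desired $S$'': once you fix $(\pi^*,T^*)$ you have also fixed the second coordinate $w^*=\pi^*(T^*+1)$, so a below-average realization only yields $\mathbb{E}_{v\sim[n]}[I(X_v;X_{w^*}\mid X_S)]\le (\log 2)/\ell$, a bound on pairs containing the single coordinate $w^*$, not on $\mathbb{E}_{\{u,v\}\sim\binom{[n]}{2}}[I(X_u;X_v\mid X_S)]$ (take $X_{w^*}$ independent of everything and the remaining coordinates perfectly correlated: the fixed-realization quantity is $0$ while the pair average is large). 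Relatedly, the fact that $(v,\pi(T+1))$ is marginally uniform on $[n]^2$ cannot be invoked as stated, because this pair is not independent of $S_T$ --- indeed $\pi(T+1)\notin S_T$ always --- so marginal uniformity says nothing about the conditional law of the pair given the pinned set.

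The repair is to do the ordered-to-unordered conversion pointwise in the pinned set, and only then apply the averaging argument, choosing a realization of $S_T$ rather than of $(\pi,T)$. Concretely, condition on $(T,S_T)$: given $S_T$, the element $\pi(T+1)$ is uniform on $[n]\setminus S_T$, so
\begin{equation*}
\mathbb{E}_{\pi,T}\,\mathbb{E}_{v\sim[n]}\bigl[I(X_v;X_{\pi(T+1)}\mid X_{S_T})\bigr]
=\mathbb{E}_{T,S_T}\Bigl[\tfrac{1}{n\,(n-\lvert S_T\rvert)}\sum_{v\in[n]}\sum_{w\notin S_T} I(X_v;X_w\mid X_{S_T})\Bigr].
\end{equation*}
For each fixed $S_T$ all terms are nonnegative, terms with $v\in S_T$ or $w\in S_T$ vanish, and dropping the diagonal only decreases the sum, so the bracketed quantity is at least $\tfrac{1}{n^2}\sum_{u\neq w}I(X_u;X_w\mid X_{S_T})=\tfrac{n-1}{n}\,\mathbb{E}_{\{u,v\}\sim\binom{[n]}{2}}[I(X_u;X_v\mid X_{S_T})]$. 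Hence $\mathbb{E}_{T,S_T}\bigl[\mathbb{E}_{\{u,v\}}[I(X_u;X_v\mid X_{S_T})]\bigr]\le \tfrac{n}{n-1}\cdot\tfrac{\log 2}{\ell}$, and the probabilistic method applied to $S_T$ alone produces the desired set, giving the lemma with constant roughly $\tfrac{\log 2}{4}$ after combining with $\operatorname{Cov}^2\le I/8$. Note this version needs no $\ell\le n/2$ restriction: extending $w$ from uniform on $[n]\setminus S_T$ to uniform on $[n]$ costs nothing because the added terms are zero, and the only loss is the factor $\tfrac{n}{n-1}$.
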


We now define and state well-known statements about entropy, building up to state and prove our new variant of the pinning lemma.

\begin{definition}[entropy]
Let $X,Y$ be random variables on $[q]$. 
The entropy of random variable $X$ is defined to be
\begin{align*}
    \H{X} 
    = 
    -
    \sum_{i \in [q]} 
    \P *{X=i} \cdot \log \P *{X=i}.
\end{align*}
The conditional entropy of $X$ conditioned on $Y$ is defined to be
\begin{align*}
    \H{X \given Y}
    =
    \sum_{i\in [q]} \H{X \given Y=i} \cdot \P{Y=i}.
\end{align*}
\end{definition}

\begin{definition}[KL divergence]
    For a pair of distributions $\nu, \mu$, we let
    \[\kl{\nu\river \mu}=\E*_{x\sim \nu}{\log\frac{\nu(x)}{\mu(x)}}.\]
    Abusing notation, we extend the definition to random variables. If $X\sim \nu, Y\sim \mu$, we use $\kl{X\river Y}$ to denote $\kl{\nu\river \mu}$.
\end{definition}

\begin{lemma}
    \label{lem:kl-vs-entropy}
    For any two random variables $X,Y$, 
    \begin{align*}
        \E*_{Y}{\kl*{(X\given Y) \river X}} = \H{X} - \H{X\given Y}.
    \end{align*}
\end{lemma}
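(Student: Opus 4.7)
The plan is to prove this identity by direct expansion from the definitions of KL divergence, expectation, and entropy. Starting from the left-hand side, I would unfold to
\[
\E*_{Y}{\kl*{(X\given Y)\river X}} = \sum_y \P*{Y=y} \sum_x \P*{X=x\given Y=y} \log \frac{\P*{X=x\given Y=y}}{\P*{X=x}},
\]
and then split the logarithm of the ratio as $\log \P*{X=x\given Y=y} - \log \P*{X=x}$, producing two double sums to be treated separately.

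The first double sum, involving $\log \P*{X=x\given Y=y}$, matches the given definition of conditional entropy exactly and contributes $-\H{X\given Y}$. For the second, after pulling the minus sign outside, I would swap the order of summation and apply the marginalization identity $\sum_y \P*{Y=y}\cdot \P*{X=x\given Y=y} = \P*{X=x}$; the $y$-sum collapses, leaving $-\sum_x \P*{X=x}\log \P*{X=x} = -\H{X}$, which after the outer minus sign contributes $+\H{X}$. Summing the two pieces yields $\H{X} - \H{X\given Y}$, as claimed.

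No substantive obstacle is anticipated; the identity is essentially a one-line manipulation once the definitions are unfolded, and the proof uses only linearity of expectation, the law of total probability, and the stated definitions. The only care needed is bookkeeping of signs (both the entropy definition and the $-\log$ inside the KL expansion carry minus signs), together with the standard conventions $0\log 0 = 0$ and the observation that $\P*{X=x\given Y=y} > 0$ forces $\P*{X=x} > 0$, so every term in the expansion is well-defined.
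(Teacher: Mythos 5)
Your proof is correct: the paper states this lemma without proof, as a standard information-theoretic identity (it is the well-known fact that mutual information equals both the expected KL divergence between conditional and marginal laws and the entropy drop $\H{X}-\H{X\given Y}$), and your direct expansion—splitting the logarithm, identifying the conditional-entropy term, and collapsing the marginal term via the law of total probability—is exactly the standard argument, with the sign bookkeeping and the $0\log 0=0$ convention handled appropriately.
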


\begin{lemma}[Pinsker's inequality]
    \label{lem:tv-vs-kl}
    For any two random variables $X,Y$, 
    \begin{align*}
        \tv{X,Y} \leq \sqrt{\frac{1}{2}\kl{X \river Y}}
    \end{align*}
\end{lemma}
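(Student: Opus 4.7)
The plan is to prove Pinsker's inequality by the classical two-step reduction: first reduce the general case to a two-point (Bernoulli) comparison via the data-processing inequality, then establish the Bernoulli bound by elementary one-variable calculus.

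For the first step, let $\nu,\mu$ denote the laws of $X,Y$ on their common alphabet $\Omega$, and set $A=\set{x\in\Omega\given \nu(x)\geq \mu(x)}$. By the standard variational characterization of total variation, $\tv{X,Y}=\nu(A)-\mu(A)$. Consider the binary push-forward by the indicator $\mathbf{1}_A$, which produces $\operatorname{Ber}(\nu(A))$ and $\operatorname{Ber}(\mu(A))$ respectively. The data-processing inequality for KL divergence (which follows directly from the log-sum inequality applied to the partition of $\Omega$ into $A$ and its complement) gives
\[
    \kl{X\river Y}\;\geq\;\kl{\operatorname{Ber}(\nu(A))\river \operatorname{Ber}(\mu(A))}.
\]
So it is enough to prove the Bernoulli version of the inequality, namely that for any $p,q\in[0,1]$,
\[
    2(p-q)^2 \;\leq\; p\log\frac{p}{q} + (1-p)\log\frac{1-p}{1-q}.
\]

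For this univariate bound I would fix $p$ and set $f(q)$ equal to the right-hand side minus the left-hand side, so that $f(p)=0$. A direct computation gives $f'(q)=\frac{q-p}{q(1-q)}-4(q-p)=(q-p)\bigl(\tfrac{1}{q(1-q)}-4\bigr)$, and since $q(1-q)\leq 1/4$ on $[0,1]$ the bracketed factor is nonnegative. Hence $f'(q)$ has the same sign as $q-p$, so $f$ attains its minimum at $q=p$, yielding $f(q)\geq 0$ as required. Combining with the first step gives $\tv{X,Y}^2 = (\nu(A)-\mu(A))^2 \leq \tfrac{1}{2}\kl{X\river Y}$, which is the claim after taking square roots.

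The main conceptual obstacle is the reduction-to-Bernoulli step: one must invoke (or prove in one line from log-sum) the data-processing inequality. Everything after that is a routine calculus exercise. An alternative route, should the data-processing inequality be deemed heavier machinery than desired here, is to prove the pointwise bound $3(a-b)^2\leq (2a+4b)\bigl(a\log(a/b)-a+b\bigr)$ for $a,b>0$ and sum over the alphabet with Cauchy--Schwarz, but I find the two-point reduction cleaner and more in line with how the paper has already introduced entropy and KL.
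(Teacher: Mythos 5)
Your proof is correct: the reduction to the two-point case via the log-sum (data-processing) inequality on the partition $\{A,A^c\}$, followed by the single-variable calculus verification of $2(p-q)^2\leq p\log\frac{p}{q}+(1-p)\log\frac{1-p}{1-q}$, is the standard and complete argument (the boundary cases $q\in\{0,1\}$ are trivial since the divergence is then infinite unless $p=q$). The paper itself states Pinsker's inequality as a known classical fact and gives no proof, so there is nothing to compare against; your write-up would serve as a self-contained proof of the cited lemma.
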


We prove and use the following variant of the pinning lemma. 
\begin{lemma}
    \label{cor:pinning-on-sqr-prob-diff}
    For any integer $\theta>0$ and any collection of random variables $X=(X_1,\dots,X_n)$ with support $[q]$,\Tag<sigconf>{ if $\sigma\sim \mathcal{S}_n$,}
    \Tag{
	    \[
	        \E*_{X, \sigma\sim \mathcal{S}_n}{
	        	\sum_{i=\theta}^{n} 
	            \tv*{
	                X_{\sigma(i)} \given \set*{X_{\sigma(j)}}_{j\in [i-\theta]}
	                ,
	                X_{\sigma(i)} \given \set*{X_{\sigma(j)}}_{j\in [i-1]}
	            }^2
	        }\\
	        \leq 
	        \frac{(\theta-1)\log q}{2}.
	    \]
	}%
    \Tag<sigconf>{\begin{multline*}
        \E*{
        	\sum_{i=\theta}^{n} 
            \tv*{
                X_{\sigma(i)} \given \set*{X_{\sigma(j)}}_{j\in [i-\theta]}
                ,
                X_{\sigma(i)} \given \set*{X_{\sigma(j)}}_{j\in [i-1]}
            }^2
        }\\
        \leq 
        \frac{(\theta-1)\log q}{2}.
    \end{multline*}}%
\end{lemma}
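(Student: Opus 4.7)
The plan is to chain three tools. First, Pinsker's inequality (\cref{lem:tv-vs-kl}) converts each squared TV distance into a one-sided KL divergence. Second, \cref{lem:kl-vs-entropy} turns the expected KL divergence into a difference of conditional entropies. Third, the exchangeability of $\sigma \sim \mathcal{S}_n$ lets us express every conditional entropy as an evaluation of a single function $g(k)$ depending only on the size $k$ of the conditioning set; the resulting sum then telescopes, leaving only $\theta - 1$ boundary terms each bounded trivially by $\log q$.

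For the per-term reduction, fix $\sigma$ and $i \geq \theta$, and write $A = \set*{X_{\sigma(j)}}_{j \in [i - \theta]}$ and $B = \set*{X_{\sigma(j)}}_{j \in [i - 1]}$, so that $A \subseteq B$. Pointwise in the realization of $B$, Pinsker gives
\[
\tv*{X_{\sigma(i)} \given B,\; X_{\sigma(i)} \given A}^2 \leq \tfrac{1}{2}\,\kl*{X_{\sigma(i)} \given B \river X_{\sigma(i)} \given A}.
\]
Taking expectation over $X$ and applying \cref{lem:kl-vs-entropy} conditionally on $A$, with $B \setminus A$ in the role of $Y$ of that lemma, and then averaging over $A$, yields
\[
\E*_X{\tv*{X_{\sigma(i)} \given B,\; X_{\sigma(i)} \given A}^2} \leq \tfrac{1}{2}\bigl(\H{X_{\sigma(i)} \given A} - \H{X_{\sigma(i)} \given B}\bigr).
\]

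For the permutation averaging, define $g(k) = \E*_{S \sim \binom{[n]}{k},\, u \sim [n] \setminus S}{\H{X_u \given X_S}}$. Because $\sigma$ is a uniformly random permutation, the unordered set $\set*{\sigma(1), \ldots, \sigma(t)}$ is a uniform size-$t$ subset of $[n]$, and conditional on it, any later coordinate $\sigma(\ell)$ is uniform on its complement. Hence $\E*_\sigma{\H{X_{\sigma(i)} \given A}} = g(i - \theta)$ and $\E*_\sigma{\H{X_{\sigma(i)} \given B}} = g(i - 1)$. Summing over $i = \theta, \ldots, n$ and reindexing,
\[
\E*_{X, \sigma}{\sum_{i = \theta}^{n} \tv*{\cdot,\cdot}^2} \leq \tfrac{1}{2}\biggl[\sum_{j = 0}^{n - \theta} g(j) - \sum_{j = \theta - 1}^{n - 1} g(j)\biggr] = \tfrac{1}{2}\biggl[\sum_{j = 0}^{\theta - 2} g(j) - \sum_{j = n - \theta + 1}^{n - 1} g(j)\biggr],
\]
where the second equality is by cancelling the overlapping range (equivalently, by using the chain-rule identity $\sum_{j = 0}^{n - 1} g(j) = \H{X}$ to subtract $\H{X}$ from both terms). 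Since $0 \leq g(j) \leq \log q$ and the first bracketed sum has $\theta - 1$ terms while the second is non-negative, the whole expression is at most $(\theta - 1)\log q / 2$, matching the claim.

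The main obstacle is purely bookkeeping: (i) getting the direction of the KL and the order of expectations right when invoking \cref{lem:kl-vs-entropy} (the outer variable is $A$, the inner one is $B \setminus A$), and (ii) identifying the correct exchangeability statement that collapses both permutation averages into the same function $g(\cdot)$. Once those are in place, the telescoping and the trivial $\log q$ bound finish the proof.
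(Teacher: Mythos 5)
Your proof is correct and follows essentially the same route as the paper's: pointwise Pinsker, then the KL-to-entropy identity applied conditionally (with the KL oriented as $\kl*{\text{more conditioning}\river\text{less conditioning}}$, which is the direction that makes that identity exact), then exchangeability of the uniform permutation and a telescoping argument capped by $\log q$. The only difference is bookkeeping: the paper expands each entropy gap into adjacent differences in $k$, bounds the multiplicity of each $k$ by $\theta-1$ using that conditioning reduces entropy, and telescopes, whereas you reindex into the single function $g(\cdot)$ and cancel overlapping ranges, needing only $0\le g\le\log q$ --- both yield the same $(\theta-1)\log q/2$ bound.
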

\begin{proof}
    For any permutation $\sigma \in \mathcal{S}_n$ and any $i\in \set{\theta, \theta+1, \cdots, n}$, 
    \begin{multline*}
        \E*_{X}{\tv *{X_{\sigma(i)}\given \braces{X_{\sigma(j)}}_{j\in [i-\theta]}, X_{\sigma(i)} \given \braces{X_{\sigma(j)}}_{j\in [i-1]}}^2}
        \leq\\
        \frac{\E*_{X}{\kl *{X_{\sigma(i)}\given \braces{X_{\sigma(j)}}_{j\in [i-\theta]} \river X_{\sigma(i)} \given \braces{X_{\sigma(j)}}_{j\in [i-1]}}}}{2}
		\leq\\
        \frac{
            \H*{X_{\sigma(i)} \given \braces{X_{\sigma(j)}}_{j\in [i-\theta]}}
            -
            \H*{X_{\sigma(i)} \given \braces{X_{\sigma(j)}}_{j\in [i-1]}}
        }{2}
		=\\
        \sum_{k=i-\theta}^{i-2}\frac{{
            \H*{X_{\sigma(i)} \given \braces{X_{\sigma(j)}}_{j\in [k]}}
            -
            \H*{X_{\sigma(i)} \given \braces{X_{\sigma(j)}}_{j\in [k+1]}}
        }}{2}.
    \end{multline*}
    Summing over all possible $i$ and taking expectation over all permutations, we have
    \begin{multline*}
        \sum_{i=\theta}^{n} 
        \E*{
            \tv*{
                X_{\sigma(i)} \given \braces*{X_{\sigma(j)}}_{j\in [i-\theta]}
                ,
                X_{\sigma(i)} \given \braces*{X_{\sigma(j)}}_{j\in [i-1]}
            }^2
        }
        \leq\\
        \frac12 \cdot 
        \sum_{i=\theta}^{n} 
        \sum_{k=i-\theta}^{i-2}
            \E*_{\sigma\sim \mathcal{S}_n}{
                \H*{X_{\sigma(i)} \given \braces{X_{\sigma(j)}}_{j\in [k]}}
            }
            -
            \Tag<sigconf>{\\}
            \E*_{\sigma\sim \mathcal{S}_n}{
                \H*{X_{\sigma(i)} \given \braces{X_{\sigma(j)}}_{j\in [k+1]}}
            }
        \leq\\
        \frac12 \cdot
        \sum_{i=\theta}^{n} 
        \sum_{k=i-\theta}^{i-2}
            \E*_{\sigma\sim \mathcal{S}_n}{
                \H*{X_{\sigma(n)} \given \braces{X_{\sigma(j)}}_{j\in [k]}}
            }
            -
            \Tag<sigconf>{\\}
            \E*_{\sigma\sim \mathcal{S}_n}{
                \H*{X_{\sigma(n)} \given \braces{X_{\sigma(j)}}_{j\in [k+1]}}
            }
        \leq\\
        \frac{\theta-1}{2} \cdot
        \sum_{k=0}^{n-2} 
            \E*_{\sigma\sim \mathcal{S}_n}{
                \H*{X_{\sigma(n)} \given \braces{X_{\sigma(j)}}_{j\in [k]}}
            }
            -
            \Tag<sigconf>{\\}
            \E*_{\sigma\sim \mathcal{S}_n}{
                \H*{X_{\sigma(n)} \given \braces{X_{\sigma(j)}}_{j\in [k+1]}}
            }
        =\\
        \frac{\theta-1}{2} \cdot \E *_{\sigma\sim \mathcal{S}_n}{\H*{X_{\sigma(n)}} - \H*{X_{\sigma(n)} \given \braces{X_{\sigma(j)}}_{j\in [n-1]}}}
        \leq
        \Tag<sigconf>{\\}
        \frac{(\theta-1)\log q}{2}.
        \qedhere
    \end{multline*}
\end{proof}

\subsection{Universal Coupling}\label{sec:universal-coupling}

For any integer $q>0$, let $\Delta_q$ be the probability simplex on $[q]$, i.e., $\Delta_q = \set{\mu\in [0,1]^q\given \sum_{i=1}^q \mu(i)=1}$. In our main algorithm, we use a ``universal coupler'' as a subroutine. Informally, this is an algorithm that maps a distribution $u\in \Delta_q$ and a random source $r$ to a sample from $u$, with the property that the output is unlikely to change if $u$ is perturbed slightly (while keeping $r$ fixed). Such an algorithm naturally induces a coupling between any two distributions $\mu, \mu'$. In any coupling, there must be at least $\tv{\mu, \mu'}$ chance that the samples for $\mu, \mu'$ are unequal; and this lower bound can be achieved if we design a tailor-made coupling knowing both $\mu$ and $\mu'$. Surprisingly, one can achieve the same bound within constant factors \emph{without knowing both distributions in advance}.

The existence of these robust universal couplers appears to have been discovered and rediscovered many times. The earliest works that we are aware of are the MinHash algorithm of \textcite{Bro97} for uniform distributions, and a rejection-sampling-based strategy of \textcite{KT02,Hol07} for general distributions. See \cite{Bav20} for more on the history and optimality of these strategies. Recently, in the context of parallel sampling algorithms, the work of \textcite{LY22} has rediscovered the same rejection-sampling-based algorithm; we borrowed the terminology of ``universal coupling'' from the latter work.

A universal coupler is defined as follows:
\begin{definition}[universal coupling,~\cite{LY22}]
    A deterministic function $f:\Delta_q\times [0,1]\to [q]$ is a universal coupling on $[q]$ if, when $r\in [0,1]$ is chosen uniformly at random, for any distribution $\mu\in \Delta_q$ and $x\in [q]$,
    \[
        \P_{r\sim [0,1]}{f(\mu,r)=x} = \mu(x).
    \]
\end{definition}
Note that instead of $r\sim [0,1]$, one can use other sources of randomness with infinite entropy, such as an infinite sequence of random bits, etc. Since it is easy to translate between these sources, we pick the notationally most convenient form of random source when describing each universal coupler.

The main characteristic we would like for universal couplers is that on ``close'' distributions $\mu, \mu'$, the chance that $f(\mu, r)\neq f(\mu', r)$ is small. A lower bound on this chance is $\P_r{f(\mu, r)\neq f(\mu', r)}\geq \tv{\mu, \mu'}$. Surprisingly, this can be matched up to a factor of $2$; in fact the optimal $f$ has been shown \cite{KT02,Hol07,Bav20,LY22} to satisfy
\[ \P_r{f(\mu, r)\neq f(\mu', r)}\leq \frac{2\tv{\mu, \mu'}}{1+\tv{\mu, \mu'}}\leq 2\tv{\mu, \mu'}.\]

In our algorithm, we need a slightly stronger guarantee that holds for not just two, but an arbitrary number of distributions.
\begin{definition}[robust universal coupler]
	\label{def:robust-universal-coupler}
	We call a universal coupler $f$ robust if for any number of distributions $\mu_1,\dots, \mu_m$ it satisfies
	\[ \P*_r{\exists i,j\in [m]: f(\mu_i, r)\neq f(\mu_j, r)}\leq \frac{\sum_{x\in [q]} \parens{\max_{i\in [m]}{\mu_i(x)} - \min_{i\in [m]}{\mu_i(x)}}}{\sum_{x\in [q]} \max_{i\in [m]}{\mu_i(x)}} \]
\end{definition}
Note that for $m=2$, the numerator on the r.h.s.\ becomes $2\tv{\mu_1,\mu_2}$, and the denominator becomes $1+\tv{\mu_1, \mu_2}$. Thus, this matches the same optimal bound derived by prior works. We show that the rejection-sampling-based algorithm used in prior works, which we call the \textsc{MinCoupler}, satisfies this more general robustness guarantee. Additionally, we show that another widely used algorithm called the ``Gumbel trick'' also satisfies the same robustness guarantee.

\paragraph{Universal coupler of \textcite{KT02,Hol07}, rediscovered by \textcite{LY22}.} Interpret the uniformly random $r$ as a sequence of i.i.d.\ pairs $(x_1,p_1),(x_2,p_2), \cdots \in [q]\times [0,1]$, distributed uniformly at random.
Given the distribution $\mu$, the algorithm $f$ picks the smallest index $i\geq 1$ such that $p_i\leq \mu(x_i)$ and outputs $x_i$.
See \cref{alg:uni-coupling}.

\begin{Algorithm}
\caption{Universal coupler $\textsc{MinCoupler}$ \cite{KT02,Hol07,LY22}}
\label{alg:uni-coupling}
\KwIn{Distribution $\mu \in \Delta_q$, randomness $r\in [0,1]$}
\KwOut{A sample from $\mu$}
Interpret $r$ as uniform i.i.d.\ pairs $(x_1,p_1), (x_2,p_2), \cdots \in [q]\times [0,1]$, e.g., by
\begin{itemize}
    \item using bits of $r$ at odd indices for $x_1, x_2, \cdots $, and
    \item using bits of $r$ at even indices for $p_1, p_2, \cdots $ in a zig-zag order.
\end{itemize}
$i^*\gets \min\set{i\given p_i\leq \mu(x_i)}$ \;
\Return{$x_{i^*}$}
\end{Algorithm}

\paragraph{Correctness and efficiency.} \Cref{alg:uni-coupling} is a universal coupler and can be implemented with high probability by choosing only a sequence of $L=O(q\log n)$ pairs $(x_1,p_1),\cdots, (x_{L}, p_{L})$ \cite[see, e.g.,][]{LY22}. 
\begin{lemma}[{\cite[see, e.g., Lemma~4.3,][]{LY22}}]
    Suppose \textsc{MinCoupler} is constructed in \cref{alg:uni-coupling} and $i^*$ is the smallest index chosen by $\textsc{MinCoupler}$.
    Then, for any distribution $\mu\in \Delta_q$,
    \begin{enumerate}
        \item \textsc{MinCoupler} is a universal coupler: $\forall x\in [q]$, \[\P{\textsc{MinCoupler}(\mu,r)=x}=\mu(x),\]
        \item $i^*$ follows the geometric distribution with success probability $1/q$.
    \end{enumerate}
\end{lemma}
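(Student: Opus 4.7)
The plan is to reduce everything to a single-pair success calculation and then exploit the memoryless i.i.d.\ structure of the pairs $(x_i,p_i)$. First I would compute the probability that a single pair ``succeeds'', i.e., that $p_i\leq \mu(x_i)$. Since $x_i$ is uniform on $[q]$ and $p_i$ is an independent uniform draw from $[0,1]$, conditioning on $x_i=x$ gives success probability $\mu(x)$, and averaging over $x$ yields
\[
    \P*_{(x_i,p_i)}{p_i\leq \mu(x_i)} \;=\; \sum_{x\in [q]} \frac{1}{q}\cdot \mu(x) \;=\; \frac{1}{q}.
\]
Because the pairs are i.i.d., the index $i^*$ of the first success is geometric with success probability $1/q$, which is exactly part~(2).

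For part~(1), I would refine the computation by tracking which symbol $x$ is output. The joint event ``pair $i$ is the first success and $x_i=x$'' decomposes as ``all previous pairs fail'' (probability $(1-1/q)^{i-1}$ by independence) times ``pair $i$ succeeds on the value $x$'' (probability $\tfrac{1}{q}\cdot\mu(x)$). Summing the resulting geometric series,
\[
    \P*{\textsc{MinCoupler}(\mu,r)=x} \;=\; \sum_{i\geq 1} \parens*{1-\tfrac{1}{q}}^{i-1}\cdot \tfrac{1}{q}\cdot \mu(x) \;=\; \mu(x).
\]
This shows $f(\mu,\cdot)$ pushes the uniform distribution on $r$ to $\mu$, confirming the universal coupler property.

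Before assembling these two computations, I would note that $i^*$ is almost surely finite (the success probability $1/q$ is strictly positive and trials are independent), so the infinite sum above is valid and the algorithm terminates with probability $1$. The main ``obstacle'' is really just being careful about the product structure: $x_i$ and $p_i$ must be independent, and different pairs must be independent, so the decoding of $r$ into the sequence of pairs matters. Once that is set up correctly, both parts are a one-line calculation, and the finite-truncation claim (that $O(q\log n)$ pairs suffice with high probability) follows immediately from the tail of the geometric distribution.
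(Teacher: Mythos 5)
Your proof is correct and is the standard rejection-sampling argument: the per-pair success probability $\sum_{x}\frac{1}{q}\mu(x)=\frac{1}{q}$ gives the geometric law for $i^*$, and summing $(1-\frac1q)^{i-1}\cdot\frac1q\cdot\mu(x)$ over $i$ gives the marginal $\mu(x)$. The paper states this lemma without proof, citing Lemma~4.3 of the prior work; your argument is essentially the same one found there, so there is nothing further to compare.
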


\paragraph{Performance.} 
For any two distributions $\mu,\upsilon\in \Delta_q$, the samples produced by \cref{alg:uni-coupling} for the two distributions (using a shared random number $r$) are different with probability at most $\frac{2\tv{\mu,\upsilon}}{1+\tv{\mu,\upsilon}}$, which is also tight in the worst case \cite[see][]{KT02,Hol07,Bav20,LY22}. 

We generalize the performance guarantee to a multi-distribution setting. 
\begin{lemma}[robustness of \textsc{MinCoupler}]
    \label{lem:uni-coupler-gurantee}
    Consider any $m,q>0$.
    Suppose 
    $r\in [0,1]$ is uniformly random.
    For any distributions $\mu_1,\cdots, \mu_m\in \Delta_q$, the probability that there exist $i,j\in [m]$ such that $ \textsc{MinCoupler}(\mu_i,r) \neq \textsc{MinCoupler}(\mu_j,r)$ is at most 
    \begin{align*}
        \frac{\sum_{x\in [q]} \parens{\max_{i\in [m]}{\mu_i(x)} - \min_{i\in [m]}{\mu_i(x)}}}{\sum_{x\in [q]} \max_{i\in [m]}{\mu_i(x)}}.
    \end{align*}
\end{lemma}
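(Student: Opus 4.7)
The plan is to reduce the event that two outputs disagree to a simple event about the random stream $(x_1,p_1),(x_2,p_2),\dots$. For each $i$, I would classify the $i$-th pair into three mutually exclusive events: $A_i=\set*{p_i\leq \min_{k\in[m]}\mu_k(x_i)}$ (every $\mu_k$ accepts at $i$), $B_i=\set*{\min_k\mu_k(x_i)<p_i\leq \max_k\mu_k(x_i)}$ (some but not all accept), and $C_i$ (no one accepts). Let $\tau=\min\set*{i\given A_i\cup B_i}$ denote the first index at which at least one distribution accepts, and let $i^*_k$ be the first-accepted index for distribution $\mu_k$.

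The first key observation is that if $A_\tau$ holds, then since every earlier pair lies in $C$ (no one accepts there), $i^*_k=\tau$ for every $k\in[m]$, and every distribution outputs the same symbol $x_\tau$. Hence the outputs can disagree only if $B_\tau$ holds, giving
\[\P*{\exists i,j\in[m]:\textsc{MinCoupler}(\mu_i,r)\neq \textsc{MinCoupler}(\mu_j,r)}\leq \P{B_\tau}.\]
The second key observation is that because the pairs are i.i.d., so are the triples $(A_i,B_i,C_i)$, and a standard geometric-stopping-time calculation yields $\P{B_\tau}=\P{B_1}/\P{A_1\cup B_1}$. Using that $x_1$ is uniform on $[q]$ and $p_1$ is uniform on $[0,1]$, the numerator and denominator evaluate to $\frac{1}{q}\sum_x(\max_k\mu_k(x)-\min_k\mu_k(x))$ and $\frac{1}{q}\sum_x\max_k\mu_k(x)$ respectively, so the ratio is exactly the advertised bound.

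There is no real obstacle beyond careful book-keeping. The mild subtlety is that when $B_\tau$ occurs the outputs may still coincidentally agree if the different first-accepted indices across distributions happen to share the same $x$-value; but since the lemma only asks for an upper bound, it costs nothing to use ``the $i^*_k$'s all agree'' as a sufficient condition for ``the outputs all agree.'' As a sanity check, specializing to $m=2$ makes $\sum_x(\max-\min)=2\tv{\mu_1,\mu_2}$ and $\sum_x\max=1+\tv{\mu_1,\mu_2}$, recovering the tight two-distribution bound $\frac{2\tv{\mu_1,\mu_2}}{1+\tv{\mu_1,\mu_2}}$ mentioned above.
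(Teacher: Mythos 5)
Your proposal is correct and follows essentially the same route as the paper: bounding disagreement by the event that the first-acceptance indices $i^*_k$ do not all coincide (your $B_\tau$ is exactly the event $\min_k i^*_k \neq \max_k i^*_k$ in the paper), then computing its probability via the i.i.d.\ structure of the pairs, which is the paper's conditioning on $\min_k i^*_k = i$ in geometric-stopping form. The evaluation of $\P{B_1}$ and $\P{A_1\cup B_1}$ and the resulting ratio match the paper's calculation exactly.
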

\begin{proof}
    For each $j\in [m]$, let $i^*_j$ be the smallest index the universal coupler chooses for $\mu_j$, i.e.,
    \begin{align*}
        i^*_j := \min{\braces*{i\given p_i\leq \mu_j(x_i)}}.
    \end{align*}
    If all $i^*_j$ are identical, the coupler's outputs on $\mu_1,\dots,\mu_m$ are the same. Therefore, 
    \begin{multline*}
        \P*_{r}{\exists i,j\in [m] \textsc{MinCoupler}(\mu_i,r) \neq \textsc{MinCoupler}(\mu_j,r)} 
        \leq \\
        1 - \P*{i^*_1=i^*_2=\cdots=i^*_m}
        = \P*{\min_{j\in [m]}{i^*_j} \neq \max_{j\in [m]}{i^*_j}}.
    \end{multline*}
    Observe that 
    \Tag{
    \[
    \min_{j\in [m]} i^*_j = \min{\braces*{i \given p_i\leq \max_{j\in [m]}{\mu_j(x_i)}}}
    \quad \text{and}\quad\Tag<sigconf>{\\}
    \max_{j\in [m]} i^*_j = \min{\braces*{i \given p_i\leq \min_{j\in [m]}{\mu_j(x_i)}}}.
    \]
    }%
    \Tag<sigconf>{
    \begin{multline*} 
    \min_{j\in [m]} i^*_j = \min{\braces*{i \given p_i\leq \max_{j\in [m]}{\mu_j(x_i)}}}
    \quad \text{and}\quad\Tag<sigconf>{\\}
    \max_{j\in [m]} i^*_j = \min{\braces*{i \given p_i\leq \min_{j\in [m]}{\mu_j(x_i)}}}.
    \end{multline*}
    }%
    We can thus upper bound 
    \begin{multline*}
        \P*{\max_{j\in [m]} i^*_j\neq \min_{j\in [m]} i^*_j}
        =
        \sum_{i\geq 1} \P*{\min_{j\in [m]}{i^*_j} = i}\cdot \Tag<sigconf>{\\} \P*{p_i> \min_{j\in [m]} \mu_j(x_i) \given 
        \substack{
            p_i\leq \max_{j\in [m]}{\mu_j(x_i)}, \\
            \forall i'<i,~ p_{i'} > \max_{j\in [m]}{\mu_j(x_{i'})}
        }
        }
        =\\ 
        \sum_{i\geq 1} \P*{\min_{j\in [m]}{i^*_j} = i}\cdot \Tag<sigconf>{\\} \P*{p_i> \min_{j\in [m]} \mu_j(x_i) \given p_i\leq \max_{j\in [m]}{\mu_j(x_i)}}
        =\\
        \sum_{i\geq 1} \P*{\min_{j\in [m]}{i^*_j} = i}\cdot \Tag<sigconf>{\\}\frac{\P*{\min_{j\in [m]} \mu_j(x_i)<p_i\leq \max_{j\in [m]}{\mu_j(x_i)}}}{\P*{p_i\leq \max_{j\in [m]}{\mu_j(x_i)}}}
        =\\
        \sum_{i\geq 1} \P*{\min_{j\in [m]}{i^*_j} = i}\cdot \Tag<sigconf>{\\} \frac{q^{-1} \cdot \sum_{x\in [q]} \parens{\max_{j\in [m]} \mu_j(x)-\min_{j\in [m]} \mu_j(x)}}{q^{-1} \cdot \sum_{x\in [q]} \max_{j\in [m]}{\mu_j(x)}}
        =\\
        \frac{\sum_{x\in [q]} \parens{\max_{j\in [m]} \mu_j(x)-\min_{j\in [m]} \mu_j(x)}}{\sum_{x\in [q]} \max_{j\in [m]}{\mu_j(x)}}.
        \qedhere
    \end{multline*}
\end{proof}

We additionally show that a popular sampling strategy called the ``Gumbel trick'', used widely in machine learning and in particular in the context of autoregressive models \cite[see, e.g.,][]{JGP16}, is robust and satisfies the same bound as \cref{lem:uni-coupler-gurantee}; thus it can be used as an alternative to \textsc{MinCoupler}. The algorithm, described in \cref{alg:uni-coupling-Gumbel}, is based on the well-known property of exponential random variables that if $X_1\sim \operatorname{Exp}(m_1),\dots,X_q\sim \operatorname{Exp}(m_q)$ are independent, then the probability that $X_i$ is the smallest among $X_1,\dots,X_q$ is exactly $m_i/(m_1+\dots+m_q)$.

\begin{Algorithm}
\caption{Universal coupler \textsc{GumbelTrick} \cite[see, e.g.,][]{JGP16}}
\label{alg:uni-coupling-Gumbel}
\KwIn{Distribution $\mu \in \Delta_q$, randomness $r$}
\KwOut{A sample from $\mu$}
Interpret $r$ as $q$ i.i.d.\ exponential random variables $r_1, \dots, r_q\sim \operatorname{Exp}(1)$\;
\Return{$\arg\min\set{r_x/\mu(x)\given x\in [q]}$}
\end{Algorithm}

\begin{remark}
	The Gumbel trick is often presented in a syntactically different form. Most often in practice, the distribution $\mu$ is given as input by the log-likelihoods $\log(\mu)$. Instead of using $r_1,\dots, r_q$, one uses random variables $\gamma_1,\dots,\gamma_n$ that follow the Gumbel distribution. The algorithm returns the $x$ that maximizes $\log(\mu(x))+\gamma_x$. The Gumbel distribution can be most simply defined as the distribution of $-\log(X)$ for $X\sim \operatorname{Exp}(1)$. The equivalence of \cref{alg:uni-coupling-Gumbel} immediately follows. 
\end{remark}

\begin{lemma}[robustness of \textsc{GumbelTrick}]
	\label{lem:uni-coupler-Gumbel-gurantee}
    Consider any $m,q>0$.
    Suppose $r=(r_1,\dots,r_q)$ is chosen by independently sampling each $r_i\sim \operatorname{Exp}(1)$.
    For any distributions $\mu_1,\cdots, \mu_m\in \Delta_q$, the probability that there exist $i,j\in [m]$ such that $ \textsc{GumbelTrick}(\mu_i,r) \neq \textsc{GumbelTrick}(\mu_j,r)$ is at most 
    \begin{align*}
        \frac{\sum_{x\in [q]} \parens{\max_{i\in [m]}{\mu_i(x)} - \min_{i\in [m]}{\mu_i(x)}}}{\sum_{x\in [q]} \max_{i\in [m]}{\mu_i(x)}}.
    \end{align*}
\end{lemma}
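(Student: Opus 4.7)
The plan mirrors the proof of \cref{lem:uni-coupler-gurantee}: I would identify a clean sufficient condition under which \textsc{GumbelTrick} produces the same output on every $\mu_i$ given the shared randomness $r$, then bound the failure probability of that condition via a short calculation with exponential random variables.

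For each $i$, the output of \textsc{GumbelTrick} is $\arg\min_x T_x^i$ where $T_x^i := r_x/\mu_i(x) \sim \operatorname{Exp}(\mu_i(x))$. Let $M(x) := \max_i \mu_i(x)$ and $m(x) := \min_i \mu_i(x)$, and define the ``extreme'' times $G_x := r_x/M(x)$ and $B_x := r_x/m(x)$, which sandwich every $T_x^i$ as $G_x \leq T_x^i \leq B_x$. Writing $x^\ast := \arg\min_x G_x$, I would take as the sufficient condition the event $\{B_{x^\ast} < G_{x'} \text{ for every } x' \neq x^\ast\}$, because in this case $T_{x^\ast}^i \leq B_{x^\ast} < G_{x'} \leq T_{x'}^i$ for every $i$ and $x' \neq x^\ast$, so every $\textsc{GumbelTrick}(\mu_i,r)$ equals $x^\ast$. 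Hence the disagreement probability is at most $\Pr[B_{x^\ast} \geq G_{x^{\ast\ast}}]$, where $x^{\ast\ast}$ denotes the second smallest index among the $G_x$'s.

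To bound the latter, I would use that the $G_x$'s are independent exponentials of rates $M(x)$. Setting $S := \sum_y M(y)$, conditioning on $x^\ast = x$ (an event of probability $M(x)/S$) gives $G_x \sim \operatorname{Exp}(S)$, and by memorylessness $G_{x^{\ast\ast}} - G_x \sim \operatorname{Exp}(S - M(x))$ independent of $G_x$. Since $B_x - G_x = G_x \cdot (M(x) - m(x))/m(x)$, the event $\{B_x \geq G_{x^{\ast\ast}}\}$ reduces to comparing two scaled independent exponentials, and a one-line integral yields the conditional probability $\frac{(S - M(x))(M(x) - m(x))}{M(x)(S - M(x) + m(x))}$. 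Multiplying by $M(x)/S$, summing over $x$, and applying $(S - M(x))/(S - M(x) + m(x)) \leq 1$ then gives $\sum_x (M(x) - m(x))/S$, exactly matching the claim.

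The main obstacle is finding the right sufficient condition: it needs to be loose enough to admit an easy exponential-race calculation, yet tight enough to recover the optimal-up-to-constants bound of \cref{lem:uni-coupler-gurantee}. The two-sided sandwich works because all the slack is absorbed cleanly by the single inequality $(S - M(x)) \leq (S - M(x) + m(x))$. A minor edge case is $m(x) = 0$, in which case $B_x = +\infty$ and the sufficient condition fails whenever $x^\ast = x$; this is harmless since such $x$ contributes $M(x)/S = (M(x) - m(x))/S$ to the bound, still within budget.
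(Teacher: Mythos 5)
Your proposal is correct and is essentially the paper's argument: the sufficient condition you isolate (the winner under the rates $\max_i\mu_i$ still beats everyone even when slowed to rate $\min_i\mu_i$) is exactly the union of the paper's disjoint events $\mathcal{E}_y$, and it yields the same bound. The only difference is bookkeeping: the paper evaluates $\P[\mathcal{E}_y]$ in one line as the probability that an $\operatorname{Exp}(\mu_{\min}(y))$ variable wins the race against independent $\operatorname{Exp}(\mu_{\max}(x))$ variables, whereas you condition on the argmin and use memorylessness plus a short integral, arriving at the same conclusion.
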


\begin{proof}
	Define $\mu_{\min}(x)=\min\set{\mu_i(x)\given x\in [q]}$ and $\mu_{\max}(x)=\max\set{\mu_i(x)\given x\in [q]}$. For each $y\in [q]$, let $\mathcal{E}_y$ be the event that $r_y/\mu_{\min}(y)<r_x/\mu_{\max}(x)$ for all $x\in [q]-\set{y}$. Conditioned on $\mathcal{E}_y$, it is easy to see that $\textsc{GumbelTrick}(\mu_i, r)=y$ for all $i$. This is because for any $x\in [q]-\set{y}$
	\[r_y/\mu_i(y)\leq r_y/\mu_{\min}(y)< r_x/\mu_{\max}(x)\leq r_x/\mu_i(x),\]
	which means $y=\arg\min\set{r_x/\mu(x)\given x\in [q]}$. Note also that this implies the events $\mathcal{E}_y$ are disjoint for $y\in [q]$. So we can upper bound the probability that there is a pair $i,j$ with $\textsc{GumbelTrick}(\mu_i, r)\neq \textsc{GumbelTrick}(\mu_j, r)$ by
	\[ 1-\P*{\bigcup_{y\in [q]} \mathcal{E}_y}=1-\sum_{y\in [q]}\P{\mathcal{E}_y}. \]
	Now notice that $r_x/\mu_{\max}(x)\sim \operatorname{Exp}(\mu_{\max}(x))$ and $r_y/\mu_{\min}(y)\sim \operatorname{Exp}(\mu_{\min}(y))$. So $\mathcal{E}_y$ is the probability that one exponential random variable is the smallest among several independent ones (with different rates), which is proportional to the rate of the exponential random variable:
	\[ \P{\mathcal{E}_y}=\frac{\mu_{\min}(y)}{\mu_{\min}(y)+\sum_{x\in [q]-\set{y}}\mu_{\max}(x)}\geq \frac{\mu_{\min}(y)}{\sum_{x\in [q]}\mu_{\max}(x)}. \]
	We conclude by calculating
	\[ 1-\sum_{y\in [q]} \frac{\mu_{\min}(y)}{\sum_{x\in [q]}\mu_{\max}(x)} = \frac{\sum_{x\in [q]} \parens{\max_{i\in [m]}{\mu_i(x)} - \min_{i\in [m]}{\mu_i(x)}}}{\sum_{x\in [q]} \max_{i\in [m]}{\mu_i(x)}}. \qedhere \]
\end{proof}

Finally, we characterize the performance of a robust universal coupler on randomly constructed distributions $\mu_1,\cdots, \mu_m$ where $\braces{\mu_i(x)}_{i\in [m]}$ is a martingale for each $x\in [q]$. For these distributions, we can bound the chance of not-all-equal samples by an expression that is only related to the expected $\ell_1$ or $\ell_2$ distances between the first distribution $p_1$ and the last distribution $p_m$, with some $O(\log nq)$ or $O(\sqrt{q})$ factor blowup and some small additive terms.

\begin{lemma}
    \label{lem:uni-coupler-on-martingale}
    Consider any $m,q>0$ and randomly constructed distributions $\mu_1,\cdots, \mu_m\in \Delta_q$.
    Suppose for any $i\in[m], x\in [q]$, $\mu_i(x)$ is a random variable such that $\sum_{x\in [q]} \mu_i(x)=1$ for any $i\in [m]$. If for each $x\in [q]$, $\braces{\mu_i(x)}_{i\in [m]}$ forms a martingale, then for any $n>0$, $\sum_{x\in [q]} \E{\max_{i\in [m]} \mu_i(x) - \min_{i\in [m]} \mu_i(x)}$ is at most
    \Tag{
    \[
         \min\set*{O(\log nq) \cdot \E{\tv{\mu_1,\mu_m}} + \frac{1}{n}, O(\sqrt{q})\cdot  \E*{\sum_{x\in [q]}(\mu_m(x)-\mu_1(x))^2}^{1/2}}
    \]
    }%
    \Tag<sigconf>{
    \begin{multline*}
         \min\Biggl\{O(\log nq) \cdot \E{\tv{\mu_1,\mu_m}} + \frac{1}{n},\\ O(\sqrt{q})\cdot  \E*{\sum_{x\in [q]}(\mu_m(x)-\mu_1(x))^2}^{1/2}\Biggr\}
    \end{multline*}
    }
\end{lemma}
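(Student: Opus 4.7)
The plan is to prove each of the two bounds inside the minimum separately, in both cases using classical martingale maximal inequalities applied to the centered martingale $N_i(x) := \mu_i(x) - \mu_1(x)$. Since $\mu_1(x)$ is itself among the $\mu_i(x)$, we have $\min_i \mu_i(x) \le \mu_1(x) \le \max_i \mu_i(x)$, and so the range splits as
\[ \max_{i} \mu_i(x) - \min_{i} \mu_i(x) = \bigl(\max_{i} N_i(x)\bigr) + \bigl(-\min_{i} N_i(x)\bigr), \]
with both summands nonnegative running maxima of martingales starting at $0$ (namely $N_i(x)$ and $-N_i(x)$).

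For the $O(\sqrt{q})$ bound I would apply Doob's $L^2$ maximal inequality to the nonnegative submartingale $|N_i(x)|$ to get $\mathbb{E}[\max_i |N_i(x)|^2] \le 4\,\mathbb{E}[N_m(x)^2]$. Combining with $\max_i \mu_i(x) - \min_i \mu_i(x) \le 2\max_i |N_i(x)|$ and Jensen's inequality gives $\mathbb{E}[\max_i \mu_i(x) - \min_i \mu_i(x)] \le 4\sqrt{\mathbb{E}[(\mu_m(x)-\mu_1(x))^2]}$. Summing over $x\in[q]$ and applying Cauchy--Schwarz yields
\[ \sum_{x} \mathbb{E}\bigl[\max_i \mu_i(x) - \min_i \mu_i(x)\bigr] \le 4\sqrt{q}\cdot \sqrt{\mathbb{E}\Bigl[\sum_x (\mu_m(x)-\mu_1(x))^2\Bigr]}. \]

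For the $O(\log nq)$ bound I would instead use Doob's weak-type $L^1$ maximal inequality: since $N_i(x)^+$ and $(-N_i(x))^+$ are nonnegative submartingales, for any $\lambda > 0$,
\[ \mathbb{P}\bigl[\max_i N_i(x) \ge \lambda\bigr] \le \mathbb{E}[N_m(x)^+]/\lambda, \qquad \mathbb{P}\bigl[-\min_i N_i(x) \ge \lambda\bigr] \le \mathbb{E}[N_m(x)^-]/\lambda. \]
Integrating via the layer-cake formula with a truncation level $\epsilon = 1/(2nq)$, and noting that both running maxima are bounded above by $1$,
\[ \mathbb{E}\bigl[\max_i N_i(x)\bigr] \le \epsilon + \int_\epsilon^1 \frac{\mathbb{E}[N_m(x)^+]}{\lambda}\,d\lambda = \epsilon + \log(1/\epsilon)\cdot \mathbb{E}[N_m(x)^+], \]
and similarly for $-\min_i N_i(x)$. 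Summing over $x\in[q]$ and using $\sum_x \mathbb{E}[|\mu_m(x)-\mu_1(x)|] = 2\,\mathbb{E}[d_{\mathrm{TV}}(\mu_1,\mu_m)]$ yields
\[ \sum_x \mathbb{E}\bigl[\max_i \mu_i(x)-\min_i \mu_i(x)\bigr] \le 2q\epsilon + 2\log(1/\epsilon)\cdot \mathbb{E}[d_{\mathrm{TV}}(\mu_1,\mu_m)], \]
which with $\epsilon = 1/(2nq)$ becomes $1/n + O(\log nq)\cdot \mathbb{E}[d_{\mathrm{TV}}(\mu_1,\mu_m)]$, as desired.

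The proof is essentially bookkeeping on top of Doob's maximal inequalities; the only genuinely delicate step is the choice of truncation parameter $\epsilon$ in the $L^1$ bound, which must be fine enough ($\approx 1/(qn)$) so that the $q$ truncation costs sum to the additive $1/n$ slack, yet coarse enough that $\log(1/\epsilon)$ remains $O(\log nq)$. A minor wrinkle is that the lemma's martingale is indexed from $i=1$ starting at $\mu_1(x)$ rather than at $0$, which is handled transparently by passing to the centered martingale $N_i$.
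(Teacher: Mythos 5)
Your proposal is correct and follows essentially the same route as the paper: decompose the range into deviations from $\mu_1$, control the $\log(nq)$ bound via Doob's weak-type $L^1$ maximal inequality with a layer-cake integration truncated at roughly $1/(nq)$ (the paper packages this as its corollary to Doob's inequality with $N=nq$), and control the $\sqrt{q}$ bound via Doob's $L^2$ maximal inequality combined with Cauchy--Schwarz. The only differences --- splitting into positive and negative parts rather than taking absolute values, and applying Cauchy--Schwarz over $x$ outside the expectation instead of inside --- are cosmetic.
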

The proof of this lemma follows Doob's maximal inequality on $\ell_1$ and $\ell_2$ norms.
For any martinagle $Y_0,Y_1,\cdots, Y_m$, Doob's maximal inequality gives upper bound for the expected maximum differences between any $Y_i$ and $Y_0$ simply by the expected differences between $Y_m$ and $Y_0$.
\begin{lemma}[Doob's maximal inequality, {\cite[see, e.g., ][]{Rev13}}]
 \label{lem:doob-inequality}
    For any martingales $Y_0,Y_1,\cdots, Y_m$ and any $p\geq 1, C>0$, the complementary cumulative distribution function of $\max_{i\in [m]} \abs{Y_i-Y_0}$ satisfies
    \begin{align*}
        \P *{\max_{i\in [m]}{\abs{Y_i-Y_0}} \geq C} \leq
        \frac{\E*{\abs{Y_m-Y_0}^p}}{C^p},
    \end{align*}
    and for any $p>1$,
    \begin{align*}
        \E *{\max_{i\in [m]}{\abs{Y_i-Y_0}^p}} \leq
        \parens*{\frac{p}{p-1}}^p \cdot \E*{\abs{Y_m-Y_0}^p}.
    \end{align*}
\end{lemma}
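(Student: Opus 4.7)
My plan is to reduce both parts to the standard weak-type maximal inequality for non-negative submartingales, prove that baseline inequality via a first-hitting-time decomposition, and then bootstrap to the $L^p$ form using the layer-cake identity together with Hölder's inequality.

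First, since $Y_0,\dots,Y_m$ is a martingale, so is $Y_i - Y_0$, and therefore $Z_i := \abs{Y_i - Y_0}^p$ is a non-negative submartingale for every $p\geq 1$, by conditional Jensen applied to the convex map $t\mapsto \abs{t}^p$. Thus the first (tail) assertion reduces to showing that any non-negative submartingale $Z_0,\dots,Z_m$ satisfies $\P{\max_i Z_i \geq \lambda}\leq \E{Z_m}/\lambda$, applied with $\lambda = C^p$. To prove this, I would let $\tau = \min\set{i : Z_i \geq \lambda}$ with $\tau = m$ if no such $i$ exists, and decompose the event $\set{\max_i Z_i \geq \lambda}$ as the disjoint union of $A_j := \set{\tau = j,\, Z_j \geq \lambda}$ over $j\leq m$. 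Each $A_j$ is $\mathcal{F}_j$-measurable, so the submartingale property gives $\E{Z_m \mathbf{1}_{A_j}} \geq \E{Z_j \mathbf{1}_{A_j}} \geq \lambda\cdot \P{A_j}$; summing over $j$ yields $\lambda\cdot \P{\max_i Z_i \geq \lambda}\leq \E{Z_m \mathbf{1}[\max_i Z_i \geq \lambda]}\leq \E{Z_m}$, which is the key technical step.

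For the $L^p$ bound I would retain the stronger localized form $\lambda\cdot \P{X^* \geq \lambda} \leq \E{\abs{Y_m - Y_0}\cdot \mathbf{1}[X^* \geq \lambda]}$ obtained by applying the baseline inequality to the submartingale $\abs{Y_i - Y_0}$, where $X^* := \max_i \abs{Y_i - Y_0}$. Starting from the layer-cake identity $\E{(X^*)^p} = \int_0^\infty p\lambda^{p-1}\cdot \P{X^* \geq \lambda}\,d\lambda$, substituting the localized form, and swapping integral with expectation by Tonelli, I obtain $\E{(X^*)^p} \leq \frac{p}{p-1}\E{\abs{Y_m - Y_0}\cdot (X^*)^{p-1}}$. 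Hölder's inequality with conjugate exponents $p$ and $p/(p-1)$ then gives $\E{(X^*)^p} \leq \frac{p}{p-1}\E{\abs{Y_m - Y_0}^p}^{1/p}\cdot \E{(X^*)^p}^{(p-1)/p}$, and dividing through by $\E{(X^*)^p}^{(p-1)/p}$ produces the claimed constant $(p/(p-1))^p$.

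The main obstacle is this final rearrangement, which is only valid when $\E{(X^*)^p}$ is finite, so as to avoid dividing by infinity. I would handle this by truncating $X^*$ to $X^* \wedge K$, verifying that the argument above goes through verbatim with the truncated maximum on both sides of the Hölder step, and then sending $K\to\infty$ via monotone convergence. Because the martingale has finite length $m$ and is implicitly assumed to lie in $L^p$, the truncation is ultimately cosmetic, but it is the only point where a genuine measure-theoretic subtlety arises.
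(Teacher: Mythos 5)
The paper does not prove this lemma; it is quoted as a standard result with a citation to the literature. Your argument is the correct classical proof — the weak-type bound via the first-hitting-time decomposition applied to the nonnegative submartingale $\abs{Y_i-Y_0}^p$, then the layer-cake identity plus H\"older bootstrapped from the localized form $\lambda\cdot \P{X^*\geq\lambda}\leq \E{\abs{Y_m-Y_0}\cdot\mathbf{1}[X^*\geq\lambda]}$, with the truncation handling the division by a possibly infinite quantity — and matches the standard reference, so there is nothing to reconcile against the paper.
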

\begin{corollary}
    \label{cor:doob-inequality}
    For any martingales $Y_0,Y_1,\cdots, Y_m$ on support $[0,1]$ and any $N>0$,
    \begin{align*}
        \E *{\max_{i\in [m]}{\abs{Y_i-Y_0}}} \leq
        O(\log N) \cdot \E*{\abs{Y_m-Y_0}} + \frac{1}{N}.
    \end{align*}
\end{corollary}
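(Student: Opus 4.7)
The plan is to use the layer-cake (tail-integration) formula together with the $p=1$ case of Doob's maximal inequality from \cref{lem:doob-inequality}. Let $M = \max_{i\in [m]}\abs{Y_i - Y_0}$. Since every $Y_i \in [0,1]$, we have $M \in [0,1]$, so the tail integral
\[
    \E{M} = \int_0^\infty \P{M \geq t}\, dt = \int_0^1 \P{M \geq t}\, dt
\]
is finite and supported on $[0,1]$. The key step is to split this integral at the threshold $t = 1/N$.

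On the low range $t \in [0, 1/N]$ I would simply bound $\P{M \geq t} \leq 1$, which contributes at most $1/N$ to the integral. On the high range $t \in [1/N, 1]$ I would apply the first inequality of \cref{lem:doob-inequality} with $p = 1$ to obtain
\[
    \P{M \geq t} \leq \frac{\E{\abs{Y_m - Y_0}}}{t}.
\]
Integrating this bound from $1/N$ to $1$ gives $\E{\abs{Y_m - Y_0}} \cdot \log N$. Adding the two pieces yields
\[
    \E{M} \leq \E{\abs{Y_m - Y_0}} \cdot \log N + \frac{1}{N},
\]
which is the claimed inequality (the $O(\cdot)$ is in fact $\log N$ itself, so no constant is lost).

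There is no real obstacle here: the argument is a textbook combination of the layer-cake formula with the $p=1$ Markov-type form of Doob's inequality, and the uniform boundedness $M \leq 1$ is exactly what allows us to truncate at $1/N$ without any further tail estimate. The only mild subtlety is remembering to use the $p=1$ version, since the $L^p$ version with $p>1$ stated in \cref{lem:doob-inequality} would give a worse constant and would require a different (Hölder-type) splitting.
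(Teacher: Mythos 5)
Your proof is correct and is essentially identical to the paper's: both use the layer-cake formula for $\E{\max_i \abs{Y_i-Y_0}}$, truncate the tail integral at $1/N$ (bounding the low range trivially by $1/N$), and apply the $p=1$ case of \cref{lem:doob-inequality} on $[1/N,1]$ to collect the $\log N$ factor. Nothing is missing; your observation that the constant is exactly $\log N$ is consistent with the paper's $O(\log N)$ statement.
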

\begin{proof}
    Taking $p=1$ in \cref{lem:doob-inequality}, we have \Tag{$\P{\max_{i\in [m]}{\abs{Y_i-Y_0}}\geq C} \leq \E{\abs{Y_m-Y_0}}/C$}\Tag<sigconf>{\[\P{\max_{i\in [m]}{\abs{Y_i-Y_0}}\geq C} \leq \E{\abs{Y_m-Y_0}}/C\]} for any $C>0$. 
    Therefore, for any $N>0$,
    \begin{multline*}
        \E*{\max_{i\in [m]}{\abs{Y_i-Y_0}}} 
        = 
        \int_0^1 \P*{\max_{i\in [m]} \abs{Y_i-Y_0}\geq C} dC
        \leq\Tag<sigconf>{\\}
        \frac{1}{N}
        +
        \int_{1/N}^1 \P*{\max_{i\in [m]} \abs{Y_i-Y_0}\geq C} dC
        \leq\\
        \frac{1}{N}
        +
        \int_{1/N}^1 \frac{\E*{\abs{Y_m-Y_0}}}{C}  dC
        =\Tag<sigconf>{\\}
        O(\log N) \cdot \E*{\abs{Y_m-Y_0}} + \frac{1}{N}.
        \qedhere
    \end{multline*}
\end{proof}

\begin{proof}[Proof of \cref{lem:uni-coupler-on-martingale}]
    Using the linearity of expectation, we get
    \begin{multline}\label{eqn:err-of-uni-couple}
        \sum_{x\in [q]} \E*{\max_{i\in [m]}{\mu_i(x)} - 
        \min_{i\in [m]}{\mu_i(x)} 
        }
        =\Tag<sigconf>{\\}
        \sum_{x\in [q]} \E*{\max_{i\in [m]}{\mu_i(x)} -\mu_1(x)}
        + 
        \E*{
        \mu_1(x) - \min_{i\in [m]}{\mu_i(x)}}
        \leq\\
        2
        \sum_{x\in [q]} \E*{\max_{i\in [m]}{\abs{\mu_i(x)-\mu_1(x)}}}
    \end{multline}
    Using \cref{cor:doob-inequality}, we can obtain the first upper bound\Tag{ for $\sum_{x\in [q]} \E{\max_{i\in [m]} \mu_i(x) - \min_{i\in [m]} \mu_i(x)}$}:
    \Tag{
    \[
        \cref{eqn:err-of-uni-couple} \leq \sum_{x\in [q]}
        O(\log nq) \cdot \E*{\abs{\mu_m(x)-\mu_1(x)}} + \frac{1}{nq}
        =\\
        O(\log nq) \cdot \E*{\tv{\mu_1(x),\mu_m(x)}} + 
        \frac{1}{n}.
    \]
    }%
    \Tag<sigconf>{
    \begin{multline*}
        \cref{eqn:err-of-uni-couple} \leq \sum_{x\in [q]}
        O(\log nq) \cdot \E*{\abs{\mu_m(x)-\mu_1(x)}} + \frac{1}{nq}
        =\\
        O(\log nq) \cdot \E*{\tv{\mu_1(x),\mu_m(x)}} + 
        \frac{1}{n}.
    \end{multline*}}%
    Using Cauchy-Schwarz inequality and Doob's maximal inequality for $\ell_2$ norm, we can obtain the second upper bound\Tag{ for $\sum_{x\in [q]} \E{\max_{i\in [m]} \mu_i(x) - \min_{i\in [m]} \mu_i(x)}$}:
    \begin{multline*}
        \cref{eqn:err-of-uni-couple}
        =
        2 
         \E*{ \sum_{x\in [q]} \max_{i\in[m]} (\mu_i(x)-\mu_1(x))}
        \leq\Tag<sigconf>{\\}
        2
         \E*{ \parens*{\sum_{x\in [q]} \max_{i\in[m]} (\mu_i(x)-\mu_1(x))}^2 }^{1/2}
        \leq\\
        2\E*{ q\sum_{x\in [q]} \max_{i\in[m]} (\mu_i(x)-\mu_1(x))^2 }^{1/2}
        \leq\Tag<sigconf>{\\}
        O(\sqrt{q}) \cdot 
        \E*{ \sum_{x\in [q]} (\mu_m(x)-\mu_1(x))^2 }^{1/2}. \qedhere
    \end{multline*}
\end{proof}
    \section{Sublinear Parallel Sampling via Counting Oracles}\label{sec:alg}

In this section, we show our main \cref{thm:main} that we can (approximately) sample from a distribution after a sublinear number of rounds (in terms of the number of variables) of querying a polynomial number of the distribution's counting oracles.

\subsection{Algorithm}

We present our algorithm in \cref{alg:sample-on-hypergrid2}.
Let $u_1,\dots,u_n$ be the random seeds of the algorithm. 
The algorithm also shuffles the coordinates with a uniformly random permutation that we ignore in this description for simplicity. 

As a subroutine, we use a universal coupler \textsc{UniversalCoupler} that is robust, see \cref{def:robust-universal-coupler}. For instance, this can be either the \textsc{MinCoupler} (\cref{alg:uni-coupling}), or the \textsc{GumbelTrick} (\cref{alg:uni-coupling-Gumbel}).

We let $x \in [q]^n$ denote a sample from the target distribution generated using the naive sequential algorithm that iteratively samples the $i$-th entry conditioning on all previous entries: 
$$x_{i} \gets \textsc{UniversalCoupler}\parens*{X_{i} \given \braces*{X_{j}=x_{j}}_{j\in [i-1]}, u_i}.$$

The goal of the algorithm is to sample faster than one coordinate per iteration. The algorithm maintains an index $a$ where the $a$-th and earlier entries are all correctly sampled. At the $t$-th iteration, the algorithm attempts to resample all entries after $a$ by conditioning on the $a$-th and earlier entries: 
$$x^{t}_{i} \gets \textsc{UniversalCoupler}\parens*{X_{i} \given \braces*{X_{j}=x^{t-1}_{j}}_{j\in [a]}, u}.$$

Then, the algorithm uses $x^{t}$ to find the earliest entry $a'$ where sampling conditioning on $x^{t}_{j \in [a'-1]}$ differs from sampling conditioning on $x^{t-1}_{j \in [a]}$ and immediately fixes the entry $a'$: then $a'$ is a new index where the $a'$-th and earlier entries are all correctly sampled.

\begin{Algorithm}
\caption{Parallel sampling on product spaces}
\label{alg:sample-on-hypergrid2}
\KwIn{Counting oracle $\mu$, robust universal coupler \textsc{UniversalCoupler} on $[q]$}
\KwOut{A sample in $[q]^n$}
Sample a permutation $\sigma\gets \operatorname{uniform}(\mathcal{S}_n)$ \;
Sample i.i.d.\ random sources $u_1,u_2,\cdots, u_n\gets \operatorname{uniform}([0,1])$ \;
Initialize $a\gets 0, t\gets 0, x^0\gets\operatorname{null}$ \;
\While{\bf{true}}{
    $t\gets t+1$ \;
    \For{$i\in [n]$ \bf{in parallel}}{
        $y^t_{\sigma(i)} \gets \textsc{UniversalCoupler}\parens*{X_{\sigma(i)} \given \braces*{X_{\sigma(j)}=x^{t-1}_{\sigma(j)}}_{j\in [a]}, u_i}$ \;
    }
    \For{$i\in [n]$ \bf{in parallel}}{
        $x^{t}_{\sigma(i)} \gets \textsc{UniversalCoupler}\parens*{X_{\sigma(i)} \given \braces*{X_{\sigma(j)}=y^{t}_{\sigma(j)}}_{j\in [i-1]}, u_i}$ \;
    }
    \If{$x^t=y^t$}{
        \KwRet{$x^t$}
    }
    $a\gets \min \braces *{i\in [n]\given  y^t_{\sigma(i)} \neq x^t_{\sigma(i)}}$ \;
    \If{$a=n$}{
        \KwRet{$x^t$}
    }
}
\end{Algorithm}

\subsection{Correctness}

We consider a function $\tilde{x}:\mathcal{S}_n\times [0,1]^n \to [q]^n$ defined iteratively as follows:
\begin{align}
    \label{eqn:output-of-our-algo}
    \tilde{x}_i(\sigma,u) 
    = 
    \textsc{UniversalCoupler} \parens *{
        X_{\sigma(i)}
            \given
        \braces*{
            X_{\sigma(j)} = \tilde{x}_j(\sigma,u)
        }_{
            j\in [i-1]
        }
        ,
        u_i
    }.
\end{align}

For each $i\in [n]$, because \textsc{UniversalCoupler} is a universal coupler, $\tilde{x}_i(\sigma,u)$ follows the marginal distribution of $X_{\sigma(i)}$ conditioning on $X_{\sigma(1)}=\tilde{x}_1(\sigma,u), X_{\sigma(2)}=\tilde{x}_2(\sigma,u), \cdots, X_{\sigma(i-1)}=\tilde{x}_{i-1}(\sigma,u)$ (considering only the randomness of $u_i$).
Therefore, this function can serve as an objective output of the algorithm when we have fixed the randomness $\sigma$ and $u$.
\begin{lemma}
\label{lem:obj-output}
    If \cref{alg:sample-on-hypergrid2} always outputs $x^t_{\sigma(i)}=\tilde{x}_i(\sigma,u)$, it samples perfectly from the distribution of $X$.
\end{lemma}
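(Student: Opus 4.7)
The plan is to condition on the random permutation $\sigma$ (which is independent of everything else, and the joint law of $X$ is invariant under reordering of coordinates anyway) and then prove by induction on $i$ that the prefix $(\tilde{x}_1(\sigma,u),\dots,\tilde{x}_i(\sigma,u))$ is distributed identically to $(X_{\sigma(1)},\dots,X_{\sigma(i)})$. Taking $i=n$ gives that $\tilde{x}(\sigma,u)$ has the same joint distribution as $X$ up to the reordering induced by $\sigma$, which is what we want. Since the algorithm, by hypothesis, always outputs $x^t_{\sigma(i)}=\tilde{x}_i(\sigma,u)$, its output therefore has the correct distribution.

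For the inductive step, fix a realization of $u_1,\dots,u_{i-1}$ (so $\tilde{x}_1,\dots,\tilde{x}_{i-1}$ are determined), and note that $u_i$ is independent of these by construction. The defining equation \eqref{eqn:output-of-our-algo} produces $\tilde{x}_i$ by feeding the random source $u_i$ into \textsc{UniversalCoupler} together with the conditional distribution of $X_{\sigma(i)}$ given $\{X_{\sigma(j)}=\tilde{x}_j\}_{j<i}$. The defining property of a universal coupler (\cref{def:robust-universal-coupler}) then gives that, over the randomness of $u_i$, the output $\tilde{x}_i$ exactly follows that conditional marginal. Combining with the inductive hypothesis and the chain rule, the extended prefix has the correct joint law.

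The argument is essentially bookkeeping: the only substantive ingredient is the correctness property of \textsc{UniversalCoupler}, and the main thing to be careful about is that the random sources $u_1,\dots,u_n$ are i.i.d.\ and in particular $u_i$ remains uniform and independent after conditioning on $u_1,\dots,u_{i-1}$. There is no real obstacle here; the content of the lemma is simply that the sequential process $\tilde{x}$ realizes the standard autoregressive factorization $\mu(X)=\prod_i \mu(X_{\sigma(i)}\mid X_{\sigma(<i)})$ under a uniformly random coordinate order, and the algorithm's output agrees with $\tilde{x}$ coordinatewise by assumption.
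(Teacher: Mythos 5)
Your argument is correct and matches the paper's (much terser) justification: each $\tilde{x}_i$ is produced by feeding fresh uniform randomness $u_i$ into the coupler together with the exact conditional marginal of $X_{\sigma(i)}$ given the already-realized prefix, so by the chain rule the full vector has law $\mu$, and the hypothesis transfers this to the algorithm's output. One nitpick: the property you need is the basic universal-coupling guarantee $\P_{r}[f(\mu,r)=x]=\mu(x)$ from the definition of a universal coupler, not the robustness condition of \cref{def:robust-universal-coupler}, which plays no role here.
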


Let $a^t$ be the value of $a$ at the end of round $t$ (if the algorithm does not terminate with $x^t=y^t$ before updating the value of $a$ in round $t$).
For simplicity, we suppose $a^0=0$.
Next, we show that the vector $x^t$ produced by the algorithm in each round matches this objective vector in the first $a^t$ entries.
\begin{lemma}
    \label{lem:fixed-after-a}
    For any $\sigma, u$, after each round $t$ of \cref{alg:sample-on-hypergrid2},
    \begin{align*}
        \forall i\in [a^t], \quad 
        x^t_{\sigma(i)} 
            = 
        \tilde{x}_i (\sigma, u).
    \end{align*}
\end{lemma}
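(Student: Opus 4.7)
The plan is to establish the identity by a single induction on the within-round position $i$, for an arbitrary but fixed round $t$; no outer induction on $t$ is required. The function $\tilde{x}$ depends only on $(\sigma, u)$, and the values $y^t$, $x^t$, $a^t$ produced in round $t$ are determined locally from $x^{t-1}$ inside that round by the algorithm, so the claim $x^t_{\sigma(i)} = \tilde{x}_i(\sigma, u)$ for $i \in [a^t]$ reduces to a deterministic statement about how the two sequential loops in round $t$ interact with the definition \cref{eqn:output-of-our-algo} of $\tilde{x}$.

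For the base case $i = 1$, assuming $a^t \geq 1$ (which must hold whenever $a^t$ is defined, since the minimum is over $[n]$), the second parallel loop of \cref{alg:sample-on-hypergrid2} sets $x^t_{\sigma(1)} = \textsc{UniversalCoupler}\parens*{X_{\sigma(1)} \given \emptyset, u_1}$, and \cref{eqn:output-of-our-algo} defines $\tilde{x}_1(\sigma, u)$ by the same expression, so the two agree. For the inductive step, fix $1 < i \leq a^t$ and suppose $x^t_{\sigma(j)} = \tilde{x}_j(\sigma, u)$ for every $j < i$. The key observation is that, by the definition of $a^t$ as the \emph{smallest} index where $y^t$ and $x^t$ differ, we have $y^t_{\sigma(j)} = x^t_{\sigma(j)}$ for every $j < a^t$, and in particular for every $j < i$. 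Combined with the inductive hypothesis, this gives $y^t_{\sigma(j)} = \tilde{x}_j(\sigma, u)$ for all $j < i$. Substituting into the algorithm's definition of $x^t_{\sigma(i)}$ yields
\begin{align*}
    x^t_{\sigma(i)}
    &= \textsc{UniversalCoupler}\parens*{X_{\sigma(i)} \given \braces*{X_{\sigma(j)} = y^t_{\sigma(j)}}_{j \in [i-1]}, u_i} \\
    &= \textsc{UniversalCoupler}\parens*{X_{\sigma(i)} \given \braces*{X_{\sigma(j)} = \tilde{x}_j(\sigma, u)}_{j \in [i-1]}, u_i} \\
    &= \tilde{x}_i(\sigma, u),
\end{align*}
where the last equality is exactly \cref{eqn:output-of-our-algo}.

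There is essentially no obstacle: the proof is purely syntactic once the correct bookkeeping is set up. The only point that requires any thought is recognizing that the "guess" $y^t$ and the "verification" $x^t$ coincide on the strict prefix before $a^t$, which is precisely what allows the inductive hypothesis on $x^t$ to be translated into a statement about the conditioning used to define the next entry $x^t_{\sigma(i)}$. No probabilistic argument, and no information from earlier rounds, is needed.
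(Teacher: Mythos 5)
Your proof is correct and takes essentially the same route as the paper: using the minimality of $a^t$ to conclude $y^t$ and $x^t$ agree on the prefix before $a^t$, and then observing that the second-loop update therefore satisfies the same recursion as \cref{eqn:output-of-our-algo}. You merely make explicit the induction on $i$ that the paper leaves implicit in the phrase ``this recursion matches the recursion used in the definition of $\tilde{x}_i(\sigma,u)$.''
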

\begin{proof}
    According to the definition of $a^t$, we have $\forall i\in [a^t-1]$, $x^t_{\sigma(i)}=y^t_{\sigma(i)}$. 
    Therefore, according to the definition of $x^t$, we have for any $i\in [a^t]$, 
    \begin{align}
        \label{eqn:same-recrusion-for-xt}
        x^t_{\sigma(i)} = \textsc{UniversalCoupler}\parens*{X_{\sigma(i)} \given \braces*{X_{\sigma(j)}=x^t_{\sigma(j)}}_{j\in [i-1]},~u_i}.
    \end{align}
    Note that this recursion matches the recursion \cref{eqn:output-of-our-algo} used in the definition of $\tilde{x}_i(\sigma, u)$ for any $i\in [a^t]$.
    Therefore, $\forall i\in [a^t], x^t_{\sigma(i)} = \tilde{x}_i(\sigma,u)$.
\end{proof}

To show that \cref{alg:sample-on-hypergrid2} is making progress every iteration, we prove $a^t$ is (strictly) monotone in terms of $t$.

\begin{lemma}
\label{lem:at-increasing}
    $a^t$ is strictly increasing with $t$.
\end{lemma}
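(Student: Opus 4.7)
The plan is to show that whenever iteration $t$ does not terminate early, the vectors $y^t$ and $x^t$ must agree on the first $a^{t-1}$ entries (in the $\sigma$-ordering). Consequently, the smallest coordinate of disagreement is strictly larger than $a^{t-1}$, so $a^t \geq a^{t-1}+1$.

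First I would observe that $y^t_{\sigma(i)}$ is defined by conditioning $X_{\sigma(i)}$ on the event $\{X_{\sigma(j)} = x^{t-1}_{\sigma(j)}\}_{j\in[a^{t-1}]}$. When $i \leq a^{t-1}$, this conditioning set already pins $X_{\sigma(i)}$ to the value $x^{t-1}_{\sigma(i)}$, so the resulting conditional marginal is a point mass; any universal coupler therefore returns $x^{t-1}_{\sigma(i)}$ deterministically, regardless of $u_i$. Combining this with \cref{lem:fixed-after-a} applied to round $t-1$ gives $y^t_{\sigma(i)} = x^{t-1}_{\sigma(i)} = \tilde{x}_i(\sigma,u)$ for every $i \in [a^{t-1}]$.

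Next I would show, by induction on $i \in [a^{t-1}]$, that $x^t_{\sigma(i)} = \tilde{x}_i(\sigma,u)$. The definition of $x^t_{\sigma(i)}$ conditions on $\{X_{\sigma(j)} = y^t_{\sigma(j)}\}_{j\in[i-1]}$, and by the previous step $y^t_{\sigma(j)} = \tilde{x}_j(\sigma,u)$ for all $j < i \leq a^{t-1}$. Thus the universal coupler call defining $x^t_{\sigma(i)}$ is syntactically identical to the recursion \cref{eqn:output-of-our-algo} defining $\tilde{x}_i(\sigma,u)$, and uses the same random seed $u_i$, so the two outputs coincide. Hence $x^t_{\sigma(i)} = \tilde{x}_i(\sigma,u) = y^t_{\sigma(i)}$ on $[a^{t-1}]$, and since $a^t$ is defined as the minimum index at which $x^t$ and $y^t$ disagree (the algorithm terminates otherwise), the disagreement must lie strictly beyond $a^{t-1}$.

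I do not expect a real obstacle here; the argument is essentially bookkeeping on two structural facts: (i) conditioning on a variable's own value degenerates the coupler to a constant, which forces $y^t$ to match $x^{t-1}$ on $[a^{t-1}]$; and (ii) once the $y^t$ prefix coincides with the $\tilde{x}$ prefix, the sequential recursion defining $x^t$ is literally the recursion defining $\tilde{x}$, so the two agree. The only point requiring a bit of care is making sure that the same random seeds $u_i$ are used in both the $y^t$ and the $\tilde{x}$ recursions, which is immediate from inspection of \cref{alg:sample-on-hypergrid2}.
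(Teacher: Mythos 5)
Your proof is correct and follows essentially the same route as the paper: show $y^t$ agrees with $x^{t-1}=\tilde{x}$ on the first $a^{t-1}$ coordinates (point-mass conditioning plus \cref{lem:fixed-after-a}), then observe that the recursion defining $x^t$ on that prefix coincides with \cref{eqn:output-of-our-algo}, so the first disagreement lies beyond $a^{t-1}$. The only (immaterial) difference is that the paper pushes the argument one index further, to $i=a^{t-1}+1$, obtaining $a^t>a^{t-1}+1$, whereas your bound $a^t\geq a^{t-1}+1$ already suffices for strict monotonicity.
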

\begin{proof}
    According to the definition of $y^t$, for any $i\in [a^{t-1}]$, $y^t_{\sigma(i)}=x^{t-1}_{\sigma(i)}=\tilde{x}_{i}(\sigma,u)$.
    Therefore, according to the definition of $x^t$, for any $i\in [a^{t-1}+1]$, 
    \begin{align*}
        x^t_{\sigma(i)} 
        &= 
        \textsc{UniversalCoupler} \parens*{X_{\sigma(i)}\given \braces*{X_{\sigma(j)}=x^{t-1}_{\sigma(j)}}_{j\in [i-1]}}
        \\
        &=
        \textsc{UniversalCoupler} \parens*{X_{\sigma(i)}\given \braces*{X_{\sigma(j)}=\tilde{x}_j(\sigma, u)}_{j\in [i-1]}}
        \tag{definition of $a^{t-1}$}
        \\
        &=
        \tilde{x}_i(\sigma, u)
        =
        y^t_{\sigma(i)}.
        \tag{definition of $\tilde{x}_i(\sigma, u)$}
    \end{align*}
    By the definition of $a^t$, if $x^t\neq y^t$, we get $a^t>a^{t-1}+1$.
\end{proof}

Note that the algorithm terminates when either of the following two conditions is satisfied in some round $t$: $a^t=n$ or $x^t=y^t$.
Due to \cref{lem:at-increasing}, the algorithm always terminates. 
If it terminates because of the first condition $a^t=n$, due to \cref{lem:fixed-after-a}, the output of the algorithm matches the objective in \cref{lem:obj-output}.
Otherwise, because of the definition of $x^t$ and the fact that $x^t=y^t$, the output satisfies \cref{eqn:same-recrusion-for-xt}, which matches the recursion \cref{eqn:output-of-our-algo} used in the definition of $\tilde{x}_i(\sigma,u)$, and thus matches the objective in \cref{lem:obj-output}.
As a conclusion, we obtain the correctness of our algorithm.
\begin{lemma}
\cref{alg:sample-on-hypergrid2} returns a sample $x\sim \mu$.
\end{lemma}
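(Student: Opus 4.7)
The plan is to assemble the three preceding lemmas---\cref{lem:obj-output}, \cref{lem:fixed-after-a}, and \cref{lem:at-increasing}---into a short correctness argument. The conceptual backbone is the function $\tilde{x}(\sigma, u)$ defined in \cref{eqn:output-of-our-algo}: it is exactly the output of the sequential autoregressive sampler (in the order given by $\sigma$) driven by the random tape $u$ through the universal coupler. Since each application of \textsc{UniversalCoupler} is, marginally, a correct sample from the conditional distribution, $\tilde{x}(\sigma, u)$ is distributed as $\mu$ (this is what \cref{lem:obj-output} records). So it suffices to show that the algorithm's output always coincides with $\tilde{x}(\sigma, u)$ for the same $\sigma, u$.

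Next I would argue that the algorithm terminates. By \cref{lem:at-increasing}, $a^t$ strictly increases each iteration it passes the $a$-update line, and it is bounded by $n$; hence either the $x^t = y^t$ check triggers termination or $a^t$ reaches $n$ after at most $n$ rounds.

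For the final step, I would case on the two termination conditions. If the algorithm returns via $a^t = n$, then \cref{lem:fixed-after-a} directly gives $x^t_{\sigma(i)} = \tilde{x}_i(\sigma, u)$ for every $i \in [n]$, matching \cref{lem:obj-output}'s objective. If instead the algorithm returns because $x^t = y^t$, then in particular $x^t_{\sigma(j)} = y^t_{\sigma(j)}$ for all $j$, so the defining equation for $x^t$ becomes
\[
x^t_{\sigma(i)} = \textsc{UniversalCoupler}\bigl(X_{\sigma(i)} \given \{X_{\sigma(j)} = x^t_{\sigma(j)}\}_{j \in [i-1]}, u_i\bigr),
\]
which is precisely \cref{eqn:output-of-our-algo}. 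By induction on $i$, this forces $x^t_{\sigma(i)} = \tilde{x}_i(\sigma, u)$, and we again invoke \cref{lem:obj-output}.

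There is no real obstacle here; all the substantive work was already done in the previous three lemmas. The only subtlety worth being careful about is making sure the two termination branches are handled symmetrically and that the inductive step in the $x^t = y^t$ branch is justified by the consistency of $x^t$ with the recursion defining $\tilde{x}$, rather than by any separate invariant on $y^t$.
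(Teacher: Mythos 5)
Your proposal is correct and follows essentially the same route as the paper: termination via \cref{lem:at-increasing}, then a case split on the two stopping conditions, using \cref{lem:fixed-after-a} when $a^t=n$ and the recursion-matching argument (the paper's \cref{eqn:same-recrusion-for-xt} versus \cref{eqn:output-of-our-algo}) when $x^t=y^t$, concluding with \cref{lem:obj-output}. The induction on $i$ you make explicit in the $x^t=y^t$ branch is exactly the step the paper leaves implicit when it says the recursions match.
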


\subsection{Round Complexity}

We establish our sublinear round complexity via two steps. 
First, we ignore the randomness of $\sigma$ and $u$, and establish a worst-case round complexity, which can be linear with some choices of $\sigma,u$.
Second, we show that the expectation of this round complexity is actually $\widetilde{O}(n^{2/3}\log q)$ with the random choices of $\sigma, u$.
This bound is established by the robustness of the universal coupler on randomly constructed distributions that satisfy the martingale property (\cref{def:robust-universal-coupler} and \cref{lem:uni-coupler-on-martingale}) and a pinning lemma (\cref{cor:pinning-on-sqr-prob-diff}).

To use this algorithm to nontrivially speed up sampling of \emph{planar perfect matchings}, we also need a tail bound for the round complexity. 
Because of Markov's inequality, the expected round complexity implies a simple tail bound -- the round complexity is less than $c\cdot n^{2/3}\log q$ with probability $\widetilde{\Omega}(1/c)$.
At the end of this subsection, we boost this tail bound to $2^{-\widetilde{\Omega}(c)}$ via the simple observation that running several rounds of the algorithm is equivalent to reinitiating the algorithm with a smaller instance. 

\paragraph{Worst-case round complexity.}
First, we consider the randomness $\sigma, u$ used in the algorithm as part of the input, and give an upper bound for the round complexity.
For each $\sigma\in \mathcal{S}_n, u\in [0,1]^n$ and $i\in [n]$, let $\overline{a}_i(\sigma,u)$ be the maximum $a$ such that \cref{alg:sample-on-hypergrid2} will not correctly sample $X_{\sigma(i)}$ to $\tilde{x}_i(\sigma,u)$, the value of $X_{\sigma(i)}$ in the final output, under the correct conditioning of $X_{\sigma(1)},\dots, X_{\sigma(a)}$.
Formally, $\tilde{a}_i(\sigma, u)$ is defined as follows:
\Tag{\[
    \overline{a}_i(\sigma,u)
    =
    \max
    \braces *{
        a\geq 0 \given
        \tilde{x}_i(\sigma, u) 
        \neq 
        \textsc{UniversalCoupler} \parens*{X_{\sigma(i)} \given \braces*{X_{\sigma(j)}=\tilde{x}_j(\sigma, u)}_{j \in [a]}, u_i}
    },
\]}%
\Tag<sigconf>{
\begin{multline*}
    \overline{a}_i(\sigma,u)
    =
    \max
    \Bigl\{
        a\geq 0 \;\Big|\;
        \tilde{x}_i(\sigma, u) 
        \neq \\
        \textsc{UniversalCoupler} \parens*{X_{\sigma(i)} \given \braces*{X_{\sigma(j)}=\tilde{x}_j(\sigma, u)}_{j \in [a]}, u_i}
    \Bigr\},
\end{multline*}
}%
where we define the maximum of an empty set to be $0$ for simplicity. 
Using this definition, we can establish a worst-case round complexity of \cref{alg:sample-on-hypergrid2}.

\begin{lemma}
    \label{lem:det-upper-bound-for-rounds}
    For any integer $\theta\geq 1$ and randomness $\sigma\in \mathcal{S}_n, u\in [0,1]^n$, the round complexity of \cref{alg:sample-on-hypergrid2} is at most 
    \[
        \abs *{
            \braces *{
                i\in [n]
                \given
                \overline{a}_i(\sigma,u) 
                \geq  
                i-\theta
            }
        }
        +
        1
        +
        \frac{n}{\theta}.
    \]
 \end{lemma}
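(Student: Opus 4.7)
The plan is to classify each round of \cref{alg:sample-on-hypergrid2} into one of three types --- a possible ``terminating'' round in which $x^t=y^t$, a ``fast'' round in which $a^t-a^{t-1}\geq \theta$, and a ``slow'' round in which $x^t\neq y^t$ but $a^t-a^{t-1}<\theta$ --- and bound each class separately. There is at most one terminating round, since the algorithm returns immediately when $x^t=y^t$; this contributes the $+1$ in the claimed bound. The number of fast rounds is at most $n/\theta$, because $a^t$ is always in $[0,n]$ and nondecreasing across non-terminating rounds (\cref{lem:at-increasing}), and each fast round advances it by at least $\theta$. Thus the remaining task is to bound the number of slow rounds by $\abs{\set*{i\in[n]\given \overline{a}_i(\sigma,u)\geq i-\theta}}$.

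To control the slow rounds, I will show that $i := a^t$ lies in $B := \set*{i\in[n]\given \overline{a}_i(\sigma,u)\geq i-\theta}$ for every slow round $t$. Since \cref{lem:at-increasing} makes $a^t$ strictly increasing across non-terminating rounds, the map $t\mapsto a^t$ is injective on slow rounds, so this is enough. Fix a slow round $t$ and let $i=a^t$. By the definition of $a^t$ we have $y^t_{\sigma(i)}\neq x^t_{\sigma(i)}$, and \cref{lem:fixed-after-a} gives both $x^t_{\sigma(i)}=\tilde{x}_i(\sigma,u)$ (since $i\in [a^t]$) and $x^{t-1}_{\sigma(j)}=\tilde{x}_j(\sigma,u)$ for all $j\in [a^{t-1}]$. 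Substituting the latter identities into the line of \cref{alg:sample-on-hypergrid2} that defines $y^t_{\sigma(i)}$, I get
\[
    y^t_{\sigma(i)} = \textsc{UniversalCoupler}\parens*{X_{\sigma(i)}\given \set*{X_{\sigma(j)}=\tilde{x}_j(\sigma,u)}_{j\in[a^{t-1}]},\, u_i},
\]
and since this disagrees with $\tilde{x}_i(\sigma,u)$, the definition of $\overline{a}_i(\sigma,u)$ forces $\overline{a}_i(\sigma,u)\geq a^{t-1}$. The slowness hypothesis $a^t-a^{t-1}<\theta$ rearranges (using integrality) to $a^{t-1}\geq i-\theta+1$, so $\overline{a}_i(\sigma,u)\geq i-\theta$, placing $i$ in $B$.

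Summing the three bounds yields the desired inequality. The argument is essentially a careful bookkeeping exercise built on the two preceding structural lemmas, so I do not anticipate a real obstacle. The one conceptual point worth highlighting is the clean correspondence between ``round $t$ failed to push $a$ forward by $\theta$'' and ``$\overline{a}_{a^t}(\sigma,u)$ is at least $a^t-\theta$''; this is exactly what makes slow rounds inject into $B$, and it falls out directly from \cref{lem:fixed-after-a} once the algorithm's two universal-coupler calls are tracked against the reference sequence $\tilde{x}$.
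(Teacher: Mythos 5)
Your proposal is correct and follows essentially the same route as the paper's proof: decompose rounds into the at most one terminating round, at most $n/\theta$ large-progress rounds, and small-progress rounds, then show each small-progress round's $a^t$ satisfies $\overline{a}_{a^t}(\sigma,u)\geq a^{t-1}>a^t-\theta$ via \cref{lem:fixed-after-a}, with injectivity from \cref{lem:at-increasing}. No gaps.
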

 \begin{proof}
    Recall that we define $a^t$ as the value of $a$ after round $t$,
     and we define $a^0$ as $0$.
     The algorithm has at most one round $t$ without computing $a^t$, when it terminates with $x^t=y^t$.
     Suppose that the step size of any round $t$, where the algorithm computes $a^t$, is the increment $a^t-a^{t-1}$. 
     Based on the step sizes, we divide the rounds into two classes:
         \textbf{small-progress rounds} that have step sizes $< \theta$, and 
         \textbf{large-progress rounds} that have step sizes $\geq \theta$.
     Note that the number of large-progress rounds is at most $n/\theta$ because otherwise $a$ will exceed $n$ in some round.
     It suffices to upper bound the number of small-progress rounds by $\abs *{
            \braces{
                i\in [n]
                :
                \overline{a}_i(\sigma,u) 
                \geq  
                i - \theta
            }
        }$ to finish the proof.

        Consider any round $t$ such that $a^{t}-a^{t-1} < \theta$. 
        Due to the definition of the algorithm and \cref{lem:fixed-after-a}, the algorithm finds $y^t_{\sigma(i)}=x^t_{\sigma(i)} = \tilde{x}_i(\sigma, u)$ for any $i<a^t$.
        The algorithm also finds $x^t_{\sigma(a^t)} \neq y^t_{\sigma(a^t)}$. 
        According to the definition of $x^t$ and $y^t$ and \cref{lem:fixed-after-a}, $y^t_{\sigma(a^t)}=$
        \[
            \textsc{UniversalCoupler} \parens*{X_{\sigma(a^t+1)}\given \braces*{X_{\sigma(j)}=\tilde{x}_j(\sigma, u)}_{j\in [a^{t-1}]}, u_{a^t}},
        \]
        and $x^t_{\sigma(a^t)}=$
        \Tag{
        \[
            \textsc{UniversalCoupler} \parens*{X_{\sigma(a^t+1)}\given \braces*{X_{\sigma(j)}=\tilde{x}_j(\sigma, u)}_{j\in [a^t-1]}, u_{a^t}}
            =
            \tilde{x}_{a^t}(\sigma, u).
        \]
        }%
        \Tag<sigconf>{
		\begin{multline*}
            \textsc{UniversalCoupler} \parens*{X_{\sigma(a^t+1)}\given \braces*{X_{\sigma(j)}=\tilde{x}_j(\sigma, u)}_{j\in [a^t-1]}, u_{a^t}}
            =\\
            \tilde{x}_{a^t}(\sigma, u).
        \end{multline*}}%
        This implies
        $\overline{a}_{a^t}(\sigma, u) \geq a^{t-1} > a^t-\theta$.
        Therefore, $a^t\in \braces{
                i\in [n]
                \given
                \overline{a}_i(\sigma,u) 
                \geq  
                i - \theta
            }$.
        Since $a^t$ are strictly increasing (\cref{lem:at-increasing}), the number of such small-progress rounds can be upper bounded by $\card*{
            \braces{
                i\in [n]
                \given
                \overline{a}_i(\sigma,u) 
                \geq  
                i - \theta
            }
        }$.
 \end{proof}

We note that the cardinality of the set $\braces {i\in [n] \given \overline{a}_i(\sigma,u) \geq i-\theta}$ can be $\Omega(n)$, even under expectation over $u$. 
Suppose that $X_1,X_3,\dots, X_{n-1}$ are sampled independently and uniformly at random, and $X_2=X_1, X_4=X_3, \cdots, X_n=X_{n-1}$. 
For $\theta \geq 2$ and the permutation $\sigma(i)=i$, this set will involve each $2i$ with probability $1/2$ independently. 

 \begin{remark}
    Because our analysis only needs the {\bf large-progress rounds} to increase $a$ by $\geq \theta$, we could still enjoy the same upper bound if we only resample the first $\theta$ entries after $a$ in each iteration. This suggests a more query-efficient implementation \cref{alg:sample-on-hypergrid-efficient}.
 \end{remark}

 \begin{Algorithm}
\caption{Query-efficient implementation for each iteration of \cref{alg:sample-on-hypergrid2}}
\label{alg:sample-on-hypergrid-efficient}
\While{\bf{true}}{
    $t\gets t+1$ \;
    $y^t\gets x^{t-1}$\;
    \For{$i\in \braces*{a+1,\dots, \min\{a+\theta,n\} }$ \bf{in parallel}}{
        $y^t_{\sigma(i)} \gets \textsc{UniversalCoupler}\parens*{X_{\sigma(i)} \given \braces*{X_{\sigma(j)}=x^{t-1}_{\sigma(j)}}_{j\in [a]}, u_i}$ \;
    }
    \For{$i\in \braces*{a+1,\dots, \min\{a+\theta,n\} }$ \bf{in parallel}}{
        $x^{t}_{\sigma(i)} \gets \textsc{UniversalCoupler}\parens*{X_{\sigma(i)} \given \braces*{X_{\sigma(j)}=y^{t}_{\sigma(j)}}_{j\in [i-1]}, u_i}$ \;
    }
    $a\gets \min ~\braces *{i\in \braces*{a+1,\dots,\min\{a+\theta,n\} }: y^t_{\sigma(i)} \neq x^t_{\sigma(i)}}\cup \braces{\min\{a+\theta,n\}+1} $ \;
    \If{$a\geq n$}{
        \KwRet{$x^t$}
    }
}
\end{Algorithm}

\paragraph{Expected round complexity.} Next, we consider the randomness $\sigma, u$ and show an improved expected round complexity for the algorithm. Based on the worst-case analysis, we give a better bound for the expected cardinality of $\braces{ i\in [n] \given \overline{a}_i(\sigma,u) \geq i - \theta }$ for some sophisticated choice of $\theta$. Due to the linearity of expectation, we can separately upper bound the probability of $\overline{a}_i(\sigma, u)\geq i-\theta$ for each $i\in [n]$. Only considering the randomness of $u$, we can obtain the following lemma, which upper bounds the probability by the total variation distance between the conditional distributions of $X_{\sigma(i)}$ under fully conditioning of all previous variables ($X_{\sigma(1)}$ to $X_{\sigma(i-1)}$) and partial conditioning of some previous variables ($X_{\sigma(1)}$ to $X_{\sigma(i-\theta)}$).

\begin{lemma}
    \label{lem:tv-bound-for-large-overline-a-grid}
    For any $i\geq \theta$ and any permutation $\sigma\in \mathcal{S}_n$, the probability of $\overline{a}_i(\sigma,u)\geq i-\theta$ is at most
     \Tag{\[
     O(\min{\braces*{\log nq, \sqrt{q}}}) 
     \cdot 
     \E *_{X}{\tv*{X_{\sigma(i)}\given \braces*{X_{\sigma(j)}}_{j\in [i-1]}, X_{\sigma(i)} \given \braces*{X_{\sigma(j)}}_{j\in [i-\theta]}}^2}^{1/2} 
     + 
     \frac{1}{n},
     \]}%
     \Tag<sigconf>{
     \begin{multline*}
     \frac{1}{n}+O(\min{\braces*{\log nq, \sqrt{q}}}) 
     \cdot \\
     \E *_{X}{\tv*{X_{\sigma(i)}\given \braces*{X_{\sigma(j)}}_{j\in [i-1]}, X_{\sigma(i)} \given \braces*{X_{\sigma(j)}}_{j\in [i-\theta]}}^2}^{1/2} 
     ,
     \end{multline*}
     }%
     where the probability is taken only over the randomness of $u_1,\cdots, u_n$.
\end{lemma}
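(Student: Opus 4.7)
My plan is to fix $i$ and $\sigma$, condition on $(u_1,\dots,u_{i-1})$ (which determines $\tilde{x}_1(\sigma,u),\dots,\tilde{x}_{i-1}(\sigma,u)$), and reduce the event $\overline{a}_i(\sigma,u) \ge i-\theta$ to a coupler-disagreement event on the $\theta$ conditional distributions $\mu_a := \parens{X_{\sigma(i)}\given \braces{X_{\sigma(j)} = \tilde{x}_j(\sigma,u)}_{j \in [a]}}$ for $a \in \braces{i-\theta,\dots,i-1}$. By the defining recursion \cref{eqn:output-of-our-algo}, $\tilde{x}_i(\sigma,u) = \textsc{UniversalCoupler}(\mu_{i-1}, u_i)$, so $\overline{a}_i(\sigma,u) \ge i-\theta$ is contained in the event that $\textsc{UniversalCoupler}(\mu_a, u_i)$ fails to be the same for all $a \in \braces{i-\theta,\dots,i-1}$.

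Conditional on $u_1,\dots,u_{i-1}$, I invoke the robustness guarantee (\cref{def:robust-universal-coupler}) to bound the probability over $u_i$ of this disagreement by $\sum_x(\max_a \mu_a(x) - \min_a \mu_a(x))/\sum_x \max_a \mu_a(x)$; since the denominator is at least $\sum_x \mu_{i-1}(x)=1$, I simply drop it. The key observation is that, under the randomness of $X_{\sigma(1)},\dots,X_{\sigma(i-1)}$---which matches the joint law of $\tilde{x}_1,\dots,\tilde{x}_{i-1}$ because \textsc{UniversalCoupler} is a universal coupler---the sequence $\braces{\mu_a(x)}_{a=i-\theta}^{i-1}$ is a Doob martingale for each $x\in[q]$ by the tower property. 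Taking expectation and applying \cref{lem:uni-coupler-on-martingale} with $m=\theta$ then yields
\[ \P{\overline{a}_i(\sigma,u) \ge i-\theta} \le \min\braces*{O(\log nq)\cdot \E{\tv{\mu_{i-\theta},\mu_{i-1}}} + 1/n,\ O(\sqrt{q})\cdot \E*{\textstyle\sum_x(\mu_{i-1}(x)-\mu_{i-\theta}(x))^2}^{1/2}}. \]

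Finally, I convert both branches into the $\E{\tv{\cdot,\cdot}^2}^{1/2}$ form stated in the lemma. For the first branch, Cauchy--Schwarz gives $\E{\tv{\cdot,\cdot}} \le \E{\tv{\cdot,\cdot}^2}^{1/2}$. For the second branch, the elementary fact $\|p-q\|_2^2 \le \|p-q\|_1^2 = 4\tv{p,q}^2$ (which holds because $\|v\|_2 \le \|v\|_1$ for any finite-dimensional vector) gives $\E*{\textstyle\sum_x(\mu_{i-1}(x)-\mu_{i-\theta}(x))^2}^{1/2} \le 2\E{\tv{\mu_{i-\theta},\mu_{i-1}}^2}^{1/2}$. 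The trivial bound $\min\braces{A+c,B}\le \min\braces{A,B}+c$ then combines the two into $O(\min\braces{\log nq,\sqrt{q}}) \cdot \E{\tv{\cdot,\cdot}^2}^{1/2}+1/n$, matching the claim. No individual step is technically hard; the main obstacle is to organize the argument so that one cleanly peels off the randomness of $u_i$ to apply coupler robustness, identifies the Doob martingale structure of $\braces{\mu_a(x)}_a$, and then invokes \cref{lem:uni-coupler-on-martingale} as a black box.
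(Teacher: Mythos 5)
Your proposal is correct and follows essentially the same route as the paper's proof: reduce $\overline{a}_i(\sigma,u)\geq i-\theta$ to a disagreement event for the coupler on the distributions $\mu_{i-\theta},\dots,\mu_{i-1}$, invoke the robustness guarantee, observe that $\{\mu_a(x)\}_a$ is a (Doob) martingale under the law of $X$, and apply \cref{lem:uni-coupler-on-martingale} followed by the same $\ell_1/\ell_2$ and Cauchy--Schwarz conversions to reach the $\E{\tv{\cdot,\cdot}^2}^{1/2}$ form. Your write-up is if anything slightly more explicit than the paper about the conditioning on $u_1,\dots,u_{i-1}$ and about dropping the denominator in the robustness bound, but these are the same steps the paper performs implicitly.
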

\begin{proof}
    According to the definition of $\overline{a}_i(\sigma,u)$, $\overline{a}_i(\sigma,u)\geq i-\theta$ when there exists $a\in [i-\theta, i-1]$ such that 
    $
        \textsc{UniversalCoupler} \parens{X_{\sigma(i)} \given \braces{X_{\sigma(j)}=\tilde{x}_j(\sigma, u)}_{j \in [i-1]}, u_i}
        \neq 
        \textsc{UniversalCoupler} \parens{X_{\sigma(i)} \given \braces{X_{\sigma(j)}=\tilde{x}_j(\sigma, u)}_{j \in [a]}, u_i}
    $.
    Equivalently, when we apply the universal coupler on all variables $X_i | \braces{X_{\sigma(j)}}_{j\in [a]}$, where $a$ can be any integer in $[i-\theta, i-1]$, the coupler produces different outcomes for two of them. 
    Let $\braces{\mu_a(x)}_{x\in [q]}$ denote the variables characterizing the randomly constructed distribution of $X_{\sigma(i)} | \braces{X_{\sigma(j)}}_{j\in [a]}$, where the randomness is taken over the random conditioning of $\braces{X_{\sigma(j)}}_{j\in [a]}$, i.e.,
    \[
    \forall x\in [q], ~\mu_a(x) = \P *{X_{\sigma(i)}=x \given \braces{X_{\sigma(j)}}_{j\in [a]}}
    \]
    We have that $\P_{u}{\overline{a}_i(\sigma, u)\geq i-\theta} = \P_{u}{\exists a,a'\in [i-\theta,i-1], \textsc{UniversalCoupler}(\mu_a,u_i)\neq  \textsc{UniversalCoupler}(\mu_{a'},u_i)}$.
    Further, note that $\{\mu_a(x)\}_{a\in [i-\theta,i-1]}$ forms a martingale, where randomness is taken over $u_1,\cdots, u_{i-1}$.
    Because of the robustness of \textsc{UniversalCoupler}, \cref{def:robust-universal-coupler}, and \cref{lem:uni-coupler-on-martingale}, we can upper bound $
        \P *_{u}{\overline{a}_i(\sigma,u)\geq i-\theta}$ by the following terms:
    \begin{multline*}
        \E*{\sum_{x\in [q]} \max_{a\in [i-\theta, i-1]}{\mu_a(x)}-\min_{a\in [i-\theta,i-1]} \mu_a(x)}
        \leq\\
        \Tag{
        	\min\braces*{O(\log nq) \cdot \E{\tv{\mu_{i-\theta},\mu_{i-1}}} + \frac{1}{n}, O(\sqrt{q})\cdot  \E*{\sum_{x\in [q]}(\mu_{i-1}(x)-\mu_{i-\theta}(x))^2}^{1/2}}
        }
        \Tag<sigconf>{
        	\min\Biggl\{O(\log nq) \cdot \E{\tv{\mu_{i-\theta},\mu_{i-1}}} + \frac{1}{n},\\ O(\sqrt{q})\cdot  \E*{\sum_{x\in [q]}(\mu_{i-1}(x)-\mu_{i-\theta}(x))^2}^{1/2}\Biggr\}
        }
        \leq\\
        O(\min{\braces*{\log nq, \sqrt{q}}}) 
     \cdot 
     \E *_{X}{\tv*{\mu_{i-1}, \mu_{i-\theta}}^2}^{1/2} 
     + 
     \frac{1}{n}.
     \qedhere
    \end{multline*}
\end{proof}

Then, applying the pinning lemma (\cref{cor:pinning-on-sqr-prob-diff}), we can obtain a better average round complexity for \cref{alg:sample-on-hypergrid2}.
\begin{proof}[Proof of \cref{thm:main}]
    We show that the expected number of rounds of \cref{alg:sample-on-hypergrid2} is 
        \[            
            O(n^{2/3}\cdot \min\{\log^{2/3}{n}\log{q},q^{1/3}\log^{1/3}{q}\})
        \]
    Consider $\theta = O(n^{1/2})$. According to \cref{lem:det-upper-bound-for-rounds}, the expected number of rounds that \cref{alg:sample-on-hypergrid2} needs is at most 
    \begin{equation}
        \label{eqn:expected-running-time}
        \frac{n}{\theta} + \theta + 1 + \sum_{i\geq \theta} \P *_{\sigma, u}{\overline{a}_i(\sigma, u) \geq i-\theta} 
    \end{equation}

    By \cref{lem:tv-bound-for-large-overline-a-grid}, we can upper bound
    \begin{multline*}
	    \sum_{i\geq \theta} \P *_{\sigma, u}{\overline{a}_i(\sigma, u) \geq i-\theta}
	    \leq
        1+O(\min{\braces*{\log nq, \sqrt{q}}}) \cdot \\
        \underbrace{
            \sum_{i\geq \theta} 
            \E *{
            \tv*{X_{\sigma(i)}\given \braces*{X_{\sigma(j)}}_{j\in [i-1]}, X_{\sigma(i)} \given \braces*{X_{\sigma(j)}}_{j\in [i-\theta]}}^2
            }^{1/2}
        }_{
        },
    \end{multline*}
    where we can further upper bound the underbraced term by 
    \begin{multline*}
        \leq O(\sqrt{n} ) \cdot \\ \E *{
            \sum_{i\geq \theta} \tv*{X_{\sigma(i)}\given \braces*{X_{\sigma(j)}}_{j\in [i-1]}, X_{\sigma(i)} \given \braces*{X_{\sigma(j)}}_{j\in [i-\theta]}}
        }
        \leq\\
        O(\sqrt{n}) \cdot 
        \parens*{\frac{(\theta-1)\log q}{2}}
        ^{1/2}
        \leq
        O(\sqrt{n\theta\log{q}}) 
    \end{multline*}
    Therefore, if we take $\theta=\frac{n^{1/3}}{\log^{1/3}{q}(\min\braces{\log{nq},\sqrt{q}})^{2/3}}$, the expected round complexity of \cref{alg:sample-on-hypergrid2} can be upper bounded by
    \Tag{
    \[
        \frac{n}{\theta} + O(\sqrt{n\theta\log{q}}\min\{\log{nq},\sqrt{q}\}) \leq O(n^{2/3}\cdot \min\{\log^{2/3}{n}\log{q},q^{1/3}\log^{1/3}{q}\}).
    \]
    }%
    \Tag<sigconf>{
    \begin{multline*}
        \frac{n}{\theta} + O(\sqrt{n\theta\log{q}}\min\{\log{nq},\sqrt{q}\}) \leq \\ O(n^{2/3}\cdot \min\{\log^{2/3}{n}\log{q},q^{1/3}\log^{1/3}{q}\}).
    \end{multline*}
    }%
    Note that our choice of $\theta$ guarantees that $\sum_{i\geq \theta} \P_{\sigma, u}{\overline{a}_i(\sigma, u)\geq i-\theta} = O(n/\theta)$. The expected round complexity can also be upper bounded by $O(n/\theta)$.
    Using our query-efficient implementation, \cref{alg:sample-on-hypergrid-efficient}, the expected total number of queries we make is $O(n)$.
\end{proof}

\paragraph{Tail bounds for the round complexity.}
Because of Markov's inequality, we can obtain the following corollary:
\begin{corollary}
    \label{cor:rounds-wp-1/2}
    For any input $X_1,X_2, \cdots, X_n$, with probability at least $1/2$, \cref{alg:sample-on-hypergrid2} terminates in $\widetilde{O}(n^{2/3}\log q)$ rounds.
\end{corollary}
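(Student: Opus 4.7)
The plan is to derive this corollary as an immediate consequence of \cref{thm:main} via Markov's inequality, so the proposal is quite short. \Cref{thm:main} (proved just above) establishes that the expected number of rounds of \cref{alg:sample-on-hypergrid2} is $O(n^{2/3}\cdot \min\{\log^{2/3}{n}\cdot \log{q}, q^{1/3}\log^{1/3}{q}\})$, which in particular is $\widetilde{O}(n^{2/3}\log q)$ since the first term inside the minimum is at most $\log n \cdot \log q \cdot n^{2/3}$ up to polylogarithmic factors absorbed by $\widetilde{O}(\cdot)$.

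Let $T$ denote the (random) number of rounds used by \cref{alg:sample-on-hypergrid2}, where the randomness is taken over the uniform permutation $\sigma\sim \mathcal{S}_n$ and the i.i.d.\ uniform sources $u_1,\dots,u_n$. Write $\mu_T := \E{T}$. \Cref{thm:main} gives $\mu_T = \widetilde{O}(n^{2/3}\log q)$. Since $T$ is a nonnegative random variable, Markov's inequality yields
\[
    \P{T \geq 2\mu_T} \leq \frac{\E{T}}{2\mu_T} = \frac{1}{2},
\]
so $\P{T \leq 2\mu_T} \geq 1/2$. Substituting the bound on $\mu_T$ gives $\P{T = \widetilde{O}(n^{2/3}\log q)} \geq 1/2$, which is exactly the statement of the corollary.

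There is no real obstacle here: the statement is only a restatement of the expected-complexity bound as a weak tail bound, and the constant $1/2$ is arbitrary (any constant probability follows from the same Markov-style argument by paying a corresponding constant multiplicative factor in the round count). As the paragraph preceding \cref{cor:rounds-wp-1/2} foreshadows, the purpose of stating this weak tail bound is to set up the subsequent boosting argument (re-initializing the algorithm on the unfinished suffix to amplify $1/2$ to $1 - 2^{-\widetilde{\Omega}(c)}$), so the corollary itself is deliberately a one-line consequence of \cref{thm:main}.
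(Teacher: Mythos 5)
Your proposal is correct and matches the paper's own derivation: the paper obtains \cref{cor:rounds-wp-1/2} exactly by applying Markov's inequality to the expected round complexity bound of \cref{thm:main}. Nothing further is needed.
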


Next, we establish a tail bound for the number of rounds used by the algorithm.

\begin{theorem}\label{thm:tail-bounds}
    There exists a constant $M>0$ such that for any integer $c\geq 1$, \cref{alg:sample-on-hypergrid2} terminates in $cM \cdot (n\log n)^{2/3}\log q$ rounds with probability at least $1-2^{-c}$. 
\end{theorem}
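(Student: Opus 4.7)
The plan is to apply \cref{cor:rounds-wp-1/2} iteratively through a ``restart'' argument, formalizing the observation flagged above that running several rounds of \cref{alg:sample-on-hypergrid2} is equivalent to reinitializing it on a smaller conditional instance. First, fix the constant $M$ so that \cref{cor:rounds-wp-1/2} holds uniformly over problem sizes $n' \le n$: on any distribution over $[q]^{n'}$ with $n'\le n$, \cref{alg:sample-on-hypergrid2} terminates within $T := M(n\log n)^{2/3}\log q$ rounds with probability at least $1/2$. Because the $\widetilde{O}(\cdot)$ bound in \cref{cor:rounds-wp-1/2} is monotone in $n'$, this is only a constant-factor adjustment to $M$.

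The core step is a restart claim: conditional on the algorithm not having terminated in $kT$ rounds, the distribution of its continuation coincides with a fresh execution of \cref{alg:sample-on-hypergrid2} on the conditional distribution $\mu \mid \{X_{\sigma(j)} = \tilde{x}_j(\sigma,u)\}_{j \in [a^{kT}]}$, an instance with $n - a^{kT} \le n$ variables. The key technical fact to verify is that the state tuple $(a^{kT}, \sigma|_{[a^{kT}]}, \tilde{x}_1, \ldots, \tilde{x}_{a^{kT}})$ depends only on $\sigma|_{[a^{kT}]}$ and $u_1, \ldots, u_{a^{kT}}$. I would prove this inductively on $t$: because $a^t$ is defined as the \emph{smallest} index at which $x^t$ and $y^t$ disagree, any computation that consults a seed $u_i$ with $i > a^t$ during round $t$ affects only values at positions past $a^t$, which are overwritten before they enter the state used in round $t+1$. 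Consequently, conditioning on the state keeps the remaining seeds $u_{a^{kT}+1}, \ldots, u_n$ independent and uniform, and keeps $\sigma|_{(a^{kT},n]}$ a uniformly random bijection from $(a^{kT}, n]$ onto $[n] \setminus \sigma([a^{kT}])$ --- precisely the fresh randomness required to restart the algorithm on the conditional problem.

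Combining the restart claim with \cref{cor:rounds-wp-1/2} applied to the smaller instance gives $\Pr[R > (k+1)T \mid R > kT] \le 1/2$ for every $k \ge 0$, where $R$ denotes the total number of rounds used by the algorithm. Iterating,
\[
    \Pr[R > cT] \;\le\; \prod_{k=0}^{c-1} \Pr[R > (k+1)T \mid R > kT] \;\le\; 2^{-c}.
\]
The main obstacle I anticipate is the ``unused randomness'' portion of the restart claim: one must rigorously rule out that conditioning on $(a^{kT}, \sigma|_{[a^{kT}]}, \tilde{x}|_{[a^{kT}]})$ secretly leaks information about $u_i$ for $i > a^{kT}$ or about the tail of $\sigma$. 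Once this measurability claim is established, the rest is standard iterated-Markov bookkeeping.
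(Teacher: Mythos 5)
Your proposal is correct and takes essentially the same route as the paper: the same restart observation (the state after a block of rounds, including non-termination and the value of $a$, is measurable with respect to $\sigma$ and $u$ restricted to the first $a$ indices, so the continuation is distributed as a fresh run of \cref{alg:sample-on-hypergrid2} on the conditional instance over the remaining variables), combined with \cref{cor:rounds-wp-1/2} to halve the failure probability per block. The only cosmetic difference is that the paper packages the halving step as an induction on $n$ and $c$ with the threshold rescaled to $(n-i)\log(n-i)$ for the residual instance, whereas you iterate conditional probabilities against the single threshold $T=M(n\log n)^{2/3}\log q$ using monotonicity of the bound in the instance size.
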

\begin{proof}
    Let $M$ be a constant such that \cref{alg:sample-on-hypergrid2} terminates in $M \cdot (n\log n)^{2/3}\log q$ with probability at least $1/2$. The existence of such $M$ follows \cref{cor:rounds-wp-1/2}. Let $R$ be the variable that denotes the number of rounds \cref{alg:sample-on-hypergrid2} uses. Next, we prove by induction that for any integer $c\geq 1$, we have 
    \begin{align*}
        \P*{R>cM \cdot (n\log n)^{2/3}\log q} \leq 2^{-c}.
    \end{align*}
    The cases where $n=1$ or $c=1$ are trivial.

    Suppose that we have proved the theorem for any $n<N$. Consider an instance with $n=N$. For any $i\leq n-1$, let $\mathcal{A}_i$ be the event that \cref{alg:sample-on-hypergrid2} does not terminate and has $a=i$ after running it for $M \cdot (n\log n)^{2/3}\log q$ rounds. This definition immediately gives us $\sum_{i\leq n-1} \P{\mathcal{A}_i}\leq \frac{1}{2}$. Because of \cref{lem:at-increasing}, after running the algorithm for $M \cdot (n\log n)^{2/3}\log q$ rounds, we have $a\geq M \cdot (n\log n)^{2/3}\log q$. Therefore, $\P{\mathcal{A}_0}=0$ and for any $c\geq 2$,
    \begin{multline*}
        \P*{R>cM \cdot (n\log n)^{2/3}\log q} 
        = \\
        \sum_{i\in [n-1]} \P*{R>cM \cdot (n\log n)^{2/3}\log q \given \mathcal{A}_i} \P*{\mathcal{A}_i}
        \leq \\
        \frac{1}{2} \cdot \max_{i\in [n-1]} \P*{R>cM \cdot (n\log n)^{2/3}\log q \given \mathcal{A}_i}.
    \end{multline*}
    Next, we show $\P{R>cM \cdot (n\log n)^{2/3}\log q \given \mathcal{A}_i}\leq 2^{-c+1}$ for any $i\in [n-1]$ to finish the proof. Note that for any $i\in [n-1]$, whether the event $\mathcal{A}_i$ happens is determined by $\sigma(1),\cdots, \sigma(i)$ and $u_{1},\cdots, u_{i}$. Therefore, $\mathcal{A}_i$ is independent of the random permutation in $[n]\setminus \braces{\sigma(j)}_{j\in [i]}$ and the randomness of $u_{i+1},\cdots, u_n$. Further, if $\mathcal{A}_i$ happens after running the algorithm for $M \cdot (n\log n)^{2/3}\log q$ rounds, the algorithm fixes the values of the first $i$ variables and continues with $a=i$. Therefore, conditioning on any $\sigma(1),\cdots, \sigma(i)\in \binom{[n]}{i}$ and $u_1,\cdots, u_i\in [0,1]$ that cause $\mathcal{A}_i$ to happen, the remaining iterations of the algorithm are equivalent to those in a fresh run of the algorithm on the remaining variables $\braces{X_j}_{j\in [n]} \setminus \braces{X_{\sigma(j)}}_{j\in [i]}$ conditioning on $\braces{X_{\sigma(j)}=\tilde{x}_j(\sigma, u)}_{j\in [i]}$. According to our induction hypothesis, the number of remaining rounds of the algorithm is (strictly) greater than $(c-1)M \cdot ((n-i)\log (n-i))^{2/3}\log q$ with probability at most $2^{-c+1}$. Therefore, $\P{R>cM \cdot (n\log n)^{2/3}\given \mathcal{A}_i}$ is at most $2^{-c+1}$.
\end{proof}

\Tag{
\subsection{Sampling via Approximate Counting Oracles}\label{sec:approximate-oracle}
We show that our algorithm can also work with approximate counting oracles. 
Suppose $\hat{\mu}$ is an oracle such that for any $S\subseteq [n]$ and $y\in [q]^S$, with probability at least $1-\delta$,
\begin{align}
\label{eqn:approx-counting}
    \hat{\mu}(S,y) \in (1\pm \epsilon) \P_{X\sim \mu}{X_S=y}. 
\end{align}
At each round of the algorithm, we shall consider an approximate version of the conditional probability distribution. 
For any permutation $\sigma$, indices $0\leq a<i\leq n$ and any $y\in [q]^{\braces{\sigma(j)}_{j\in [a]}}$, we consider the following distribution $\upsilon$ for $X_{\sigma(i)} | \braces{X_{\sigma(j)}=y_{\sigma(j)}}_{j\in [a]}$: for any $x\in [q]$, let $y'(x)\in [q]^{\braces{\sigma(j)}_{j\in [a]} \cup \braces{\sigma(i)}}$ be the vector such that $y'_{\sigma(i)}(x)=x$ and $y'_{\sigma(j)}(x)=y_{\sigma(j)}$ for any $j\in [a]$, then we have
\begin{align*}
    \forall x\in [q], \quad \upsilon(x) \propto {\hat{\mu}\parens*{\braces{\sigma(j)}_{j\in [a]} \cup \braces{\sigma(i)}, y'(x)}}.
\end{align*}
In particular, we use $\upsilon_{i|a}(\sigma,u)$ to denote this approximate version of the distribution for $X_{\sigma(i)} | \braces{X_{\sigma(j)}=\tilde{x}_j(\sigma, u)}_{j\in [a]}$.
We say that a pair of randomness $(\sigma, u)$ is good if for any $0\leq a< i\leq n$, 
\Tag{
\begin{align*}
    \textsc{UniversalCoupler}\parens*{X_{\sigma(i)} \given \braces*{X_{\sigma(j)}=\tilde{x}_{j}(\sigma, u)}_{j\in [a]}, u_i} =  \textsc{UniversalCoupler}\parens*{\upsilon_{i|a}(\sigma, u), u_i}.
\end{align*}
}%
\Tag<sigconf>{
\begin{multline*}
    \textsc{UniversalCoupler}\parens*{X_{\sigma(i)} \given \braces*{X_{\sigma(j)}=\tilde{x}_{j}(\sigma, u)}_{j\in [a]}, u_i} =\\  \textsc{UniversalCoupler}\parens*{\upsilon_{i|a}(\sigma, u), u_i}.
\end{multline*}
}%

We show the following two key lemmas on good randomness. 
The first states that the approximate counting oracles do not influence the output of the algorithm as long as the randomness is good. 
The second upper bounds the probability that the randomness is not good.
\Tag{We defer the proofs to \cref{proof:approx-counting-equiv-under-good} and \cref{proof:approx-counting-good-prob}.}
\begin{lemma}
    \label{lem:approx-counting-equiv-under-good}
    If the pair of randomness $(\sigma,u)$ is good, \cref{alg:sample-on-hypergrid2} with the approximate counting oracle outputs the same vectors as \cref{alg:sample-on-hypergrid2} with the exact counting oracle in the same number of rounds.
\end{lemma}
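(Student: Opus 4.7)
My plan is to argue by induction on the round number $t$ that the approximate and exact runs of \cref{alg:sample-on-hypergrid2} maintain an identical \emph{committed prefix}: at the end of round $t$, both runs agree on $a^t$ and on $\set*{x^t_{\sigma(j)}}_{j\in [a^t]}$, and this common value equals $\tilde{x}_j(\sigma,u)$ from \cref{eqn:output-of-our-algo}. The base case $t=0$ is trivial ($a^0=0$). For the inductive step in round $t$, I would mirror the three internal stages of the round.

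For the $y^t$-computation, both runs condition on $\set*{x^{t-1}_{\sigma(j)}}_{j\in[a^{t-1}]}$, which equals $\set*{\tilde{x}_j(\sigma,u)}_{j\in[a^{t-1}]}$ by the inductive hypothesis together with \cref{lem:fixed-after-a}. Each coordinate's coupler call is then exactly of the form appearing in the definition of ``good'', so goodness directly gives $y^t_{\mathrm{apx}} = y^t_{\mathrm{ex}}$ coordinatewise (coordinates $i\le a^{t-1}$ are degenerate and both runs just copy $x^{t-1}_{\sigma(i)}$).

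To analyze the $x^t$-computation, I would introduce the auxiliary index
\[
    b^t = \min\set*{i\in [n] \given y^t_{\sigma(i)} \neq \tilde{x}_i(\sigma,u)},
\]
with the convention $b^t = +\infty$ when the set is empty. A short check using the defining recursion \cref{eqn:output-of-our-algo} shows $b^t \ge a^{t-1}+2$, since coordinates $j\le a^{t-1}$ are degenerate and the call at $j=a^{t-1}+1$ is exactly the one that defines $\tilde{x}_{a^{t-1}+1}$. The key observation is then that for every $i \le b^t$, the conditioning $\set*{y^t_{\sigma(j)}}_{j\in[i-1]}$ used to compute $x^t_{\sigma(i)}$ consists entirely of $\tilde{x}$-values by minimality of $b^t$, so goodness applies a second time and both runs produce $x^t_{\sigma(i)} = \tilde{x}_i$.

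Finally I would deduce the update of $a$ and termination. If $b^t=\infty$ then $x^t=y^t=\tilde{x}$ entrywise in both runs and both terminate in round $t$ with the same output; otherwise $x^t$ and $y^t$ agree on $[b^t-1]$ (both equal $\tilde{x}$ there) and disagree at $b^t$, so the first mismatch occurs at $b^t$ in both runs and both set $a^t = b^t$, restoring the inductive invariant. The main subtlety I want to flag is that $x^t_{\sigma(i)}$ for $i>b^t$ is \emph{not} controlled: its coupler call can condition on $y^t_{\sigma(j)}$ with $j\ge b^t$ that deviates from $\tilde{x}_j$, leaving goodness inapplicable. This is harmless because such coordinates lie beyond the committed prefix, are overwritten in the next round, and influence neither the $a$-update nor the termination check (which is equivalent to $b^t=\infty$ and is determined by $y^t$ alone). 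The matching round counts follow from the matching state trajectories.
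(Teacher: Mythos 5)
Your proposal is correct and follows essentially the same route as the paper's proof: a round-by-round induction maintaining that both runs share the committed prefix $a^t$ with values $\tilde{x}_j(\sigma,u)$, using goodness once for the $y^t$-stage and once for the $x^t$-stage, and then matching the first-mismatch index and the termination checks. Your auxiliary index $b^t$ is just a slightly more explicit repackaging of the paper's use of $a^t$ together with \cref{lem:fixed-after-a} (and it makes the harmlessness of the uncontrolled coordinates $i>b^t$ a bit more transparent), but the argument is the same in substance.
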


\begin{lemma}
    \label{lem:approx-counting-good-prob}
    For any pair of randomness $(\sigma, u)$, it is good with probability at least $1-O(n^2\epsilon+n^2q\delta)$. 
\end{lemma}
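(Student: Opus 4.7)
The plan is to fix a pair $(a, i)$ with $0 \leq a < i \leq n$, bound the probability that \textsc{UniversalCoupler} produces different outputs on the exact conditional distribution of $X_{\sigma(i)}$ given $\{X_{\sigma(j)} = \tilde{x}_j(\sigma, u)\}_{j \in [a]}$ versus on its approximate counterpart $\upsilon_{i|a}(\sigma, u)$, and then take a union bound over the $O(n^2)$ pairs.

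First I will condition on $(\sigma, u_1, \ldots, u_{i-1})$, which determines each $\tilde{x}_j(\sigma, u)$ for $j \leq a$ and hence fixes the exact conditional distribution; call it $\pi$. The approximate distribution $\upsilon_{i|a}$ is obtained by normalizing the $q$ oracle answers $\hat\mu(\{\sigma(j)\}_{j \in [a]} \cup \{\sigma(i)\}, y'(x))$ over $x \in [q]$. Applying \eqref{eqn:approx-counting} with a union bound over these $q$ queries, with probability at least $1 - q\delta$ over the oracle's randomness every answer is within a $(1 \pm \epsilon)$ factor of the corresponding true joint probability. A short calculation in which the normalizer is itself within $(1 \pm \epsilon)$ of the true marginal then yields $\upsilon_{i|a}(x) \in (1 \pm O(\epsilon))\,\pi(x)$ pointwise, and hence $\tv{\upsilon_{i|a}, \pi} = O(\epsilon)$.

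Next I will invoke the $m=2$ specialization of \cref{def:robust-universal-coupler} -- equivalently, the classical two-distribution bound $\Pr_r[f(\mu,r) \neq f(\mu',r)] \leq 2\,\tv{\mu,\mu'}$ that both \textsc{MinCoupler} and \textsc{GumbelTrick} satisfy. Since $u_i$ is independent of everything fixed above, on the good event for the oracle the coupler's outputs on $\pi$ and $\upsilon_{i|a}$ disagree with probability at most $O(\epsilon)$ over $u_i$. Combining the two events, the pair $(a, i)$ fails with probability $O(\epsilon + q\delta)$, and a union bound over the $O(n^2)$ pairs yields the claimed $O(n^2 \epsilon + n^2 q \delta)$.

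The main thing to keep straight is the order of randomness: $\upsilon_{i|a}$ depends on both the oracle's internal randomness and on $u_1, \ldots, u_{i-1}$, while the matching coupler call uses the fresh $u_i$. The argument must invoke oracle accuracy over the oracle's randomness with $\sigma$ and $u_1, \ldots, u_{i-1}$ frozen, and only then invoke coupler robustness over the independent $u_i$; otherwise the two probabilistic bounds would entangle and the $O(\epsilon + q\delta)$ per-pair bound would not go through. Aside from this bookkeeping, the remaining work is a routine multiplicative-error manipulation and a union bound.
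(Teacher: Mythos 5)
Your proposal is correct and matches the paper's own argument: a union bound over the $q$ oracle queries per pair to get a pointwise $(1\pm\epsilon)$-factor approximation and hence $O(\epsilon)$ total variation distance, the two-distribution ($m=2$) robustness bound of the universal coupler over the fresh $u_i$, and a union bound over the $O(n^2)$ pairs $(a,i)$. The only cosmetic difference is that the paper union-bounds all $n^2 q$ oracle queries at once before handling the coupler events, whereas you bound each pair's failure by $O(\epsilon+q\delta)$ and then union-bound over pairs; both give $1-O(n^2\epsilon+n^2 q\delta)$.
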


Suppose the parameters of the approximate counting oracle satisfy $\delta, \epsilon = O(n^{-3}q^{-1})$.
Putting \cref{thm:tail-bounds} together with these two lemmas, we can easily obtain the guarantee: if we terminate ~\cref{alg:sample-on-hypergrid2} in $O(n^{2/3}\poly \log(n,q))$ rounds, the output distribution is within a total variation distance $O(n^{-1})$ of $\mu$.
}
    \section{Applications}\label{sec:applications}

In this section, we show an example application of \cref{thm:main}, to the problem of sampling uniformly random perfect matchings in planar graphs. The famous FKT algorithm allows parallel counting of the number of perfect matchings \cite[see, e.g.,][]{ABTV23}. The previous best parallel runtime for this problem is $\widetilde{O}(n^{1/2})$ for planar graphs of size $n$ \cite{ABTV23}. 

\begin{remark}
    Two key techniques for deterministic (approximate) counting, namely the tree recursion/correlation decay method \cite{Wei06} and the polynomial interpolation method \cite{Bar16} can often be trivially parallelized. The former involves solving a recursion on a tree of logarithmic depth, and the latter involves enumerating structures of logarithmic size in a host object (e.g., a graph). As such, our results automatically provide a parallel speedup wherever these methods apply.
\end{remark}

\begin{theorem}
    Let $G=(V, E)$ be a planar graph. There exists an algorithm that samples a uniformly random perfect matching in $G$ with a parallel runtime of $\widetilde{O}(n^{1/3})$ and $\poly(n)$ work.
\end{theorem}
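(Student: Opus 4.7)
The plan is to combine the main sampling algorithm with a divide-and-conquer scheme via the planar separator theorem. Let $T(n)$ denote the parallel runtime for planar graphs with $n$ vertices; I aim for the recursion $T(n) \leq T(2n/3) + \widetilde{O}(n^{1/3})$, which resolves to $T(n) = \widetilde{O}(n^{1/3})$ since $(2/3)^{1/3} < 1$ makes the geometric series converge to a constant times its top term.

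First I would use the parallel planar separator theorem to compute a vertex separator $S \subseteq V$ of size $O(\sqrt{n})$ whose removal leaves components of size at most $2n/3$. Encoding the matching partner of each $v \in S$ as a coordinate $X_v \in [q]$ with $q = O(n)$, I apply \cref{thm:main}, invoked with the high-probability tail bound (\cref{thm:tail-bounds}) at confidence $c = O(\log n)$, to sample $\{X_v\}_{v \in S}$. This is a distribution on $[q]^{|S|}$ with $|S| = O(\sqrt{n})$, giving parallel runtime $\widetilde{O}(|S|^{2/3} \log n) = \widetilde{O}(n^{1/3})$ and failure probability $1/\poly(n)$. The required counting oracle is supplied by the parallel FKT Pfaffian evaluation, with pinnings $X_S = \sigma_S$ translating into edge deletions/contractions inside the (still planar) host graph.

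Once $\{X_v\}_{v\in S}$ is fixed, the residual graph $G'$ obtained by deleting $S$ together with its sampled partners decomposes into independent subproblems: its connected components lie inside the original components of $G \setminus S$, are each planar, and have size at most $2n/3$. By the Markov factorization of the uniform matching distribution across a vertex separator, the conditional distribution on $G'$ is a product of independent uniform perfect matchings on these components, so I would recurse on them in parallel. A union bound over the $\poly(n)$ total recursive calls preserves the $\widetilde{O}(n^{1/3})$ runtime with high probability. The total work is $\poly(n)$ because each invocation of \cref{thm:main} issues only $O(n)$ counting queries, each answerable by FKT with $\poly(n)$ work.

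The main obstacle is justifying the factorization cleanly: one must pin not only the matching \emph{within} $S$ but also the partners of $S$ lying outside $S$, so the residual graph truly splits into independent planar pieces whose marginal distributions are uniform perfect matchings of those pieces. A minor subtlety is that the alphabet $q = O(n)$ enters the bound of \cref{thm:main} only logarithmically, so the $\widetilde{O}$ hides only $\operatorname{polylog}(n)$ factors per level and tolerates the $O(\log n)$ recursion depth.
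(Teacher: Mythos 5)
Your proposal is correct and follows essentially the same route as the paper: find an $O(\sqrt{n})$ planar separator in parallel, sample the matched edges incident to the separator via \cref{thm:main} (using the tail bound \cref{thm:tail-bounds} and the parallel FKT counting oracle, with pinnings handled as edge/vertex removals), then delete the separator and its matched partners and recurse in parallel on the geometrically smaller planar components, taking a union bound over the $\poly(n)$ calls. The factorization point you flag is handled the same way in the paper -- conditioning on the partial matching covering $S$ leaves the uniform distribution over perfect matchings of the residual graph, which is a disjoint union -- so there is no gap.
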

\begin{proof}
    Similar to \cite{ABTV23}, we use the planar separator theorem to find a separator of size $O(\sqrt{n})$, sample the portion of the perfect matching incident to the separator, and then recursively sample the rest of the perfect matching in the now-disjoint halves of the graph, in parallel. Our modification is that, while na\"ively sampling the separator edges takes $\widetilde{O}(\sqrt{n})$ time, using \cref{thm:main}, we can speed it up to $\widetilde{O}(n^{1/3})$.

    To be more specific, given the input graph $G=(V, E)$, we find a planar separator $S\subseteq V$ of size $O(\sqrt{n})$, such that $G-S$ is composed of two smaller graphs, on vertex sets $A, B$, each of size $\leq (1-\Omega(1))n$. This can be done in parallel \cite{GM87}.

    Next, we consider the distribution $\mu$ on $E^S$, where $\mu(x)$ is proportional to the number of perfect matchings that have edge $x_v$ incident to $v$ for all $v\in S$. Note that many configurations $x\in E^S$ are invalid, for example, those where $v$ is not even an endpoint of $x_v$, or those with clashing edges for two vertices in $S$. All of these invalid configurations are assigned a measure of $0$ under $\mu$. We claim that there is a parallel (\Class{NC}) counting oracle for $\mu$. Indeed, given a partial pinning, we can check if it is valid, and if so, remove the edges in the pinning from the graph, and simply count perfect matchings in the resulting subgraph. The number of perfect matchings in planar graphs can be efficiently computed in parallel by the FKT algorithm \cite[see, e.g.,][]{ABTV23}.

    Now we use \cref{alg:sample-on-hypergrid2} to sample from $\mu$. Once the sample is produced, we remove all the endpoints of this partial matching from $G$ (in particular, this removes all of $S$), and now we have two disjoint subgraphs of geometrically smaller size. In parallel, we recurse on each.

    Note that the total number of calls to \cref{alg:sample-on-hypergrid2} is $\leq \poly(n)$. By using the tail bounds for our algorithm, \cref{thm:tail-bounds}, each call finishes in at most $\widetilde{O}(\sqrt{n}^{2/3})=\widetilde{O}(n^{1/3})$ time, with probability at least $1-1/\poly(n)$. Taking a union bound, and using the fact that recursively the subgraphs shrink geometrically, we get that the overall parallel runtime is $\widetilde{O}(n^{1/3})$ with high probability.
\end{proof}
    \section{Hardness}\label{sec:hardness}
    In this section, we prove that any algorithm cannot approximately sample within a constant total variation distance of arbitrary distribution $\mu$ with $n^{1/3-\Omega(1)}$ round complexity and a polynomial number of queries in each round to the exact counting oracle.
    More generally, we shall prove the following hardness result on parallel search via counting oracles for $q=2$.

    \begin{theorem}
        \label{thm:hardness-search}
        For any constant $\delta\in (0,1]$, any $c\in (0, n^{1-\delta})$ and any (randomized) algorithm $\ALG$ making at most $n^c$ queries to the counting oracle in each round, there exists an instance $\mu:\{0,1\}^n \to \{0,1\}$ such that $\ALG$ can only find a solution $x$ (such that $\mu(x)=1$) with probability at most $0.01$ after (strictly) less than $\frac{1}{4}\cdot \parens{\frac{n}{(c+2) \log n} }^{1/3}$ rounds of queries. 
    \end{theorem}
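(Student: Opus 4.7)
The plan is to invoke Yao's minimax principle and exhibit a distribution over instances $\mu$ such that every deterministic algorithm using fewer than $\frac14(n/((c+2)\log n))^{1/3}$ rounds succeeds with probability at most $0.01$ over the random choice of $\mu$. Each $\mu$ in the family is the uniform distribution on the solutions of a random affine system over $\mathbb{F}_2^n$ with a ``bucketed'' structure. Concretely, set $R = \lceil \frac14(n/((c+2)\log n))^{1/3} \rceil$, sample a uniformly random partition of $[n]$ into $R$ equal-sized buckets $B_1,\dots,B_R$, and for each bucket $B_i$ independently impose a random affine system of $k_i$ constraints on the coordinates of $B_i$. The $k_i$ are chosen so that the free dimensions $d_i=|B_i|-k_i$ are strictly increasing in $i$ with gaps of order $(c+2)\log n$, and so that every $k_i$ is at least $(c+2)\log n$; the target $\mu$ is uniform on the resulting (product) solution set.

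The heart of the proof is a progress lemma. Call a bucket $B_i$ \emph{discovered} at the end of round $t$ if some query so far has a pinning consistent with $B_i$'s affine constraints. I would argue by induction on $t$ that, with probability $1-o(1)$ over the construction, the set of discovered buckets grows by at most one per round and always in the order of smallest $d_i$ first. Two ingredients drive this. First, for a uniformly random partition, a query pinning $|S|$ coordinates places in expectation $|S|/R$ pinnings inside each bucket; if this exceeds the current frontier bucket's free dimension $d_{i^\star}$, the query satisfies $B_{i^\star}$'s random constraints with probability at most $2^{-(|S|/R-d_{i^\star})}$, and a union bound over the $n^c$ queries per round and the $R$ buckets rules out leaping past the frontier. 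Second, when a query does respect the frontier, the counting-oracle answer is a deterministic function of the ranks of the constraint systems in already-discovered buckets, so after averaging over the still-hidden partition and constraints of later buckets those marginals remain uniform and no information about less-constrained buckets is revealed.

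Combining these, after strictly fewer than $R$ rounds the algorithm still has an undiscovered bucket $B_{i^\star}$ whose $k_{i^\star}\geq (c+2)\log n$ random constraints are independent of the algorithm's view. Hence the algorithm's output lies in $B_{i^\star}$'s solution set with probability at most $2^{-k_{i^\star}}\leq n^{-(c+2)}\ll 0.01$, as desired.

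The main technical obstacle is the adaptive, real-valued nature of the counting oracle: an algorithm could in principle extract subtle information from the exact probability values returned for consistent pinnings, which encode ranks of random matrices over $\mathbb{F}_2$. I would address this by showing that, conditioned on the algorithm's transcript up to the current round, the counting-oracle answer for any new query is a low-entropy function of the hidden instance---essentially determined by the rank defects the pinning induces in the already-discovered bucket systems---and thus leaks only $O(\log n)$ bits per query about the undiscovered buckets. This, combined with the $n^{-\Omega(c)}$ per-query probability of leapfrogging the frontier, closes the argument via a union bound over the $n^c$ queries in each of the at most $\frac14(n/((c+2)\log n))^{1/3}$ rounds.
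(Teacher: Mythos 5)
Your overall plan is the same as the paper's: random equal buckets, per-bucket random affine codes with increasing free dimensions, a frontier/one-bucket-per-round invariant, and Yao's principle. But two steps fail as proposed. The decisive one is your choice of gaps of order $(c+2)\log n$ between consecutive free dimensions $d_i$. The dichotomy you need (``either the query's pinning already kills the frontier bucket, or every later bucket's count is the deterministic value $2^{d_j-\text{codim}_j}$ and leaks nothing'') only holds if the gaps dominate the fluctuation of the per-bucket codimension under the random partition. With buckets of size $m=n/R=\Theta\parens{n^{2/3}((c+2)\log n)^{1/3}}$, that fluctuation is $\Theta\parens{\sqrt{m\log n}}=\Theta\parens{n^{1/3}((c+2)\log n)^{2/3}}$ (Chernoff for negatively associated indicators), which is polynomially larger than your gaps. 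Consequently a single query whose average per-bucket codimension sits near the frontier can be ``critical'' for many undiscovered buckets at once: for buckets $j$ whose realized codimension lands within $O((c+2)\log n)$ of $d_j$, the count is the deterministic value only with probability $1-2^{-(d_j-\text{codim}_j)}$, which is not $1-n^{-\Omega(c)}$, so the union bound over $n^c$ queries collapses and the one-bucket-per-round induction breaks. This is exactly why the paper takes the thresholds $a_i$ spaced by $12\,n^{1/3}((c+2)\log n)^{2/3}$; that spacing is also what forces the number of buckets to be $\Theta\parens{(n/((c+2)\log n))^{1/3}}$ and hence where the $n^{1/3}$ in the statement actually comes from--in your write-up this quantity is asserted rather than derived.

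The second gap is how you handle the real-valued oracle answers. Your fallback--arguing the answer is ``low-entropy'' and leaks only $O(\log n)$ bits per query about undiscovered buckets--is both unsubstantiated and quantitatively insufficient: with $n^c$ queries per round you would concede up to $O(n^c\log n)$ leaked bits, vastly more than the roughly $(c+2)\log n$ bits of guessing needed to land in the last bucket's solution set, so no union bound closes from that accounting. The paper's key lemma (its \cref{lem:no-info-return-values}) is stronger and is what makes the induction work: conditioned on the first $k-1$ blocks, with probability $1-3n^{-(c+5/3)}$ \emph{per query} the oracle's answer equals an explicit formula depending only on blocks $1,\dots,k$, namely $\mu(H)=\parens{\prod_{i\le k}\mu_i(H_{S_i})}\cdot 2^{-\codim{H}+\codim{H_{S_1\cup\cdots\cup S_k}}+\sum_{i>k}a_i}$, so all answers in a round can be exactly simulated without touching the randomness of blocks beyond the frontier; zero information leaks outside the frontier block except on an event of total probability $3n^{-5/3}$ per round. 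To repair your proof you should replace the bit-counting step by such an exact-simulation lemma, and enlarge the gaps so that the partition-concentration bound makes the simulation formula hold for every one of the $n^c$ queries simultaneously.
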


    In the rest of this section, we use $H=(S,y)$, where $S\subseteq [n]$ and $y\in [q]^S$, to denote a hypercube by $H=\braces{x\in [q]^n \given x_S=y}$.
    For the abuse of notation, for any function $\mu:\braces{0,1}^k\to \braces{0,1}$ and any hypercube $H$, we define $\mu(H):=\sum_{x\in H} \mu(x)$ as the output of the counting oracle. 
    
    \paragraph{The (random) hard instances.} 
    We consider deterministic algorithms that make at most $n^c$ queries in each round, where $c<n^{1-\delta}$ for some constant $\delta \in (0,1]$. 
    We randomly partition the $n$ variables into $r=\frac{1}{4}\parens{\frac{n}{(c+2) \log n} }^{1/3}$ equal blocks $S_1, S_2, \cdots, S_r$, each with $m=n/r=4 n^{2/3} \parens{(c+2) \log n} ^{1/3}$ variables. 
    For each block $S_i$, we choose $a_i=i\cdot 12n^{1/3}\parens{(c+2)\log n}^{2/3}$ and define the set of true strings in this block using a random linear code with constraints $m-a_i$: first we independently and uniformly choose a matrix $B_i\in \braces{0,1}^{(m-a_i)\times m}$ and a vector $v_i\in \braces{0,1}^{m-a_i}$ at random for each $j\in [m]$; and then we define the boolean function $\mu_i$ as follows:
    \begin{align*}
        \forall x\in \braces{0,1}^{S_i}, \mu_i(x) = \1[B_ix=v_i],
    \end{align*}
    where all the operations are under $\F_2$.
    Then, the true strings of the entire function are defined as those projections in each of the blocks are all true, i.e., the entire function $\mu$ is then defined as the product of all $\mu_i$:
    \begin{align*}
        \forall x\in \braces{0,1}^n, \mu(x) = \prod_{i\in [r]} \mu_i(x_{S_i}).
    \end{align*}
    For any sub-hypercube $H$, parameterized by $S$ and $y$, we define $H$ restricted to $S_i$ as $H_{S_i}:=\braces{x\in \braces{0,1}^{S_i} \given x_{S\cap S_i} = y_{S\cap S_i}}$.
    Since $S_1,\cdots, S_r$ is a partition of the $n$ variables, we have $H=H_{S_1}\times H_{S_2}\times \cdots \times H_{S_r}$ according to the definition of $H$.
    With this fact, we can obtain the following lemma.
    \begin{lemma}
    \label{lem:count-oracle-value}
        For any sub-hypercube $H$ of $\braces{0,1}^n$, we have 
        \begin{align*}
            \mu(H) = \prod_{i\in [r]} \mu_i(H_{S_i}).
        \end{align*}
    \end{lemma}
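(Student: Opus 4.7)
The plan is to unfold both sides of the claimed identity using the definitions and observe that the sum over the product hypercube factorizes into a product of sums over the block-restricted hypercubes. This is essentially a Fubini-style calculation over the partition $[n]=S_1\sqcup\cdots\sqcup S_r$.

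First I would expand the left-hand side directly from the abuse-of-notation convention $\mu(H)=\sum_{x\in H}\mu(x)$. Using the product structure $H=H_{S_1}\times H_{S_2}\times\cdots\times H_{S_r}$ that was noted just before the lemma (which holds because the $S_i$ partition $[n]$ and the pinning in $H$ decomposes across the blocks), I can reindex the sum as
\[
\mu(H)=\sum_{x^{(1)}\in H_{S_1}}\cdots\sum_{x^{(r)}\in H_{S_r}}\mu\bigl(x^{(1)},\ldots,x^{(r)}\bigr),
\]
where $x^{(i)}\in\{0,1\}^{S_i}$ denotes the restriction of $x$ to block $S_i$.

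Next I would substitute the definition $\mu(x)=\prod_{i\in[r]}\mu_i(x_{S_i})=\prod_{i\in[r]}\mu_i(x^{(i)})$, and then pull the product outside the nested sums (each factor $\mu_i(x^{(i)})$ depends on exactly one of the summation indices). This yields
\[
\mu(H)=\prod_{i\in[r]}\sum_{x^{(i)}\in H_{S_i}}\mu_i\bigl(x^{(i)}\bigr)=\prod_{i\in[r]}\mu_i(H_{S_i}),
\]
where the last equality is again the same abuse-of-notation $\mu_i(H_{S_i})=\sum_{x^{(i)}\in H_{S_i}}\mu_i(x^{(i)})$ applied blockwise.

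There is no real obstacle here: the lemma is a routine consequence of (i) the product structure of $H$ across the partition and (ii) the multiplicative definition of $\mu$. The only care needed is bookkeeping: verifying that $H_{S_i}$ is the correct restriction (i.e., that pinning $x_S=y$ in $H$ is equivalent to independently pinning $x_{S\cap S_i}=y_{S\cap S_i}$ in each block, which follows because $S=\bigsqcup_i (S\cap S_i)$), and that the product in $\mu(x)$ has exactly one factor per block so the sum--product interchange is valid.
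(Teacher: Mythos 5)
Your proposal is correct and is essentially the paper's own proof: both expand $\mu(H)=\sum_{x\in H}\mu(x)$, use the product decomposition $H=H_{S_1}\times\cdots\times H_{S_r}$ from the partition, substitute $\mu(x)=\prod_{i\in[r]}\mu_i(x_{S_i})$, and factor the sum into $\prod_{i\in[r]}\sum_{x\in H_{S_i}}\mu_i(x)=\prod_{i\in[r]}\mu_i(H_{S_i})$. No differences worth noting.
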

    \begin{proof}
        According to the definition of the function $\mu$ and the definition of the counting oracles,
        \begin{multline*}
            \mu(H) = \sum_{x\in H} \mu(x) 
            = \sum_{x\in H} \prod_{i\in [r]} \mu_i(x_{S_i})
            \\
            =
            \sum_{x\in H_{S_1}\times \cdots \times H_{S_r}} \prod_{i\in [r]} \mu_i(x_{S_i})
            \\
            =
            \prod_{i\in [r]} \sum_{x\in H_{S_i}} \mu_i(x)
            =
            \prod_{i\in [r]} \mu_i(H_{S_i}).  \qedhere
        \end{multline*}
    \end{proof}

    For any sub-hypercube $H$ parameterized by $S$ and $y$, we define its codimension $\codim{H}$ as $|S|$, i.e., the number of variables whose values are fixed in the sub-hypercube. 
    For any function $\mu_i$, we can show that if the codimension of a query $H_{S_i}$ is $\Omega(\log n)$ greater or less than $a_i$, the query does not give any information about the randomness of $B_i, v_i$ in the construction with high probability. 
    In addition, the proof only uses the randomness of $B_1,v_1,\cdots, B_r,v_r$.

    \begin{lemma}
        \label{lem:no-info-queries}
        For any sub-hypercube $H_{S_i}$ of $\braces{0,1}^{S_i}$, if  $\codim{H_{S_i}}=d$, for any constant $c_1>0$, we have
        \begin{itemize}
            \item if $d<a_i-c_1\log n$, $\mu_i(H_{S_i}) = 2^{a_i-d}$ with probability at least $1-n^{-c_1}$, and  
            \item if $d>a_i+c_1\log n$, $\mu_i(H_{S_i}) = 0$ with probability at least $1-n^{-c_1}$,
        \end{itemize}
        where the probability is taken over the randomness of $B_i$ and $v_i$.
    \end{lemma}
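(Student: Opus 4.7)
The plan is to reduce the counting of $\mu_i(H_{S_i})$ to a standard question about random linear systems over $\F_2$ and then apply two elementary facts: a uniform random matrix with more columns than rows has full row rank with high probability, and a uniform random vector lies in a low-dimensional subspace with low probability. Fix a subset $T \subseteq S_i$ with $|T| = d$ and a partial assignment $z \in \{0,1\}^T$ such that $H_{S_i} = \{x \in \{0,1\}^{S_i} : x_T = z\}$. Write $B_i = [B_i^{T} \mid B_i^{S_i \setminus T}]$ by splitting its columns according to $T$ and $S_i \setminus T$. Then $B_i x = v_i$ for $x \in H_{S_i}$ is equivalent to
\[
    B_i^{S_i \setminus T}\, y = v_i - B_i^{T} z,
\]
where $y = x_{S_i \setminus T} \in \{0,1\}^{m-d}$. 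Setting $B' := B_i^{S_i \setminus T}$ and $v' := v_i - B_i^{T} z$, we see that $B'$ is a uniform random $(m - a_i) \times (m - d)$ matrix over $\F_2$, and $v'$ is a uniform random vector in $\F_2^{m-a_i}$ independent of $B'$ (since $v_i$ is uniform and independent of $B_i$, regardless of $z$ or $B_i^{T}$). So $\mu_i(H_{S_i})$ is just the number of solutions $y$ of the random linear system $B' y = v'$.

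For the first bullet, assume $d < a_i - c_1 \log n$, so $B'$ has $m - d$ columns and $m - a_i$ rows with $(m-d) - (m - a_i) > c_1 \log n$. A standard union-bound computation gives that the probability that $B'$ fails to have full row rank is at most
\[
    \sum_{k=0}^{m - a_i - 1} 2^{k - (m-d)} \leq 2^{-(a_i - d)} \cdot 2 \leq n^{-c_1},
\]
after adjusting $c_1$ by a constant (which is harmless for the statement). On the full-rank event, $B' y = v'$ has exactly $2^{(m-d) - (m-a_i)} = 2^{a_i - d}$ solutions for every right-hand side $v'$, giving $\mu_i(H_{S_i}) = 2^{a_i - d}$.

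For the second bullet, assume $d > a_i + c_1 \log n$. Condition on any realization of $B'$, whose column space $\mathrm{col}(B')$ is a subspace of $\F_2^{m - a_i}$ of dimension at most $m - d$. Since $v'$ is uniform and independent of $B'$,
\[
    \Pr\bigl[B' y = v' \text{ has a solution}\bigr] = \Pr\bigl[v' \in \mathrm{col}(B')\bigr] \leq \frac{2^{m-d}}{2^{m-a_i}} = 2^{a_i - d} < n^{-c_1},
\]
so $\mu_i(H_{S_i}) = 0$ with probability at least $1 - n^{-c_1}$.

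The only subtle point is verifying that $v' = v_i - B_i^{T} z$ is independent of $B' = B_i^{S_i \setminus T}$ and uniform over $\F_2^{m - a_i}$; this holds because $v_i$ is uniform and independent of $B_i$, so shifting by any quantity depending only on $B_i^{T}$ and $z$ preserves both the uniform law and the independence from $B'$. Once this bookkeeping is clear, the two computations above are routine, and there is no other real obstacle.
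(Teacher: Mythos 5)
Your proposal is correct and is essentially the paper's argument in a change of variables: where you eliminate the $d$ pinned coordinates to get a uniform system $B'y=v'$ and bound the probability that $B'$ lacks full row rank (respectively, that $v'$ lands in $\mathrm{col}(B')$, of size at most $2^{m-d}$), the paper equivalently appends the $d$ pinning equations $e_j^\top x=y_j$ to $B_i x=v_i$ and checks linear independence of the combined system (respectively, union-bounds $\P[B_ix=v_i]=2^{-(m-a_i)}$ over the $2^{m-d}$ points of $H_{S_i}$), with identical row-by-row union bounds and the same failure probability $2^{-|a_i-d|}\le n^{-c_1}$. Your extra factor of $2$ in the rank bound is in fact unnecessary, since $\sum_{k=0}^{m-a_i-1}2^{k-(m-d)}<2^{d-a_i}$, so no adjustment of $c_1$ is needed.
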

    \begin{proof}
        For any $x\in \braces{0,1}^{S_i}$ and any $B_i\in \braces{0,1}^{(m-a_i)\times m}$, $\P_{v_i\sim\braces{0,1}^{m-a_i}}{B_ix = v_i}=2^{-(m-a_i)}$.
        Therefore, we can upper bound the probability of $\mu_i(H_{S_i})\neq 0$ as follows.
        \begin{align*}
            \P_{B_i,v_i}{\mu_i(H_{S_i}) \neq 0} 
            &
            =
            \P_{B_i,v_i}{\exists x\in H_{S_i}, \mu_i(x) = 1}
            \\
            &
            \leq
            \sum_{x\in H_{S_i}} \E_{B_i,v_i}{B_ix=v_i}
            =
            2^{-(m-a_i)} \cdot \abs{H_{S_i}}
        \end{align*}
        Since $\codim{H_{S_i}} = d$, $\abs{H_{S_i}} = 2^{m-d}$. 
        We have $\P_{B_i,v_i}{\mu_i(H_{S_i}) \neq 0}\leq 2^{m-d-(m-a_i)} = 2^{a_i-d}$.
        If $d>a_i+c_1\log n$, we have $\mu_i(H_{S_i}) \neq 0$ with probability at most $n^{-c_1}$.

        On the other hand, we consider the number of solutions $x\in \braces{0,1}^{S_i}$ for $B_ix=v_i$ when $d\leq a_i$.
        For any sub-hypercube $H_{S_i}$ which is parameterized by $S\subseteq S_i$ and $y\in \braces{0,1}^S$ (i.e., $H_{S_i}=\braces{x\in \braces{0,1}^{S_i} \given x_{S}=y_{S}}$) and has codimension $d$ (i.e., $|S|=d$), we can characterize $H_{S_i}$ by $d$ linear equations: $
            \forall j\in S, e_j^Tx = y_j$,
        where $e_j$ denotes the indicator vector having value $1$ in the $j$-th entry and having value $0$ in all other entries.
        Therefore, the set $\braces{x\in\braces{0,1}^{S_i} \given x\in H_{S_i}, B_ix=v_i}$ can be characterized by $d+m-a_i$ linear equations: $
            \forall j\in S, e_j^Tx = y_j$ and $B_ix = v_i$.
        
        \begin{lemma}
        \label{lem:numsol-linear-equ}
            For any $m\leq n$, $A\in \braces{0,1}^{m\times n}$ and $b\in \braces{0,1}^m$, if $\rank(A)=m$, then there are $2^{n-m}$ solutions $x\in \braces{0,1}^n$ for the linear equation $Ax=b$ (under $\F_2$).
        \end{lemma}

        According to the above \cref{lem:numsol-linear-equ}, if vectors in $\braces{e_j\given j\in S}$ and in rows of $B_i$ are linearly independent under $\F_2$, $\mu_i(H_{S_i}) = \abs{\braces{x\in\braces{0,1}^{S_i} \given x\in H_{S_i}, B_ix=v_i}} = 2^{a_i-d}$.
        It is clear that vectors in $\braces{e_j\given j\in S}$ are linearly independent.
        Consider we start with $V=\braces{e_j\given j\in S}$ and insert rows in $B_i$ into $V$ one by one. 
        When the vectors in $V$ are linearly independent and we insert one row of $B_i$ into $V$, $V$ becomes linearly dependent only when the row is a linear combination of the vectors in $V$.
        Since there are at most $2^{|V|}$ such linear combinations under $\F_2$, the probability that $V$ remains linearly independent after inserting the row is $1-2^{|V|-m}$.
        After inserting all the $m-a_i$ rows into $V$, $V$ remains linearly independent with probability 
        \begin{align*}
            \prod_{k=d}^{d+m-a_i-1} 1-2^{k-m} \geq 1 - \sum_{k=d}^{d+m-a_i-1} 2^{k-m} \geq 1 - 2^{d-a_i}.
        \end{align*}
        Therefore, we have $\mu_i(H_{S_i})=2^{a_i-d}$ with probability at least $1-2^{d-a_i}$.
        In particular, if $d<a_i-c_1\log n$, we have $\mu_i(H_{S_i})=2^{a_i-d}$ with probability at least $1-n^{-c_1}$.
    \end{proof}

    On the other hand, the random partition $S_1,S_2,\cdots, S_r$ guarantees that any hypercube has approximately equal codimension in each block with high probability.

    \begin{lemma}
        \label{lem:balanced-rand-partition}
        For any $1\leq k<i\leq n$, any $c_2>0$, and any sub-hypercube $H$, the probability that $\codim{H_{S_i}}$ is in the range $\codim{H_{S_k\cup \cdots\cup S_r}}/\parens{r-k+1} \pm \sqrt{3c_2m\log n}$ is at least $1-2n^{-c_2}$, where the randomness is taken over the random partition of $S_k \cup S_{k+1} \cup \cdots \cup S_r$.
    \end{lemma}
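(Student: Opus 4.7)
The plan is to reduce this concentration claim to a hypergeometric tail bound. Observe that $\codim{H_{S_i}} = \card{S_i \cap T}$ where $T := S\cap (S_k\cup\cdots\cup S_r)$ is the subset of pinned coordinates lying inside the randomly partitioned pool. Since $S_i$ is marginally a uniformly random $m$-subset of that pool, $\card{S_i\cap T}$ is exactly a hypergeometric random variable, and the claim is then standard concentration.

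First I would condition on $S_1,\dots,S_{k-1}$, which are untouched by the randomness in the statement. Once these are fixed, the pool $P := S_k \cup \cdots \cup S_r$ has size $(r-k+1)m$ and the marked set $T\subseteq P$ is deterministic with $\card{T} = \codim{H_{S_k\cup\cdots\cup S_r}}$. Second, the random equal-block partition of $P$ induces, marginally on $S_i$, the uniform distribution on $m$-subsets of $P$, so $X := \card{S_i\cap T}$ is hypergeometric with
\[
    \E*{X} = \frac{m\cdot \card{T}}{(r-k+1)m} = \frac{\codim{H_{S_k\cup\cdots\cup S_r}}}{r-k+1},
\]
matching the center of the target window. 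Third, I would invoke Hoeffding's inequality for sampling without replacement, which gives $\P*{\abs{X - \E*{X}} \geq t} \leq 2\exp(-2t^2/m)$; plugging in $t = \sqrt{3c_2 m\log n}$ yields $2\exp(-6c_2 \log n) = 2n^{-6c_2} \leq 2n^{-c_2}$, as required.

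The main technical point is merely identifying the distribution correctly: after conditioning on the earlier blocks $S_1,\dots,S_{k-1}$, the only remaining randomness touching $\codim{H_{S_i}}$ is which $m$-subset of $P$ becomes $S_i$, so the marginal law is hypergeometric even though the blocks $S_k,\dots,S_r$ are jointly correlated. Beyond this check, the argument is a direct invocation of a classical tail bound, with substantial slack baked into the constant $\sqrt{3}$ relative to what Hoeffding delivers.
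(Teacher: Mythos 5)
Your proposal is correct, and it reaches the same quantitative conclusion as the paper by a slightly different route. Both arguments hinge on the same observation: conditioned on $S_1,\dots,S_{k-1}$, the codimension of $H_{S_i}$ is $\lvert S_i\cap T\rvert$ for a fixed marked set $T$ inside the pool $P=S_k\cup\cdots\cup S_r$, with mean exactly the center of the target window. Where you differ is the concentration tool: the paper keeps the indicators $Z_\ell=\mathbf{1}[\ell\in S_i]$ for $\ell\in T$, argues they follow a permutation distribution and are hence negatively associated, and applies a multiplicative Chernoff bound (then coarsely bounds the mean $d'/(r-k+1)$ by $m$ to land at $2n^{-c_2}$); you instead view $S_i$ marginally as a uniform $m$-subset of $P$, so $\lvert S_i\cap T\rvert$ is hypergeometric, and invoke Hoeffding's inequality for sampling without replacement, $\Pr[\lvert X-\mathbb{E}X\rvert\geq t]\leq 2\exp(-2t^2/m)$, which with $t=\sqrt{3c_2m\log n}$ gives the stronger $2n^{-6c_2}\leq 2n^{-c_2}$. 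Your version sidesteps the negative-association step entirely and has more slack in the exponent; the paper's Chernoff form would in principle benefit when the pinned codimension (hence the mean) is much smaller than $m$, but since the proof bounds the mean by $m$ anyway, the two arguments are of equal strength for the lemma's purposes. One small point worth making explicit if you write this up: the exchangeability that makes the marginal of $S_i$ uniform over $m$-subsets of $P$ (regardless of $i\geq k$) is exactly the symmetry of the random equal-block partition, which you correctly identify but should state as the justification for treating a single block in isolation.
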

    \begin{proof}
        For convenience, let $T = S_k\cup \cdots \cup S_r$ and $d'=\codim{H_{T}}$.
        Suppose the hypercube $H_{T}$ is parameterized by $S_{T}\subseteq T$ and $y_{T}\in \braces{0,1}^T$, i.e., $H_T = \braces{x\in \braces{0,1}^T\given x_T=y_T}$.
        Because $\codim{H_{T}}=|S_{T}|\leq |T|$, we have $d'\leq m(r-k+1)$.
        For each $\ell \in T$, let $Z_{\ell}$ denote the indicator whether the variable $\ell$ is in $S_i$. 
        Because $|T|=m(r-k+1)$ and $S_k,S_{k+1},\cdots,S_r$ is a uniform partition of $T$,
        for any $\ell\in T$, the probability that $Z_{\ell}=1$ is $\frac{1}{r-k+1}$.
        In addition, variables in $\braces{Z_{\ell}}_{\ell\in T}$ follow a permutation distribution and are thus negatively associated. 
        
        Let $Z=\sum_{\ell\in S_T} Z_{\ell}$ denote the number of variables in $S_i\cap S_T$.
        It is clear that $\E{Z}=\frac{d'}{r-k+1}$.
        According to the definition of codimension, we have $Z = \codim{H_{S_i}}$.
        Because of the Chernoff bound and $d'\leq m(r-k+1)$, for any $c_2>0$,
        \Tag{
        \[
            \P*{\abs*{Z-\frac{d'}{r-k+1}} > \sqrt{3c_2m\log n}} 
            \leq 
            2\exp\parens*{-\frac{3c_2m\log n}{3d'/(r-k+1)}}
            \leq 
            2\exp(-c_2\log n) = 2n^{-c_2}.
        \]
        }%
        \Tag<sigconf>{
        \begin{multline*}
            \P*{\abs*{Z-\frac{d'}{r-k+1}} > \sqrt{3c_2m\log n}} 
            \leq \\
            2\exp\parens*{-\frac{3c_2m\log n}{3d'/(r-k+1)}}
            \leq 
            2\exp(-c_2\log n) = 2n^{-c_2}.
        \end{multline*}
        }%
        Therefore, the probability of $\codim{H_{S_i}}$ being $d'/(r-k+1) \pm \sqrt{3c_2m\log n}$ is at least $1-2n^{-c_2}$.
    \end{proof}

    Putting the previous lemmas together, we obtain the following key lemma for the hardness of parallel search via counting. 
    If we only reveal the information about the first blocks $k-1$ (i.e., the partition $S_1,\cdots, S_{k-1}$ and the parameters to define the true strings $B_1,v_1,\cdots, B_{k-1},v_{k-1}$), the return value of any query is determined solely by the information about the first $k$ blocks with high probability.
    This lemma implies that without any information about the $k$-th block and its subsequent blocks, any algorithm that uses one round of queries can only learn about the information in the $k$-th block (with high probability).

    \begin{lemma}
        \label{lem:no-info-return-values}
        Fix any $k\in [r-1]$, any hypercube $H$ and any realization of $S_1,B_1,v_1, \cdots, S_{k-1}, B_{k-1}, v_{k-1}$. With probability at least $1-3n^{-(c+5/3)}$, 
        \begin{align}
            \label{eqn:no-info-query-value}
            \mu(H) = \parens*{\prod_{i\in [k]} \mu_i(H_{S_i})} \cdot 2^{-\codim{H} + \codim{H_{S_1\cup S_2\cup \cdots \cup S_k}} + \sum_{i>k} a_i }.
        \end{align}
        where the probability is taken over the random partition of $S_k\cup S_{k+1} \cup \cdots \cup S_r$ and the randomness of $B_k,v_k, B_{k+1}, v_{k+1}, \cdots, B_r, v_r$.
    \end{lemma}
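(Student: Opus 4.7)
The plan is to decompose $\mu(H)$ using \cref{lem:count-oracle-value}, concentrate the block codimensions $\codim{H_{S_i}}$ around their common average $\bar d := \codim{H_{S_k\cup\cdots\cup S_r}}/(r-k+1)$ via \cref{lem:balanced-rand-partition}, and then split into two cases according to whether $\bar d$ is substantially smaller than $a_{k+1}$ (in which case the target formula holds term-by-term) or not (in which case both sides collapse to zero).

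First, set $c_0 := c+2$ and apply \cref{lem:balanced-rand-partition} with parameter $c_0$ to each $i \in \{k,k+1,\dots,r\}$. A union bound over at most $r \le n^{1/3}$ values of $i$ shows that the event
\[
\mathcal{E} := \braces*{\codim{H_{S_i}} \in \bar d \pm \sqrt{3 c_0 m \log n} \text{ for all } k \le i \le r}
\]
holds with probability at least $1 - 2 r \cdot n^{-c_0} \ge 1 - 2 n^{-(c+5/3)}$; we condition on $\mathcal{E}$ henceforth.

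In Case A, assume $\bar d + \sqrt{3 c_0 m \log n} < a_{k+1} - c_0 \log n$. On $\mathcal{E}$, every $i > k$ satisfies $\codim{H_{S_i}} < a_i - c_0 \log n$ (using $a_i \ge a_{k+1}$), so the first bullet of \cref{lem:no-info-queries}, union-bounded over at most $r$ blocks with additional failure probability $r n^{-c_0} \le n^{-(c+5/3)}$, gives $\mu_i(H_{S_i}) = 2^{a_i-\codim{H_{S_i}}}$ for each $i>k$. Multiplying these into the factorization of \cref{lem:count-oracle-value} and using $\sum_{i>k}\codim{H_{S_i}} = \codim{H} - \codim{H_{S_1\cup\cdots\cup S_k}}$ yields exactly \cref{eqn:no-info-query-value}. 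In Case B, $\bar d \ge a_{k+1} - c_0 \log n - \sqrt{3 c_0 m \log n}$, which on $\mathcal{E}$ implies $\codim{H_{S_k}} > a_{k+1} - c_0 \log n - 2\sqrt{3 c_0 m \log n}$; substituting $m = 4 n^{2/3}((c+2)\log n)^{1/3}$ and $a_{k+1} - a_k = 12 n^{1/3}((c+2)\log n)^{2/3}$ verifies $2 c_0 \log n + 2\sqrt{3 c_0 m \log n} < a_{k+1}-a_k$ for $n$ large enough, so $\codim{H_{S_k}} > a_k + c_0 \log n$. The second bullet of \cref{lem:no-info-queries} then forces $\mu_k(H_{S_k}) = 0$ with failure probability at most $n^{-c_0} \le n^{-(c+5/3)}$, in which case both $\mu(H)$ and the right-hand side of \cref{eqn:no-info-query-value} vanish. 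Summing the three failure probabilities yields the claimed $3 n^{-(c+5/3)}$ bound.

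The main obstacle I anticipate is the gap condition in Case B: verifying that the Chernoff-style fluctuation $\sqrt{3 c_0 m \log n}$ is strictly less than half the spacing $a_{k+1}-a_k$, which is precisely why the construction places the $a_i$'s with an extra polylog factor of separation relative to the $n^{1/3}$-scale fluctuation. The remaining work is bookkeeping---algebraic simplification of the exponent in Case A, and union bounds over at most $r \le n^{1/3}$ blocks.
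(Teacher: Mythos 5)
Your proposal is correct and follows essentially the same route as the paper's proof: factorize $\mu(H)$ via \cref{lem:count-oracle-value}, concentrate the block codimensions around the average $\codim{H_{S_k\cup\cdots\cup S_r}}/(r-k+1)$ using \cref{lem:balanced-rand-partition}, and split into two cases where either every block $i>k$ falls under the first bullet of \cref{lem:no-info-queries} (giving the formula exactly) or block $k$ falls under the second bullet (making both sides zero). The only difference is cosmetic: the paper places the case-split threshold at the midpoint $(a_k+a_{k+1})/2$ while you place it at $a_{k+1}$ minus the deviation terms, and both choices leave enough room because the spacing $a_{k+1}-a_k=12n^{1/3}((c+2)\log n)^{2/3}$ dominates the fluctuation $2\sqrt{3(c+2)m\log n}$.
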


    \begin{proof}
        According to \cref{lem:count-oracle-value}, it suffices to show that with probability $1-n^{-(c+5/3)}$, we have
        \begin{equation}
            \prod_{i=k}^{r} \mu_i(H_{S_i}) = \mu_k(H_{S_k}) \cdot 2^{-\codim{H} + \codim{H_{S_1\cup S_2\cup \cdots \cup S_k}} + \sum_{i>k} a_i }.
            \label{eqn:prods-of-mu}
        \end{equation}
        Let $d' = \codim{H_{S_k\cup S_{k+1}\cup \cdots \cup S_{r}}} $. Next, we prove the lemma by discussing two cases: $d'/(r-k+1)\geq \frac{a_{k-1}+a_k}{2}$ and $d'/(r-k+1) < \frac{a_{k-1}+a_k}{2}$.
        Before the discussion, recall that we define $m=4 n^{2/3} \parens{(c+2) \log n} ^{1/3}$ and $a_i=i\cdot 12n^{1/3}\parens{(c+2)\log n}^{2/3}$ for each $i\in [r]$.

        \paragraph{Case \#1: when $d'/(r-k+1)\geq \frac{a_{k}+a_{k+1}}{2}$.} According to the definition of $a_k$ and $a_{k+1}$, $d'/(r-k+1)\geq a_k+6n^{1/3}\parens{(c+2)\log n}^{2/3}$.
        Because of \cref{lem:balanced-rand-partition}, with probability at least $1-2n^{-(c+2)}$, 
        \begin{multline*}
            \codim{H_{S_k}} 
            \geq 
            \frac{d'}{r-k+1} - \sqrt{3(c+2)m\log n} 
            \geq \\
            a_k + 6n^{1/3}((c+2)\log n)^{2/3} - 2\sqrt{3} n^{1/3}((c+2)\log n)^{2/3}
            > \\
            a_k + n^{1/3}((c+2)\log n)^{2/3}
            >
            a_k + (c+2)\log n.
        \end{multline*}
        Because of \cref{lem:no-info-queries}, supposing 
            $\codim{H_{S_k}} > a_k + (c+2)\log n$, we have $\mu_k(H_{S_k})=0$ with probability at least $1-n^{-(c+2)}$. 
        Therefore, with probability at least $1-3n^{-(c+2)}$, both the LHS and RHS of \cref{eqn:prods-of-mu} equal $0$.

        \paragraph{Case \#2: when $d'/(r-k+1) < \frac{a_{k}+a_{k+1}}{2}$.}
        According to the definition of $a_k$ and $a_{k+1}$, $d'/(r-k+1)< a_{k+1}-6n^{1/3}\parens{(c+2)\log n}^{2/3}$.
        Because of \cref{lem:balanced-rand-partition} and the union bound, with probability at least $1-2n^{-(c+5/3)}$, for all $j>k$, 
        \begin{multline*}
            \codim{H_{S_j}} 
            \leq 
            \frac{d'}{r-k+1} + \sqrt{3(c+2)m\log n} 
            \leq \\
            a_k - 6n^{1/3}((c+2)\log n)^{2/3} + 2\sqrt{3} n^{1/3}((c+2)\log n)^{2/3}
            < \\
            a_j - n^{1/3}((c+2)\log n)^{2/3}
            <
            a_j - (c+2)\log n.
        \end{multline*}
        Because of \cref{lem:no-info-queries}, supposing 
            $\codim{H_{S_j}} < a_j - (c+2)\log n$ for any $j>k$, we have $\mu_j(H_{S_j})=2^{a_j-\codim{H_{S_j}}}$ for any $j>k$ with probability at least $1-n^{-(c+2)}$. 
        Therefore, with probability at least $1-3n^{-(c+5/3)}$, we have
        \begin{multline*}
            \prod_{i=k}^r \mu_i(H_{S_i}) 
            = 
            \mu_k(H_{S_k}) \cdot \prod_{j=k+1}^r 2^{a_j-\codim{H_{S_j}}}
            = \\
            \mu_k(H_{S_k}) \cdot 2^{\sum_{j>k} a_j-\sum_{j>k}\codim{H_{S_j}}}.
        \end{multline*}
        Since $S_1,S_2,\cdots,S_r$ is a partition, we have $\sum_{j>k} \codim{H_{S_j}} = \codim{H_{S_{k+1}\cup \cdots \cup S_r}} = \codim{H} - \codim{H_{S_1\cup \cdots \cup S_k}}$.
        Hence, we obtain \cref{eqn:prods-of-mu} for this case.
    \end{proof}

    Finally, we can establish the main theorem of this section.

    \begin{proof}[Proof of \cref{thm:hardness-search}]
        We show the following statement by induction: for any $i\in [r]$, given the sets $S_1,S_2,\cdots, S_{i-1}$ and information $B_1,v_1,\cdots, B_{i-1},v_{i-1}$, any deterministic algorithm can only find a solution $x$ with probability at most $3(r-i+1)n^{-5/3}$ after $r-i$ rounds.
        Note that given sets $S_1,S_2,\cdots, S_{i-1}$ and $B_1,v_1,\cdots, B_{i-1},v_{i-1}$, the remaining sets $S_i,\cdots, S_r$ form a uniform random partition of $[n]\setminus\parens{S_1\cup\cdots\cup S_{i-1}}$ and the remaining randomness $B_i,v_i,\cdots,B_r,v_r$ are uniformly at random.
        When $i=1$, the statement is equivalent to any deterministic algorithm cannot find a solution with probability $n^{-4/3}$ after $r$ rounds of queries.
        According to Yao's minimax principle, this implies that, for any randomized algorithm, there exists an instance such that the algorithm can only find a solution with probability at most $n^{-4/3}$ after $r-1$ rounds of queries.
        Next, we prove these statements to finish our proof.
        
        The base case is $i=r$. 
        With no queries, any deterministic algorithm returns a fixed $x^*$. 
        Since $v_r\in \F_2^{m-a_r}$ is uniformly random, $m=4n^{2/3}((c+2)\log n)^{1/3}$ and $a_r = 3n^{2/3}((c+2)\log n)^{1/3}$, the probability that $x^*$ is a solution is $\P{B_rx^*=v_r}<2^{a_i-m}<n^{-5/3}$.

        Suppose we have shown for $i=k+1$ (where $1\leq k\leq r-1$).
        Consider any deterministic algorithm $\ALG$.
        Given $S_1,S_2,\cdots, S_{k-1}$ and $B_1,v_1,\cdots, B_{k-1},v_{k-1}$, when $\ALG$ finds a solution $x^*$ in $r-k$ rounds, at least one of the following events occurs.
        \begin{itemize}
            \item There exists an $\ALG$'s first-round query such that its return value does not follow \cref{eqn:no-info-query-value}.
            \item Given $S_1,S_2,\cdots, S_k$ and $B_1,v_1,\cdots, B_k,v_k$, simulating the round $1$ queries of $\ALG$ by \cref{eqn:no-info-query-value} and running $\ALG$ from round $2$, we can find a solution $x^*$ in $r-k-1$ rounds.
        \end{itemize}
        In the first round of the algorithm, according to \cref{lem:no-info-return-values}, with probability at least $1-3n^{-5/3}$, the return values of all the $n^c$ queries follow \cref{eqn:no-info-query-value}.
        Therefore, the probability that the first event occurs is at most $3n^{-5/3}$.
        Also, note that \cref{eqn:no-info-query-value} is fully determined by the the information in the first $k$ blocks -- $S_1,S_2,\cdots, S_k$ in the partition and the bits $B_1,v_1,\cdots, B_k,v_k$ used in the construction of the first $k$ random linear codes.
        We can correctly simulate the first-round queries in the second event.
        Due to the induction hypothesis, the second event occurs with probability at most $3(r-k)n^{-5/3}$.
        Because of the union bound, $\ALG$ can find a solution with probability at most $3(r-k+1)n^{5/3}$ using $r-k$ rounds of queries.
    \end{proof}

	\Tag{\section{Lower bound of \cref{alg:sample-on-hypergrid2}}\label{sec:tightness}
    \label{sec:algo-lb}
In this section, we show that our analysis of \cref{alg:sample-on-hypergrid2} is tight up to polylogarithmic factors. For concreteness, we assume the \textsc{UniversalCoupler} used is \textsc{MinCoupler}.
\begin{theorem}
    \label{thm:lb-for-our-algorithm}
    There exists an instance such that \cref{alg:sample-on-hypergrid2} terminates in $\Omega(\frac{n^{2/3}}{\log n})$ rounds with probability at least $0.99$.
\end{theorem}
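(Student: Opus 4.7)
The plan is to exhibit a distribution $\mu$ on $\{0,1\}^n$ on which \cref{alg:sample-on-hypergrid2} provably takes $\Omega(n^{2/3}/\log n)$ rounds with high probability, via a per-round expected-progress bound combined with Markov's inequality.

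First I would construct the hard instance $\mu$: a distribution engineered so that, for a uniformly random permutation $\sigma$ and a random pinning $X_{\sigma([a])}$, the total-variation distance between the marginal of $X_{\sigma(i)}$ conditioned on $X_{\sigma([i-1])}$ and on $X_{\sigma([a])}$ saturates the pinning-lemma bound (\cref{cor:pinning-on-sqr-prob-diff}) at the scale $\theta = n^{1/3}$. A natural candidate is a distribution defined by random sparse linear constraints over $\F_2$ at the appropriate density, or a hierarchical/nested variant of the construction used in \cref{sec:hardness}.

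Next, for this $\mu$, I would argue a per-round progress bound: conditioned on any history with $a^{t-1}=a$, $\E{a^{t}-a^{t-1}} = O(n^{1/3})$. The per-position disagreement probability $\P{y^t_{\sigma(i)} \neq x^t_{\sigma(i)}}$ is, by robustness of the universal coupler (\cref{lem:uni-coupler-gurantee}), bounded by the total-variation distance between the two conditional marginals used by the coupler. Summing these probabilities across $i \in (a, a + O(n^{1/3})]$ on the constructed $\mu$ should give a constant expected number of disagreements, so the first disagreement (and hence $a^t - a^{t-1}$) lies within $O(n^{1/3})$ of $a$ in expectation.

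Finally, I would combine via Markov. Let $T^{*} = \lceil n^{2/3}/(C \log n) \rceil$ for a suitable constant $C$. By linearity of expectation applied to the per-round bound, $\E{a^{T^{*}}} \leq T^{*} \cdot O(n^{1/3}) = O(n/\log n) \leq n/100$ for large $n$. Markov's inequality then gives $\P{a^{T^{*}}=n} \leq 1/100$, and since \cref{alg:sample-on-hypergrid2} terminates only when $a$ reaches $n$ (or when $x^t=y^t$, which forces $a$ to jump to $n$), we conclude that it runs for at least $T^{*}$ rounds with probability at least $0.99$.

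The main obstacle is the first step, designing $\mu$ with the right scale of dependencies. Simple block-structured instances, in which $[n]$ is partitioned into blocks of size $b$ with fully correlated coordinates within each block, give only $\Theta(\sqrt{n/b})$ total rounds: a birthday argument shows the number $N$ of undiscovered blocks decreases by $\sqrt{N}$ per round, integrating to $\Theta(\sqrt{k}) = \Theta(n^{1/3})$ rounds at the optimal block size $b = n^{1/3}$. At the other extreme, dense random linear codes produce no detectable TV distance until $a$ is near the full co-dimension, so the algorithm makes one big jump. The construction must therefore maintain a ``live'' correlation structure across all $\Omega(n^{2/3})$ rounds, and the analysis must carefully track how the random permutation, the coupler's randomness, and the structure of $\mu$ interact to keep the per-round advance $O(n^{1/3})$ throughout the execution.
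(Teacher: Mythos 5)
There are several genuine gaps here, and the most serious one is directional: your per-round argument invokes the robustness guarantee (\cref{lem:uni-coupler-gurantee}), which is an \emph{upper} bound on the probability that the coupler's outputs disagree, whereas capping the algorithm's progress requires a \emph{lower} bound on the probability that a disagreement occurs somewhere in the window $(a, a+O(n^{1/3})]$. Moreover, ``constant expected number of disagreements in the window'' does not imply that the first disagreement lands in the window: with constant probability there may be no disagreement at all, in which case the round advances far beyond $a+O(n^{1/3})$. The paper has to work for exactly this step: it proves a dedicated anti-coupling lemma (\cref{lem:sure-mistakes-coupler}) exploiting the explicit structure of \textsc{MinCoupler} --- a seed set $\mathcal{R}(\mu,\delta)$ of measure $\delta$ on which a mistake is forced with constant probability whenever $\upsilon(x)<(1-\delta)\mu(x)$ on constant $\mu$-mass --- and the authors explicitly note they do not know whether the theorem holds for an arbitrary universal coupler, so an argument that never opens up the coupler (as yours does) is unlikely to go through. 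Relatedly, you never actually produce the hard instance: you acknowledge that block instances and dense random codes both fail, and the ``sparse $\F_2$ constraints / nested hardness construction'' is left as a hope. The paper's instance is quite different in character: $X_i=\mathsf{round}(\langle y_i,z\rangle)$ for i.i.d.\ Gaussian directions $y_i$ and a common Gaussian $z$, discretized at resolution $n^{-4}$, and the point of \cref{lem:ub-diff-between-conditionings} is that for \emph{every} pinning point $a$ and every $i$ with $i-a\gtrsim n^{1/3}$, the two conditional laws differ multiplicatively by $1-\Omega(n^{-1/3})$ on constant mass with constant probability --- the ``live at every $a$'' property your proposal identifies as the obstacle but does not supply.

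The final Markov step also hides a requirement you cannot meet with the tools you list. To conclude $\E[a^{T^*}]=O(T^*n^{1/3})$ you need the conditional expected per-round advance to be $O(n^{1/3})$ \emph{including} the termination event $x^t=y^t$, which you model as a jump to $n$; that forces the per-round probability of termination or of a huge jump to be $O(n^{-2/3})$, i.e.\ polynomially small, for every history. A constant-probability statement per round (which is all a TV-type or coupler-mistake argument naturally yields, cf.\ \cref{lem:constant-fail-prob-instances}) is insufficient, and the paper flags precisely this: it amplifies by building the instance from $g=\Theta(\log n)$ independent Gaussian groups so that the probability of advancing more than $O(n^{1/3}\log n)$ in any single round is $n^{-5}$ (\cref{lem:high-fail-prob-instances}), and then union-bounds over rounds; this amplification is also the source of the $1/\log n$ loss in the stated bound. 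Your proposal has no counterpart of this boosting, so even granting a correct instance and a correct per-round disagreement lower bound, the plan as written would only show that the algorithm is slow with constant probability, not with probability $0.99$.
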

The proof of this lower bound uses the explicit construction of the universal coupler used in the algorithm (\textsc{MinCoupler}), but we expect it to also hold with small modifications for \textsc{GumbelTrick}. We are unaware whether this lower bound holds for an arbitrary universal coupler.

\paragraph{Errors of the universal coupler.} Given a distribution $\mu$, we identify a set of randomness $r$ where \textsc{MinCoupler} produces different samples with constant probability for $\mu$ and $\upsilon$
when the second distribution $\upsilon$ is very different from $\mu$.
Here, we say $\upsilon$ is very different from $\mu$, if we sample $x$ from $\mu$, there is a constant probability of having $\upsilon(x)< (1-\delta)\mu(x)$ for some reasonably large $\delta$.
Recall the construction of \textsc{MinCoupler} -- we encode $r$ as pairs $(x_1,p_1), (x_2,p_2), \cdots \in [q]\times [0,1]$, choose the minimum index $i^*$ such that $p_{i^*}\leq \mu(x_{i^*})$ and let $x_{i^*}$ be the output of the universal coupler. 
The output $x_{i^*}$ follows the distribution $\mu$.
If we consider the restricted set of $r$ such that $p_{i^*}\geq (1-\delta)\mu(x_{i^*})$, for any distribution $\upsilon$ that is very different from $\mu$, \textsc{MinCoupler} will produce different samples for $\mu$ and $\upsilon$ with constant probability. 
We formalize the above argument as the following lemma, whose proof is deferred to \cref{proof:sure-mistakes-coupler}.

\begin{lemma}[Sure mistakes made by $\textsc{MinCoupler}$]
    \label{lem:sure-mistakes-coupler}
    Consider any distribution $\mu\in \Delta_q$ and any $\delta>0$.
    Suppose $\mathcal{R}(\mu, \delta)$ is the set of randomness $r=((x_1, p_1), (x_2,p_2), \cdots)$ used by $\textsc{MinCoupler}$ such that $p_{i^*}\geq (1-\delta) \mu(x_{i^*})$, where $i^*=\min\{i:p_{i} \leq \mu(x_{i})\}$ is the index chosen by \textsc{MinCoupler}.

    Then, for any distribution $\upsilon\in \Delta_q$, we have 
    \[
        \P *_{r\sim \mathcal{R}(\mu, \delta)}{
            \textsc{MinCoupler}(\mu, r) \neq \textsc{MinCoupler}(\upsilon, r)
        }
        \geq
        \frac{1-\upsilon_{\max}}{2} \cdot \P_{x\sim \mu}{\upsilon(x) < (1-\delta)\mu(x)},
    \]
    where $\upsilon_{\max}$ is the maximum mass $\max_{x\in [q]} \upsilon(x)$ of the distribution $\upsilon$.
\end{lemma}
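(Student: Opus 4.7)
The plan is to prove the stronger lower bound
\[
    \P*_{r\sim \mathcal{R}(\mu,\delta)}{\textsc{MinCoupler}(\mu,r)\neq \textsc{MinCoupler}(\upsilon,r)} \geq (1-\upsilon_{\max})\cdot \P_{x\sim \mu}{\upsilon(x)<(1-\delta)\mu(x)},
\]
which implies the statement with a factor of $2$ to spare. Write $y=\textsc{MinCoupler}(\mu,r)=x_{i^*}$ and $y'=\textsc{MinCoupler}(\upsilon,r)=x_{j^*}$, where $j^*=\min\braces*{i\given p_i\leq \upsilon(x_i)}$.

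First I would check that conditioning on $r\in \mathcal{R}(\mu,\delta)$ leaves the marginal of $y$ unchanged. By independence of the pairs,
\[
    \P*{i^*=i,\,x_i=x,\,p_i\in [(1-\delta)\mu(x),\mu(x)]} = (1-1/q)^{i-1}\cdot (1/q)\cdot \delta \mu(x),
\]
so summing over $i\geq 1$ gives $\delta\mu(x)$, and summing further over $x$ gives $\P{r\in \mathcal{R}(\mu,\delta)}=\delta$. Therefore $\P*{y=x\given \mathcal{R}(\mu,\delta)}=\mu(x)$, which yields $\P*{\upsilon(y)<(1-\delta)\mu(y)\given \mathcal{R}(\mu,\delta)}=\P_{x\sim\mu}{\upsilon(x)<(1-\delta)\mu(x)}$.

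Next I would condition further on the event $\mathcal{B}\colon \upsilon(y)<(1-\delta)\mu(y)$. The definition of $\mathcal{R}(\mu,\delta)$ forces $p_{i^*}\geq (1-\delta)\mu(x_{i^*})>\upsilon(x_{i^*})$, so position $i^*$ does not trigger for $\upsilon$ and $j^*\neq i^*$. If $j^*<i^*$, minimality of $i^*$ for $\mu$ forces $p_{j^*}>\mu(x_{j^*})$ and minimality of $j^*$ for $\upsilon$ gives $p_{j^*}\leq \upsilon(x_{j^*})$; together $\upsilon(x_{j^*})>\mu(x_{j^*})$, and since $y$ satisfies $\upsilon(y)<\mu(y)$ on $\mathcal{B}$, this forces $x_{j^*}\neq y$ deterministically on this branch.

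The main obstacle is the subcase $j^*>i^*$, which requires carefully disentangling which random pairs are constrained by the various conditionings. The key observation is that, conditional on $i^*=I$ and the value of $(x_I,p_I)$, the pairs $\braces*{(x_j,p_j)}_{j>I}$ are i.i.d.\ uniform and independent of the pairs with $j<I$; moreover the event $j^*>i^*$ is determined by the pairs with $j\leq I$. Hence, conditional on $j^*>i^*$ together with $y=x$, the index $j^*=\min\braces*{j>I:p_j\leq \upsilon(x_j)}$ and its output $x_{j^*}$ are produced by a fresh run of \textsc{MinCoupler} on $\upsilon$, so $x_{j^*}\sim\upsilon$ and $\P*{y'=y\given y=x,\,j^*>i^*,\,\mathcal{R}(\mu,\delta),\,\mathcal{B}}=\upsilon(x)$. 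Averaging both subcases over the conditional distribution of $y$,
\[
    \P*{y'=y\given \mathcal{R}(\mu,\delta),\,\mathcal{B}}\leq \sum_x \P*{y=x\given \mathcal{R}(\mu,\delta),\,\mathcal{B}}\cdot \upsilon(x) \leq \upsilon_{\max},
\]
so $\P*{y'\neq y\given \mathcal{R}(\mu,\delta),\,\mathcal{B}}\geq 1-\upsilon_{\max}$. Multiplying by the marginal identity from the previous paragraph completes the proof.
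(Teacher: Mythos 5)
Your proposal is correct, and it even proves a statement stronger by a factor of $2$: you get $(1-\upsilon_{\max})\cdot\Pr_{x\sim\mu}[\upsilon(x)<(1-\delta)\mu(x)]$ rather than half of it. Your route differs from the paper's in how the $\Pr_{x\sim\mu}[\cdot]$ factor is extracted. You condition on $\mathcal{R}(\mu,\delta)$ directly and observe that this conditioning preserves the law of the output $y=x_{i^*}$ (the acceptance band $[(1-\delta)\mu(x),\mu(x)]$ has mass proportional to $\mu(x)$), then split on the relative order of the two accepted indices: when $\upsilon$ accepts strictly earlier, minimality forces $\upsilon>\mu$ at the accepted symbol while $\upsilon<\mu$ at $y$, so disagreement is deterministic; when $\upsilon$ accepts strictly later, the pairs beyond $i^*$ are untouched by all the conditioning, so the $\upsilon$-output is a fresh $\upsilon$-sample and agrees with $y$ with probability at most $\upsilon_{\max}$. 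The paper instead conditions on $i^*_{\min}=\min\{i\colon p_i\leq\max\{\mu(x_i),\upsilon(x_i)\}\}$, keeps only the event $i^*_{\mu}<i^*_{\upsilon}$ (discarding the other ordering that you exploit), and computes the chance that $p_i$ lands in the band above $\upsilon(x_i)$ given acceptance under $\max\{\mu,\upsilon\}$; the normalization $\sum_x\max\{\mu(x),\upsilon(x)\}=1+d_{\operatorname{TV}}(\mu,\upsilon)\leq 2$ is exactly where the paper's factor $\tfrac12$ enters, after which it applies the same fresh-pairs argument for the $1-\upsilon_{\max}$ term. So the shared ingredient is the independence of the pairs after the accepted index; what your decomposition buys is the exact identity $\Pr[y=x\mid\mathcal{R}(\mu,\delta)]=\mu(x)$ in place of the $1/(1+d_{\operatorname{TV}})$ estimate, yielding the cleaner constant, while the paper's computation has the virtue of mirroring its robustness analysis of \textsc{MinCoupler} in \cref{lem:uni-coupler-gurantee}. (Both arguments, yours and the paper's, are only nontrivial for $\delta\in(0,1]$; for $\delta>1$ the right-hand side vanishes, as your event $\mathcal{B}$ then has probability zero.)
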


\paragraph{The (random) hard instances.} 
Suppose $m=20n$. 
Consider parameters $y_1, y_2, \cdots, y_n \in \R^m$, which are randomly constructed in symmetry and will be stated later.
Let $z\sim N(0,I_m)$ be a random vector. 
For each $i\in [n]$, let the variable $X_i=\mathsf{round}(\angles{y_i,z})$, where the rounding function $\mathsf{round}(x)$ is constructed as follows.
\begin{align*}
    \mathsf{round}(x) = \min \braces*{n^4\log n, \max \braces*{-n^4\log n, \floor*{n^4x} } }.
\end{align*}
To prove the lower bound, we consider the parameters $y_1,y_2,\cdots, y_n$ as i.i.d.s following the distribution $N(0,\frac{1}{m}I_m)$ and use $V_{i}=\angles{y_i,z}$ to denote the variables before rounding.
It can be shown that, with high probability, $X_i$ equals the floor of $n^4V_i$ for any $i\in [n]$.
\begin{lemma}
\label{lem:diff-between-XandV}
    For any $y_1,y_2,\cdots, y_n$, with probability at least $1-n^{-\Omega(\log n)}$, $\forall i\in [n], X_i = \floor*{n^4V_i}$.
\end{lemma}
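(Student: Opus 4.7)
The plan is to show, for fixed $y_1,\dots,y_n$, that every $V_i = \angles{y_i, z}$ lies in $(-\log n, \log n)$ with high probability over $z$; on this event, both clamps in $\mathsf{round}$ are inactive, and $X_i$ collapses to $\floor*{n^4V_i}$. I keep $y_1,\dots,y_n$ as fixed arbitrary inputs and take the probability solely over the Gaussian randomness of $z$, so the ``for any'' in the lemma is respected at the outer quantifier.

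Fix arbitrary $y_1,\dots,y_n\in\mathbb{R}^m$. Because $z\sim N(0,I_m)$, each $V_i$ is a one-dimensional Gaussian with mean $0$ and variance $\|y_i\|^2$, so by the standard Gaussian tail bound,
\[ \P_{z}{|V_i|\ge \log n} \;\le\; 2\exp\!\bigl(-\tfrac{\log^2 n}{2\|y_i\|^2}\bigr). \]
A union bound over $i\in[n]$ gives
\[ \P_{z}{\exists i\in[n]:\; |V_i|\ge \log n} \;\le\; 2n\exp\!\bigl(-\tfrac{\log^2 n}{2\max_{i}\|y_i\|^2}\bigr), \]
which is $n^{-\Omega(\log n)}$ whenever $\max_{i}\|y_i\|^2 = O(1)$. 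On the complementary event, $n^4 V_i\in(-n^4\log n,\, n^4\log n)$ for every $i$, hence $\floor*{n^4V_i}\in\braces{-n^4\log n,\dots,n^4\log n-1}$, neither clamp in $\mathsf{round}$ activates, and therefore $X_i = \mathsf{round}(V_i) = \floor*{n^4V_i}$ for every $i\in[n]$.

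The main obstacle is the implicit dependence on $\max_{i}\|y_i\|$: taken literally, for arbitrary $y_i$ of unbounded norm the Gaussian tail degrades and the bound breaks, so the lemma must be read as implicitly quantifying over $y_i$ of typical norm (equivalently, the constant hidden in $\Omega(\log n)$ depends on $\max_i \|y_i\|^2$). This is harmless in the outer lower-bound argument: one pays a separate $n^{-\Omega(\log n)}$ to condition on the event $\max_{i}\|y_i\|^2\le 2$ via chi-square concentration on $m\|y_i\|^2\sim \chi^2_m$ and a union bound over $i$, after which every $y$ in the conditioned set is a valid fixed input to which the Gaussian tail argument over $z$ above applies verbatim, yielding the claimed $1 - n^{-\Omega(\log n)}$ guarantee.
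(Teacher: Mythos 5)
Your proof is correct and is precisely the intended argument (the paper states this lemma without proof, treating it as immediate): each $V_i\sim N(0,\|y_i\|^2)$, so a Gaussian tail bound plus a union bound over $i\in[n]$ gives $|V_i|<\log n$ for all $i$ with probability $1-n^{-\Omega(\log n)}$, and on that event neither clamp in $\mathsf{round}$ fires, so $X_i=\lfloor n^4V_i\rfloor$. Your caveat is also well taken: the literal ``for any $y_1,\dots,y_n$'' quantifier is too strong and the bound really needs $\max_i\|y_i\|^2=O(1)$, which holds with probability $1-2^{-n^{\Omega(1)}}$ for the construction's $y_i\sim N(0,\tfrac{1}{m}I_m)$ by chi-square (Laurent--Massart) concentration, exactly as you observe.
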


In addition, $\norm{y_i}^2$ follows the $\chi_m^2$ distribution (with a $\frac{1}{m}$ factor). 
We can use the following to show that $\norm{y_i}^2$ is concentrated within $1\pm \epsilon$ with probability at least $1-2^{-n^{\Omega(1)}}$ for any constant $\epsilon>0$.

\begin{lemma}[Laurent-Massart bound~{\cite[Lemma 1,][]{LM00}}]
    Let $y\sim N(0,I_m)$ and $a\in \R_{\geq 0}^m$. 
    Let $Z = \sum_{i\in [m]} a_i(y_i^2-1)$.
    Then, for any $x\geq 0$,
    \begin{align*}
        \P *{|Z|>2\norm{a}_2\sqrt{x} + 2\norm{a}_{\infty} x} \leq 2\exp(-x).
    \end{align*}
\end{lemma}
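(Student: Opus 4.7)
This is the classical Laurent--Massart concentration inequality for a centered weighted chi-square sum, and I would follow the standard Chernoff--Cram\'er argument, broken into four steps.

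\emph{Step 1 (moment generating function).} Since $y_1,\dots,y_m\sim N(0,1)$ are independent, each $y_i^2\sim \chi^2_1$ has MGF $\mathbb{E}[e^{t y_i^2}]=(1-2t)^{-1/2}$ for $t<1/2$. Hence for any $\lambda$ with $2\lambda\norm{a}_\infty<1$,
\[
\mathbb{E}[e^{\lambda Z}] \;=\; \prod_{i\in [m]} e^{-\lambda a_i}(1-2\lambda a_i)^{-1/2}.
\]

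\emph{Step 2 (cumulant bound).} Use the elementary inequality $-\log(1-u)-u \le \frac{u^2/2}{1-u}$ for $u\in[0,1)$, which follows from the Taylor expansion $-\log(1-u)=\sum_{k\ge 1}u^k/k$. Applying it with $u=2\lambda a_i$ and summing over $i$ gives
\[
\log \mathbb{E}[e^{\lambda Z}] \;\le\; \sum_{i}\frac{\lambda^2 a_i^2}{1-2\lambda a_i} \;\le\; \frac{\lambda^2 \norm{a}_2^2}{1-2\lambda \norm{a}_\infty}.
\]

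\emph{Step 3 (optimize $\lambda$).} Combining with the Chernoff bound $\mathbb{P}(Z>t)\le \exp(\log \mathbb{E}[e^{\lambda Z}]-\lambda t)$ and using the reparameterization $\lambda=\mu/(1+2\mu\norm{a}_\infty)$, the exponent becomes $[\mu^2\norm{a}_2^2-\mu t]/(1+2\mu\norm{a}_\infty)$. Setting $t=2\norm{a}_2\sqrt{x}+2\norm{a}_\infty x$ and $\mu=\sqrt{x}/\norm{a}_2$, a short direct substitution shows this exponent collapses to exactly $-x$, yielding $\mathbb{P}(Z>t)\le e^{-x}$.

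\emph{Step 4 (lower tail and union bound).} The same Chernoff method applied to $-Z$ uses the improved inequality $-\log(1+u)+u\le u^2/2$ (no $1/(1-u)$ blow-up, so the admissibility constraint on $\lambda$ disappears), giving the strictly stronger bound $\mathbb{P}(Z<-2\norm{a}_2\sqrt{x})\le e^{-x}$, and in particular $\mathbb{P}(Z<-t)\le e^{-x}$ for the same $t$. A union bound over the two tails produces the two-sided statement with the extra factor of $2$.

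The main obstacle is the bookkeeping in Step~3: one has to pick the reparameterization so that the nonlinear $1/(1-2\lambda\norm{a}_\infty)$ denominator cancels cleanly and the resulting quadratic in $\mu$ has a double zero at $\mu=\sqrt{x}/\norm{a}_2$. Once this choice is made everything drops out algebraically, but it is the only step that requires real care; the MGF computation and Taylor estimate are entirely routine.
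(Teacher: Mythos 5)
Your proposal is correct: the Cram\'er--Chernoff argument with the chi-square moment generating function, the cumulant bound via the Taylor estimate, the reparameterization that makes the exponent collapse to $-x$, and the sharper lower-tail bound followed by a union bound is exactly the original Laurent--Massart proof of their Lemma~1. The paper does not prove this statement at all---it simply cites it from Laurent and Massart---so your argument is essentially the same as the proof in the cited source, and all four steps check out (including the observation that the reparameterized $\lambda$ is automatically admissible, and that the exponent plus $x$ is a perfect square over a positive denominator, vanishing at $\mu=\sqrt{x}/\lVert a\rVert_2$).
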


Suppose $y'_1,\cdots, y'_n$ are the vectors generated by the Gram-Schmidt orthogonalization procedure on $y_1,\cdots, y_n$. 
With this lemma and the fact that $y_1,\cdots, y_n$ are i.i.d. Gaussians, if we write each $y'_i$ as a linear expression of $y_1,\cdots, y_n$, the coefficients can be bounded polynomially in $n$.
The proof is deferred to \cref{proof:coef-of-gramschmidt}.

\begin{lemma}
    \label{lem:coef-of-gramschmidt}
    Suppose $n\leq m/20$ and $y_1,y_2,\cdots,y_n$ are i.i.d. vectors following the distribution $N(0, \frac{1}{m}I_m)$.
    Consider $y'_1, y'_2,\cdots, y'_n$ as the vectors generated by the Gram-Schmidt orthogonalization procedure on $y_1, y_2,\cdots, y_n$.
    Suppose $\forall j\in [n], y'_j = \sum_{k=1}^j c_{jk}y_k$.
    With probability at least $1-O(n^{-3})$, for any $\ell\in [n]$, 
        $\sum_{j=1}^n c_{j\ell}^2 \leq 2n$.
\end{lemma}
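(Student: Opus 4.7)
The plan is to first reduce $\sum_{j} c_{j\ell}^2$ to a diagonal entry of $(Y^\top Y)^{-1}$, where $Y \in \R^{m\times n}$ is the matrix with columns $y_1,\dots,y_n$, and then apply the Schur complement identity together with the Laurent-Massart bound to control this quantity.

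For the first step, I would write $Y = QR$ with $Q$ having orthonormal columns and $R$ upper triangular; the Gram-Schmidt procedure produces the columns of $Q$ as $y'_j$ (up to a scalar in the unnormalized convention). Then $Q = YR^{-1}$ identifies $c_{jk}$ with the entries of $R^{-1}$, and a short matrix calculation gives
\begin{align*}
    \sum_{j=1}^n c_{j\ell}^2 = \|e_\ell^\top R^{-1}\|^2 = e_\ell^\top (R^\top R)^{-1} e_\ell = (G^{-1})_{\ell\ell},
\end{align*}
where $G = Y^\top Y$. (For the unnormalized convention this picks up a multiplicative factor of at most $\max_j \|y_j\|^2$, which is bounded by a constant with high probability via the Laurent-Massart bound applied to each $\|y_j\|^2 \sim \chi^2_m/m$.)

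Next I would invoke the Schur complement identity on the positive definite matrix $G$, which gives
\begin{align*}
    (G^{-1})_{\ell\ell} = \frac{1}{\|(I - P^{(\ell)}) y_\ell\|^2},
\end{align*}
where $P^{(\ell)}$ is the orthogonal projection onto $\operatorname{span}\{y_k : k\neq \ell\}$. The key observation is that $y_\ell$ is independent of $P^{(\ell)}$ because the $y_i$ are i.i.d.\ Gaussians. Conditional on $P^{(\ell)}$, the rank of $I - P^{(\ell)}$ is almost surely $m - n + 1 \geq 19n + 1$, so $m\|(I - P^{(\ell)}) y_\ell\|^2$ has distribution $\chi^2_{m-n+1}$. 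Applying the Laurent-Massart bound with parameter $x = \Theta(\log n)$ yields $m\|(I - P^{(\ell)}) y_\ell\|^2 \geq m/2$ with probability at least $1 - n^{-4}$ (using $m \geq 20n$), so $(G^{-1})_{\ell\ell} \leq 2$. A union bound over $\ell \in [n]$ then completes the proof with failure probability $O(n^{-3})$.

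The main conceptual step is the Schur complement identity, which transforms a seemingly global quantity about the entire Gram-Schmidt procedure into the squared residual of a single independent Gaussian vector projected off of a fixed subspace, where Laurent-Massart applies cleanly. Everything else reduces to standard matrix algebra and chi-squared tail bounds; in fact the argument above establishes the stronger bound $(G^{-1})_{\ell\ell} = O(1)$, so the factor $2n$ in the statement is quite loose.
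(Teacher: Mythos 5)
Your proposal is correct, and it takes a genuinely different route from the paper. The paper proves the lemma by induction over the Gram--Schmidt steps: it writes the unnormalized vector $y''_k = y_k - \sum_{j<k}\langle y_k, y'_j\rangle y'_j$, observes that each new coefficient $c'_{k\ell}$ is Gaussian with variance $\frac{1}{m}\sum_{j=\ell}^{k-1}c_{j\ell}^2$, bounds it via a Gaussian tail by $\frac{10\log n}{m}\sum_{j=\ell}^{k-1}c_{j\ell}^2$, controls the normalization through $\|y''_k\|_2^2\ge 0.9$ (Laurent--Massart), and thereby shows the running sums grow by a factor at most $1+\frac{12\log n}{m}$ per step; the final bound $2n$ comes from $2(1+\frac{12\log n}{m})^{m/20}\le 2n$, so the factor $n$ is an artifact of compounding per-step losses. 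You instead collapse the whole recursion into a closed-form identity: with $Y=QR$ and $c_{jk}=(R^{-1})_{kj}$, the row-norm computation gives $\sum_j c_{j\ell}^2 = (G^{-1})_{\ell\ell}$ for $G=Y^\top Y$, and the Schur complement identifies this with $1/\|(I-P^{(\ell)})y_\ell\|^2$, the reciprocal squared residual of $y_\ell$ off the span of the other columns. Independence of $y_\ell$ from $P^{(\ell)}$ then reduces everything to a single $\chi^2_{m-n+1}$ lower-tail bound, and a union bound over $\ell$ finishes; all steps check out (you implicitly use that the $y_k$ are almost surely linearly independent so that $R$, $G$ are invertible, which is fine, and the paper's normalized Gram--Schmidt convention is indeed the relevant one, so the parenthetical about the unnormalized case is unnecessary). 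Your argument is cleaner and strictly stronger, giving $(G^{-1})_{\ell\ell}=O(1)$ rather than $2n$ (which would also slightly simplify the downstream use in the proof of \cref{lem:ub-diff-between-conditionings}, where the coefficient bound enters as $\sum_{j}c_{jk}^2\le 2n$); what the paper's inductive proof buys in exchange is only that it stays entirely within the elementary Gram--Schmidt bookkeeping already set up there, at the cost of the loose factor $n$.
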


Next, we show that if $i-a=\Omega(n^{1/3})$, the conditional distribution of $X_{i} | \braces{X_{j}}_{j\in [a]}$ and $X_{i} | \braces{X_{j}}_{j\in [i-1]}$ can be very different under random conditioning of $X_1,\cdots, X_{i-1}$ for $\delta = \Omega(n^{-1/3})$. 
We formalize it in \cref{lem:ub-diff-between-conditionings}.
Its proof can be summarized by the following 3 key ingredients:
\begin{enumerate}
    \item $V_{i} | \braces{V_{j}}_{j\in [a]}$ and $V_{i} | \braces{V_{j}}_{j\in [i-1]}$ can be neatly expressed as (randomly constructed) normal distributions whose means have a difference $\Omega(n^{-1/3})$ with a constant probability and whose variances are both $\Omega(1)$ with a high probability.
    \item Rounding two normal distributions that satisfy the above two properties gives two very different discrete distributions for $\delta=\Omega(n^{-1/3})$, implying that $X_{i} | \braces{V_{j}}_{j\in [a]}$ and $X_{i} | \braces{V_{j}}_{j\in [i-1]}$ are very different.
    \item Using \cref{lem:coef-of-gramschmidt} to handle random noise produced by conditioning on $\{X_j\}$ instead of $\{V_j\}$.
\end{enumerate}
The full proof is deferred to \cref{proof:ub-diff-between-conditionings}.

\begin{lemma}
    \label{lem:ub-diff-between-conditionings}
    Consider any $a,i\in [n]$ such that $i-a> 40n^{1/3}$.
    Suppose $\mu, \upsilon$ are the (randomly constructed) distributions of $X_i | \braces{X_j}_{j\in [a]}$ and $X_i | \braces{X_j}_{j\in [i-1]}$.
    There exist constants $c_1>0,c_2\in (0,1)$ such that 
    \begin{align*}
        \mathbb{E} _{y_1,\cdots, y_n}
        \E *_{X_1,X_2,\cdots, X_{i-1}}{
           (1-\upsilon_{\max}) \cdot \P *_{k\sim \mu}{\frac{\upsilon(k)}{\mu(k)} \leq 1-c_1\cdot n^{-1/3} }
        } 
        \geq 
        c_2.
    \end{align*}
\end{lemma}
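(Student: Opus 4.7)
The plan is to exploit the joint Gaussianity of $V_1, \ldots, V_n$ conditional on the parameters $y_1, \ldots, y_n$. By standard Gaussian conditioning, $V_i \mid V_{[s]}$ is Gaussian with a mean $\mu^V_s$ that is linear in $V_{[s]}$ and variance $\sigma_s^2 = \|\beta_s\|^2$, where $\beta_s$ is the component of $y_i$ orthogonal to $\operatorname{span}(y_1, \ldots, y_s)$. Since $m = 20n$ and we project out a subspace of dimension at most $n$, the Laurent--Massart bound gives $\sigma_a^2, \sigma_{i-1}^2 \in [c, 1]$ for an absolute constant $c > 0$ with probability $1 - e^{-n^{\Omega(1)}}$. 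The sequence $\mu^V_a, \mu^V_{a+1}, \ldots, \mu^V_{i-1}$ is a Doob martingale whose total increment variance equals $\sigma_a^2 - \sigma_{i-1}^2 = \Theta((i - 1 - a)/m) = \Theta(n^{-2/3})$ for $i - a > 40 n^{1/3}$; since $\Delta := \mu^V_{i-1} - \mu^V_a$ is itself Gaussian, this anti-concentrates to $|\Delta| \geq c_0 n^{-1/3}$ with some absolute constant probability $p_0 > 0$.

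Next I transfer from conditioning on $V_{[s]}$ to conditioning on $X_{[s]}$. By Lemma~\ref{lem:diff-between-XandV}, with overwhelming probability $X_j$ pins $V_j$ to an interval of length $n^{-4}$. Writing $\mu^V_s$ as a linear combination of $V_1, \ldots, V_s$ through the Gram--Schmidt vectors $y'_j$ and invoking Lemma~\ref{lem:coef-of-gramschmidt} with Cauchy--Schwarz on the bound $\sum_j c_{j\ell}^2 \leq 2n$, the perturbation of $\mu^V_s$ induced by the interval ambiguity is at most $n^{-4} \cdot O(\sqrt{n}) = o(n^{-1/3})$. Hence the actual mean shift $\Delta^X := \mu^X_{i-1} - \mu^X_a$ still satisfies $|\Delta^X| \geq (c_0/2) n^{-1/3}$ with probability at least $p_0 - o(1)$, and the conditional variances are essentially unchanged. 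After rounding, $\mu$ and $\upsilon$ are bin-width-$n^{-4}$ discretizations of Gaussians $f_a, f_{i-1}$ whose standard deviations $\sigma_s = \Theta(1)$ dwarf the bin width, so $\mu(k) = (1 \pm o(1)) n^{-4} f_a(k/n^4)$ uniformly on any region carrying $\Omega(1)$ mass, and likewise for $\upsilon$; thus $\upsilon(k)/\mu(k)$ matches the Gaussian ratio $f_{i-1}/f_a$ to leading order, and $\upsilon_{\max} = O(n^{-4})$ so $1 - \upsilon_{\max} \geq 1/2$.

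To conclude, for two Gaussians with common variance $\sigma^2 = \Theta(1)$ and mean shift $|\Delta^X| \geq (c_0/2) n^{-1/3}$, the log-ratio $\log(f_{i-1}(V)/f_a(V))$ under $V \sim f_a$ is itself Gaussian with standard deviation $|\Delta^X|/\sigma = \Theta(n^{-1/3})$, so the event $\log(f_{i-1}(V)/f_a(V)) \leq -c_1 n^{-1/3}$ (equivalently $\upsilon(k)/\mu(k) \leq 1 - c_1 n^{-1/3}$) holds with constant probability under $V \sim f_a$ for any $c_1 < c_0/2$. Intersecting the constant-probability events over $(y_1, \ldots, y_n, X_1, \ldots, X_{i-1})$ proven above with the inner constant-probability event over $k \sim \mu$ yields the required $c_2 > 0$.

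The main obstacle is the passage from $V$-conditioning to $X$-conditioning. The regression coefficients expressing $\mu^V_s$ from $V_{[s]}$ may individually be polynomial in $n$, so naive $\ell_\infty$ perturbation bounds would not suffice; one must combine the $\ell_2$ bound of Lemma~\ref{lem:coef-of-gramschmidt} with Cauchy--Schwarz to see that the accumulated rounding-induced perturbation remains $o(n^{-1/3})$. A secondary technical care is needed in the discrete-to-continuous approximation step: the multiplicative error in $\mu(k) \approx n^{-4} f_a(k/n^4)$ must be quantitatively smaller than the $n^{-1/3}$ gap we aim for, which can be enforced by truncating to a typical-probability region where $f_a$ is bounded away from zero.
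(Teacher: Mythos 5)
Your proposal is correct and follows essentially the same route as the paper's proof: Gaussian conditioning through the Gram--Schmidt projections, Laurent--Massart to get $\Theta(1)$ residual variances, anti-concentration of the $\Theta(n^{-1/3})$ shift in the conditional mean, the $\ell_2$/Cauchy--Schwarz transfer from $V$-conditioning to $X$-conditioning via \cref{lem:coef-of-gramschmidt}, and a final comparison of the two discretized Gaussians with $\upsilon_{\max}=O(n^{-4})$. The only real difference is presentational: the paper conditions on the one-sided event $C\le -n^{-1/3}$ and bounds $\mu(k)$ and $\upsilon(k)$ directly on an explicit window, explicitly using $\lVert y_{i|a}^{\perp}\rVert \ge \lVert y_{i|i-1}^{\perp}\rVert$, whereas your log-ratio computation asserts a \emph{common} variance even though the two conditional variances differ by $\Theta(n^{-2/3})$ --- a lower-order correction that should be acknowledged but does not affect the conclusion.
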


\paragraph{$\widetilde{\Omega}(n^{2/3})$ lower bound of \cref{alg:sample-on-hypergrid2}.} In the rest of this section, we suppose $c_1,c_2$ are the constants stated in \cref{lem:ub-diff-between-conditionings}.
We can show in \cref{lem:constant-fail-prob-instances}, the algorithm only makes small progress (i.e., $a$ increases by $O(n^{1/3})$) with probability at least $\Omega(1)$ in each round before termination.
\begin{lemma}
    \label{lem:constant-fail-prob-instances}
    Suppose $\forall i\in [n], \sigma(i)=i$.
    For any $a_0\in [n-1]\cup \braces{0}$, if we initiate \cref{alg:sample-on-hypergrid2} with $a=a_0$ and $\forall i\in[a_0], x^0_i = \tilde{x}_i(\sigma, u)$, the probability that the algorithm will have $a\leq a_0+(40+c_1^{-1}\cdot \ceil{\log(4c_2^{-1})} )n^{1/3}$ after one round is at least $\frac{c_2}{4}$, where the probability is taken over the randomness $y_1,y_2,\cdots, y_n$ used in the constructions of the instances and the randomness $u_1,\cdots, u_n$ used by the algorithm. 
\end{lemma}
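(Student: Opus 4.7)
The plan is to show that the first round of the algorithm advances $a$ by at most $K n^{1/3}$ (with $K := 40 + c_1^{-1}\lceil\log(4c_2^{-1})\rceil$) with probability at least $c_2/4$, by combining the per-index failure bound of \cref{lem:sure-mistakes-coupler} with an amplification based on the independence of a filtering event.

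First I would reduce to a disagreement event. If there exists a minimal $i^* \in I := (a_0, a_0 + K n^{1/3}]$ with $y^t_{i^*} \neq \tilde{x}_{i^*}(\sigma, u)$, then $y^t_j = \tilde{x}_j$ for all $j < i^*$, so the recursion defining $x^t$ matches the recursion defining $\tilde{x}$ on the first $i^*$ indices; hence $x^t_{i^*} = \tilde{x}_{i^*} \neq y^t_{i^*}$, and the new value of $a$ is at most $i^*$. It therefore suffices to prove $\mathbb{P}[\exists i \in I': y^t_i \neq \tilde{x}_i] \geq c_2/4$, where $I' := I \cap (a_0 + 40 n^{1/3}, \infty)$ is the sub-range to which \cref{lem:ub-diff-between-conditionings} applies.

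Next I would obtain a per-index bound using a filtering construction. Let $\mu_i$ and $\upsilon_i$ denote the conditional distributions $X_i \mid \tilde{x}_{[a_0]}$ and $X_i \mid \tilde{x}_{[i-1]}$, and set $\delta := c_1 n^{-1/3}$. Define $V_i := \mathbf{1}[u_i \in \mathcal{R}(\mu_i, \delta)]$, which by construction has $\mathbb{P}[V_i = 1] = \delta$. Crucially, $\mu_i$ is a deterministic function of $(\text{instance}, u_{[a_0]})$, so conditional on these the $\{V_i\}_{i \in I'}$ are mutually independent Bernoullis, and moreover each $V_i$ is independent of the random variable
\[
W_i := \frac{1-\upsilon_{i,\max}}{2}\cdot \mathbb{P}_{k \sim \mu_i}[\upsilon_i(k)/\mu_i(k) \leq 1-\delta],
\]
which is a function of $u_{[a_0+1:i-1]}$. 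By \cref{lem:sure-mistakes-coupler}, $\mathbb{P}[y^t_i \neq \tilde{x}_i \mid V_i = 1,\, u_{[i-1]},\, \text{instance}] \geq W_i$, and by \cref{lem:ub-diff-between-conditionings}, $\mathbb{E}[W_i] \geq c_2/2$ for $i \in I'$.

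Finally I would amplify as follows. With $K_0 := c_1^{-1}\lceil\log(4/c_2)\rceil$, one has $|I'|\delta \geq \log(4/c_2)$, so a direct Chernoff-type computation gives $\mathbb{P}[\exists i \in I': V_i = 1] \geq 1 - (1-\delta)^{|I'|} \geq 1 - c_2/4$. Let $T := \min\{i \in I': V_i = 1\}$. Combining \cref{lem:sure-mistakes-coupler} with the $V_i \perp W_i$ independence would give, up to a correction term, $\mathbb{P}[y^t_T \neq \tilde{x}_T \mid T < \infty] \geq c_2/2$, which yields the target $\mathbb{P}[\exists i \in I': y^t_i \neq \tilde{x}_i] \geq (1-c_2/4)(c_2/2) \geq c_2/4$ for $c_2 \leq 1$. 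The main obstacle here is this ``correction term'': conditioning on $T = i$ also conditions on $V_j = 0$ for $j \in I'$ with $j < i$, which perturbs the distribution of $\tilde{x}_j$ by total variation at most $\delta/(1-\delta)$ per step, and hence perturbs $\upsilon_T$ and $W_T$. I would handle the cumulative perturbation via a sequential martingale-style argument that tracks the conditional expectation of $W_{i_k}$ under the filtration generated by the revealed $V_{i_1}, \dots, V_{i_{k-1}}$ and $u_{j}$ for $j < i_k$, showing that even after the full conditioning over the $\Theta(n^{1/3})$ indices in $I'$ the bound on $\mathbb{E}[W_T \mid T < \infty]$ remains above $c_2/4$, which suffices because of the slack between the main term $c_2/2$ and the target $c_2/4$.
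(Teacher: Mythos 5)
Your reduction to a disagreement event, the filtering set $\mathcal{R}(\mu_i,\delta)$ with $\delta=c_1 n^{-1/3}$, the per-index bound combining \cref{lem:sure-mistakes-coupler} with \cref{lem:ub-diff-between-conditionings}, and the computation $1-(1-\delta)^{|I'|}\ge 1-c_2/4$ all match the paper's proof of \cref{lem:constant-fail-prob-instances}. The gap is in your final amplification step, where you stop at the \emph{first} index $T\in I'$ with $V_T=1$ and must then control the bias from conditioning on $V_j=0$ for all earlier $j\in I'$. The control you propose (per-step TV perturbation $\delta/(1-\delta)$, absorbed by the slack between $c_2/2$ and $c_2/4$) cannot work: there are up to $|I'|\approx c_1^{-1}\lceil\log(4c_2^{-1})\rceil n^{1/3}$ such conditionings, so the cumulative perturbation is of order $\log(4/c_2)$, i.e.\ the total variation distance between the conditioned and unconditioned law of $u_1,\dots,u_{T-1}$ can be as large as $1-(1-\delta)^{|I'|}\approx 1-c_2/4$ (equivalently, the likelihood ratio can reach $(1-\delta)^{-|I'|}\approx 4/c_2$). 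Since \cref{lem:ub-diff-between-conditionings} gives only the expectation bound $\mathbb{E}[W_i]\ge c_2/2$ with $W_i\le 1/2$, this is consistent with $W_i$ vanishing outside an event of probability $\Theta(c_2)$, and nothing prevents the conditioning event (whose probability is only about $c_2/4$) from sitting inside $\{W_i\ \text{small}\}$; the generic bound $\mathbb{E}[W_T\mid T=i]\ge c_2/2-\tfrac12(1-c_2/4)$ is negative for small $c_2$, so no martingale bookkeeping based solely on these two lemmas can keep $\mathbb{E}[W_T\mid T<\infty]$ above $c_2/4$. The $c_2/4$ slack is far too small to absorb an order-one bias.

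The paper removes the bias instead of bounding it, by conditioning on the \emph{last} qualifying index: let $\mathcal{A}_i$ be the event that $u_i\in\mathcal{R}(\mu_i,\delta)$ while $u_j\notin\mathcal{R}(\mu_j,\delta)$ for every $j\in I'$ with $j>i$. These events are disjoint and their union is exactly $\{\exists i\in I': V_i=1\}$, so it still has probability at least $1-c_2/4$. Moreover, given the instance and $u_1,\dots,u_{i-1}$ (which determine $\mu_j$ for all $j\in I'$), each constraint on $u_j$ with $j>i$ holds with probability exactly $1-\delta$ and the constraint on $u_i$ with probability exactly $\delta$, independently of the values of $u_1,\dots,u_{i-1}$; hence conditioning on $\mathcal{A}_i$ leaves $u_1,\dots,u_{i-1}$ i.i.d.\ uniform and makes $u_i$ uniform on $\mathcal{R}(\mu_i,\delta)$. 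Consequently $W_i$ retains its unconditional expectation $\ge c_2/2$, the disagreement event at $i$ (which depends only on the instance and $u_1,\dots,u_i$) is unaffected by the constraints on $u_j$, $j>i$, and the bound $(1-c_2/4)\cdot(c_2/2)\ge c_2/4$ follows with no correction term. If you replace your stopping rule ``first $i\in I'$ with $V_i=1$'' by ``last $i\in I'$ with $V_i=1$,'' the rest of your argument goes through essentially verbatim.
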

\begin{proof}
    Let $obj=a_0+(40+c_1^{-1}\cdot \ceil{\log(4c_2^{-1})})n^{1/3}$ be the objective value of $a$ after one round. 
    Let $\mathcal{I}$ be the set of integers in $[a_0+40n^{1/3}, obj]$.
    For each $i>a$, let $\mu_i$ be the distribution of $X_i$ conditioning on $\braces{X_j=\tilde{x}_j(\sigma, u)}_{j\in [a]}$
    and let $\upsilon_i$ be the distribution of $X_i$ conditioning on $\braces{X_j=\tilde{x}_j(\sigma, u)}_{j\in [i-1]}$.
    Let $\mathcal{A}_i$ be the event that $\forall i<j\in \mathcal{I}, u_j\notin \mathcal{R}\parens{\mu_{j}, c_1\cdot n^{-1/3}}$ and $u_i\in \mathcal{R}\parens{\mu_{j}, c_1\cdot n^{-1/3}}$.
    It is clear that $\mathcal{A}_i$ are disjoint events.
    Note that we have $a>obj$ only if for any $i\in \mathcal{I}$,
    $
        \textsc{MinCoupler}(\mu_{i},u_{i}) = \textsc{MinCoupler}(\upsilon_{i},u_{i})
    $.
    Therefore, we have 
    \begin{align*}
        \P{a>obj} 
        &\leq 
        \P{\forall i\in \mathcal{I}, \textsc{MinCoupler}(\mu_{i},u_{i}) = \textsc{MinCoupler}(\upsilon_{i},u_{i})}
        \\
        &=
        1 - \P{\exists i\in \mathcal{I}, \textsc{MinCoupler}(\mu_{i},u_{i}) \neq \textsc{MinCoupler}(\upsilon_{i},u_{i})}
        \\
        &\leq
        1 - \P{\exists i\in \mathcal{I}, u_i\in \mathcal{R}(\mu_i, c_1\cdot n^{-1/3}) \text{ and } \textsc{MinCoupler}(\mu_{i},u_{i}) \neq \textsc{MinCoupler}(\upsilon_{i},u_{i})}
        \\
        &\leq 
        1 - \sum_{i\in \mathcal{I}} \P{\mathcal{A}_i} \cdot \P*{\textsc{MinCoupler}(\mu_{i},u_{i}) \neq \textsc{MinCoupler}(\upsilon_{i},u_{i}) \given \mathcal{A}_i}
        \\
        &\leq
        1 - \parens*{\sum_{i\in \mathcal{I}} \P{\mathcal{A}_i}} \cdot \min_{i\in \mathcal{I}} \P*{\textsc{MinCoupler}(\mu_{i},u_{i}) \neq \textsc{MinCoupler}(\upsilon_{i},u_{i}) \given \mathcal{A}_i}.
    \end{align*}

    Since $\mathcal{A}_i$ are disjoint events, $\sum_{i\in \mathcal{I}} \P{\mathcal{A}_i}$ equals the probability that there exists $i\in \mathcal{I}$ such that $u_i\in \mathcal{R}(\mu_i,c_1\cdot n^{-1/3})$. 
    Because $u_{i}, i\in \mathcal{I}$ are i.i.d.s in $[0,1]$ and $\abs{\mathcal{I}} = c_1^{-1}\ceil{\log(4c_2^{-1})}n^{1/3}$, this sum of probabilities be lower bounded as follows:
    \begin{align*}
        \sum_{i\in \mathcal{I}} \P{\mathcal{A}_i} = 
        \P*{\exists i\in \mathcal{I}, u_i\in \mathcal{R}\parens*{\mu_{i}, c_1\cdot n^{-1/3}} } \geq 1-\parens*{1-c_1\cdot n^{-1/3}}^{c_1^{-1} \log(4c_2^{-1}) n^{1/3}} \geq 1 - \frac{c_2}{4}.
    \end{align*}
    On the other hand, for any $i\in \mathcal{I}$, 
    conditioning on the event $\mathcal{A}_i$, we have $u_{i}\sim \mathcal{R}(\mu_{i}, c_1\cdot n^{-1/3})$ and $u_1,\cdots, u_{i-1}$ are uniform i.i.d.s in $[0,1]$.
    Therefore, $\tilde{x}_1(\sigma,u),\cdots, \tilde{x}_{i-1}(\sigma, u)$ follows its original marginal distribution of $X_1,\cdots, X_{i-1}$ after conditioning on $\mathcal{A}_i$. For any choice of $i\in \mathcal{I}$, we have the following lower bound:
    \begin{align*}  
        & \P*{\textsc{MinCoupler}(\mu_{i},u_{i}) \neq \textsc{MinCoupler}(\upsilon_{i},u_{i}) \given  \mathcal{A}_i}
        \\
        = & \mathbb{E}_{y_1,\cdots y_n} \E *_{u_1,\cdots, u_{i-1}}{ \P *_{u_{i}\sim \mathcal{R}(\mu_{i}, c_1\cdot n^{-1/3})}{\textsc{MinCoupler}(\mu_{i}, u_{i}) \neq \textsc{MinCoupler}(\upsilon_{i}, u_{i})} }
        \\
        \geq & \mathbb{E}_{y_1,\cdots y_n} \E *_{u_1,\cdots, u_{i-1}}{ \frac{1-(\upsilon_i)_{\max}}{2} \cdot \P *_{x\sim \mu_{i}}{\upsilon_{i}(x)\leq (1-c_1\cdot n^{-1/3})\mu_{i}(x)} }
        \geq \frac{c_2}{2}.
    \end{align*}
    Therefore, $\P{a>obj}\geq 1-(1-\frac{c_2}{4})\cdot \frac{c_2}{2} \geq 1-\frac{c_2}{4}$.
\end{proof}

However, this constant probability does not suffice to show the $\widetilde{\Omega}(n^{2/3})$ lower bound.
This is because we have not eliminated the possibility that the algorithm can terminate with a constant probability in each round. 
Next, we boost this probability of small progress to $1-n^{-\Omega(1)}$ by constructing a new instance with $\poly\log(n)$ i.i.d. such instances.
More specifically, let $g=20c_2^{-1}\log n$ be the number of groups. 
The parameters of a new instance are the vectors $y_1,y_2,\cdots, y_n\in \R^m$ and the group numbers $h(1),h(2),\cdots, h(n)\in [g]$.
Let $z_1,z_2,\cdots, z_g\sim N(0,I_m)$ be i.i.d. random Gaussian vectors.
Then, the variables $X_1, \cdots, X_n$ in the new instance are defined as follows:
\begin{align*}
    X_i = \mathsf{round}\parens*{
        \angles{y_i, z_{h(i)}}
    }.
\end{align*}    
With this new construction, we can show that the algorithm will have small progress in each round with high probability. 

\begin{lemma}
    \label{lem:high-fail-prob-instances}
    Suppose $\forall i\in [n], \sigma(i)=i$.
    For any $a_0\in [n-1]\cup \braces{0}$, if we initiate \cref{alg:sample-on-hypergrid2} with $a=a_0$ and $\forall i\in[a_0], x^0_i = \tilde{x}_i(\sigma, u)$, the probability that the algorithm will have $a\geq a_0+2(40+c_1^{-1}\cdot \ceil{\log(4c_2^{-1})} )n^{1/3}g$ after one round is at most $(1+o(1))n^{-5}$, where the probability is taken over the randomness $y_1,y_2,\cdots, y_n, h(1), \cdots, h(n)$ used in constructions of the instances and the randomness $u_1,\cdots, u_n$ used by the algorithm. 
\end{lemma}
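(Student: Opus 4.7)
}

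The plan is to exploit the product structure across the $g$ independent groups of the new hard instance and reduce to an independent application of \cref{lem:constant-fail-prob-instances} on each group. Throughout, set $K := 40 + c_1^{-1}\lceil \log(4c_2^{-1})\rceil$ and write $W := (a_0, a_0 + 2Kn^{1/3}g] \cap [n]$.

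First, I would establish the decomposition: since $X_i = \mathsf{round}(\langle y_i, z_{h(i)}\rangle)$ depends on the $z$'s only through $z_{h(i)}$, and $z_1,\dots,z_g$ are independent, the distribution factorizes as $\mu = \bigotimes_{k\in[g]} \mu_{(k)}$, where $\mu_{(k)}$ is the joint law of $\{X_i : h(i)=k\}$ (conditional on the $y$'s and $h$'s). Each $\mu_{(k)}$ is distributionally identical to the standalone hard instance of \cref{lem:constant-fail-prob-instances} on $m_k := |h^{-1}(k)|$ variables. Consequently, for any $i$ with $h(i)=k$, the conditional marginal $X_i \mid \{X_j\}_{j\in S}$ depends only on $\{X_j\}_{j\in S,\, h(j)=k}$, so every conditional-marginal query issued by \cref{alg:sample-on-hypergrid2} on a group-$k$ coordinate effectively ignores all pinnings outside group $k$. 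Combined with the independence of the $u_i$'s, this means the execution of the algorithm on group $k$'s coordinates is an independent copy of its execution on the sub-instance $\mu_{(k)}$.

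Second, by a Chernoff bound on the binomially distributed count $|h^{-1}(k) \cap W|$ (mean $\approx 2Kn^{1/3}$), each group contributes at least $Kn^{1/3}$ positions to $W$ with probability $1 - e^{-\Omega(n^{1/3})}$. A union bound over the $g = O(\log n)$ groups gives the ``balanced'' event $\mathcal{E}_{\text{bal}}$ that this holds simultaneously for all $k$, with $\P(\overline{\mathcal{E}_{\text{bal}}}) = o(n^{-5})$. On $\mathcal{E}_{\text{bal}}$, since $m_k \leq n$ implies $Km_k^{1/3} \leq Kn^{1/3}$, \cref{lem:constant-fail-prob-instances} applied to the sub-instance $\mu_{(k)}$ (with the appropriate local starting index given by $|h^{-1}(k)\cap[a_0]|$) guarantees that group $k$ exhibits a disagreement within its first $Kn^{1/3}$ positions past $a_0$---hence inside $W$---with probability at least $c_2/4$. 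By the cross-group independence established above, these events are mutually independent, so
\[
\P\bigl(\text{no group has a disagreement in } W \,\big|\, \mathcal{E}_{\text{bal}}\bigr) \leq (1 - c_2/4)^g \leq e^{-c_2 g/4} = n^{-5}.
\]
A disagreement anywhere in $W$ forces the updated value of $a$ after one round to satisfy $a \leq a_0 + 2Kn^{1/3}g$, so combining the two bounds gives
\[
\P\bigl(a > a_0 + 2Kn^{1/3}g\bigr) \leq n^{-5} + o(n^{-5}) = (1+o(1))n^{-5},
\]
which is the claimed bound.

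The main obstacle is the careful verification of the group-decomposition of the algorithm's execution---i.e., that although \cref{alg:sample-on-hypergrid2} globally maintains a shared pinning vector $x^{t-1}$ and a global index $a$, the \textsc{MinCoupler} calls within each group depend only on that group's own variables, its own $z_k$, and its own $u_i$'s. Once this is formalized (using the product structure of $\mu$ at each conditioning step), the rest of the argument is a clean combination of an independence-based amplification and a Chernoff concentration on group sizes.
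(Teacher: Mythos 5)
Your proposal is correct and follows essentially the same route as the paper's proof: factor the grouped instance into $g$ independent sub-instances, use a Chernoff/union bound to ensure each group places at least $(40+c_1^{-1}\lceil\log(4c_2^{-1})\rceil)n^{1/3}$ coordinates in the window, apply \cref{lem:constant-fail-prob-instances} per group, and amplify via independence of the $u_i$'s to get $(1-c_2/4)^g=n^{-5}$. One cosmetic caveat: each group is not literally ``distributionally identical to the standalone hard instance on $m_k$ variables'' (the ambient dimension stays $m=20n$ and the rounding scale stays $n^4$), but since your balance event guarantees $Kn^{1/3}$ in-window coordinates per group, the lemma applies at the $n$-scale exactly as in the paper, so the argument goes through unchanged.
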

\begin{proof}
    For convenience, we use $obj=a_0+2(40+c_1^{-1}\cdot \ceil{\log(4c_2^{-1})})n^{1/3}g$ to denote the objective position for the algorithm to reach after one round.
    Let $\mathcal{I}$ be the set of integers in $[a_0+1, obj]$.
    For any $h\in [g]$, let $\mathcal{I}'_h$ be the set of $i\in [obj]$ such that $h(i)=h$.
    With probability $1-2^{-n^{\Omega(1)}}$ over the choices of $\braces{h(i)\given i\in \mathcal{I}}$, for any $h\in [g]$, the size of $\braces{i\in \mathcal{I}\given h(i)=h}$ (i.e., $\mathcal{I}\cap \mathcal{I}'_h$) is at least $(40+c_1^{-1}\cdot \ceil{\log(4c_2^{-1})})n^{1/3}$.
    If $a\geq obj$ after one round of \cref{alg:sample-on-hypergrid2}, there should be
    \begin{align}
        \label{eqn:cond-large-progress}
        \forall i\in \mathcal{I}, \quad \textsc{MinCoupler}\parens*{\parens*{X_i \given \braces*{X_j}_{j\in [a_0]}}, u_i} = \textsc{MinCoupler}\parens*{\parens*{X_i \given \braces*{X_j}_{j\in [i-1]}}, u_i}
    \end{align}
    Note that for any $i,j\in [obj]$ such that $h(i)\neq h(j)$, $X_i, X_j$ are independent.
    \cref{eqn:cond-large-progress} is equivalent to $\forall h\in [g]$,
    \begin{align*}
        \forall i\in \mathcal{I}\cap \mathcal{I}'_h, ~~ \textsc{MinCoupler}\parens*{\parens*{X_i \given \braces*{X_j}_{j\in [a_0]\cap \mathcal{I}'_{h}}}, u_i} = \textsc{MinCoupler}\parens*{\parens*{X_i \given \braces*{X_j}_{j\in [i-1]\cap \mathcal{I}'_{h}}}, u_i}
    \end{align*}
    Because of \cref{lem:constant-fail-prob-instances}, for each $h\in [g]$, it happens with probability at most $1-\frac{c_2}{4}$.
    Since $u_1,\cdots, u_n$ are i.i.d.s, under these choices of $\braces{h(i)\given i\in \mathcal{I}}$, the probability that $a\geq obj$ after one round is at most 
    \begin{align*}
        \parens*{1-\frac{c_2}{4}}^{g} = n^{-5}.
    \end{align*}
    According to the union bound, we complete the proof.
\end{proof}

Finally, we establish the main theorem of this section. 

\begin{proof}[proof of \cref{thm:lb-for-our-algorithm}]
Let $R(X,\sigma,u)$ be the number of rounds of \cref{alg:sample-on-hypergrid2} on variables $X$, using randomness $\sigma, u$. 
It suffices to show that 
\begin{align}
\label{eqn:lb-for-our-algorithm}
\P *_{\sigma, u, y, h}{R(X,\sigma, u)\geq \frac{n^{2/3}}{40c_2^{-1}(40+c_1^{-1}\cdot \ceil{\log(4c_2^{-1})})\log n}} \geq 0.99.
\end{align}
Since $y_1,h(1),\cdots,y_n, h(n)$ are constructed in symmetry, for any permutation $\sigma \in \mathcal{S}_n$, $X_{\sigma(1)},\cdots, X_{\sigma(n)}$ are identically distributed as $X_1,\cdots, X_n$.
Therefore, it suffices to show \cref{eqn:lb-for-our-algorithm} assuming $\sigma(i)=i$ for any $i\in [n]$.
According to \cref{lem:high-fail-prob-instances} and the union bound, with probability at least $1-n^{-4}$ over the choice of $y,h,u$, for any initialization of $a$, we can increase $a$ by at most $40c_2^{-1}(40+c_1^{-1}\cdot \ceil{\log(4c_2^{-1})})n^{1/3}\log n$.
In this case, the round complexity of \cref{alg:sample-on-hypergrid2} is at least $\frac{n^{2/3}}{40c_2^{-1}(40+c_1^{-1}\cdot \ceil{\log(4c_2^{-1})})\log n}$, and thus we obtain~\cref{eqn:lb-for-our-algorithm}.
\end{proof}}

	\Tag<sigconf>{\begin{acks}
		
\end{acks}
}
    \Tag{\PrintBibliography}

    \appendix
    
    \Tag{\section{Deferred Proofs}
\subsection{Proof of \cref{lem:approx-counting-equiv-under-good}}
\label{proof:approx-counting-equiv-under-good}
We use $x^t,y^t,a^t$ to denote the intermediate variables used in the algorithm with the exact counting oracle, and use $\hat{x}^t,\hat{y}^t,\hat{a}^t$ to denote the intermediate variables used in the algorithm with the approximate counting oracle.
    We prove the claim that $a^t=\hat{a}^t$ and $x^t_{\sigma(j)}=\hat{x}^t_{\sigma(j)}=\tilde{x}_j(\sigma,u)$ for any $j\in [a^t]$ after each round $t$ we compute $a^t$.
    
    When $t=0$, $a^t=\hat{a}^t=0$ and this claim is clearly true.
    Consider $t\geq 1$. 
    Suppose we have proved the claim for $t-1$.
    Because of the induction hypothesis and the fact that $(\sigma, u)$ is good, $y^t=\hat{y}^t$. 
    Because of the definition of $a^t$ and \cref{lem:fixed-after-a}, we have $y^t_{\sigma(j)}=x^t_{\sigma(j)}=\tilde{x}_j(\sigma, u)$ for any $j\in [a^t-1]$ and we have $y^t_{\sigma(a^t)}\neq x^t_{\sigma(a^t)}=\tilde{x}_{a^t}(\sigma, u)$. 
    Because of $\hat{y}^t=y^t$ and the induction hypothesis, we have $\hat{x}^t_{\sigma(j)} = x^{t}_{\sigma(j)}=\tilde{x}_j(\sigma, u)$ for any $j\in [a^t]$. 
    Therefore, according to the definition of $\hat{a}^t$, $\hat{a}^t=a^t$.

    Finally, we show how to obtain the lemma using this claim. 
    If the algorithm with the exact counting oracle terminates with $a^t=n$ in some round, the claim directly implies that the algorithm with the approximate counting oracle terminates in the same round and outputs the vector.
    Otherwise, suppose $t$ is the round in which the algorithm with the exact counting oracle terminates. 
    There is $y^t=x^t$.
    As discussed above, the claim gives $\hat{y}^t=y^t$.
    Therefore, $\hat{y}^t_{\sigma(i)}=x^t_{\sigma(i)}=\tilde{x}_i(\sigma, u)$ for any $i\in [n]$. 
    Since $(\sigma, u)$ is good, the algorithm with the approximate counting oracle will generate $\hat{x}^t = \hat{y}^t$ and terminate in this round.
    The output $\hat{x}^t$ is thus the same as $x^t$.

\subsection{Proof of \cref{lem:approx-counting-good-prob}}
\label{proof:approx-counting-good-prob}

    Note that we use $q$ queries of $\hat{\mu}$ to compute each $\upsilon_{i|a}(\sigma, u)$. 
    With probability at least $1-n^2q\delta$, all queries $\hat{\mu}$ we use while computing $\upsilon_{i|a}(\sigma, u)$s satisfy \cref{eqn:approx-counting}.
    Under these circumstances, for any $0\leq a< i\leq n$ and any $x\in [q]$,
    \begin{align*}
        \parens*{\upsilon_{i|a}(\sigma, u)}(x) \geq \frac{1-\epsilon}{1+\epsilon} \cdot \P *{X_{\sigma(i)}=x \given \braces{X_{\sigma(j)}=\tilde{x}_j(\sigma,u)}_{j\in [a]}}.
    \end{align*}
    Therefore, $\tv *{X_{\sigma(i)}\given \braces{X_{\sigma(j)}=\tilde{x}_j(\sigma,u)}_{j\in [a]}, \upsilon_{i|a}(\sigma, u)} \leq 1-\frac{1-\epsilon}{1+\epsilon} \leq 2\epsilon$.
    Recall that the universal coupler guarantees $\P{\textsc{UniversalCoupler}(\mu, u)\neq \textsc{UniversalCoupler}(\upsilon, u)}\leq 2\tv{\mu, \upsilon}$ for any two distributions $\mu, \upsilon$. 
    Under these circumstances, due to the union bound, $(\sigma,u)$ is good with probability at least $1-4n^2\epsilon$. 
    Putting things together, any $(\sigma,u)$ is good with probability at least $(1-n^2q\delta)\cdot (1-4n^2\epsilon) \geq 1-O(n^2\epsilon+n^2q\delta)$.

\subsection{Proof of \cref{lem:sure-mistakes-coupler}}
\label{proof:sure-mistakes-coupler}
Let $i^*_{\mu} = \min \braces{i \given p_i\leq \mu(x_i)}$ and $i^*_{\upsilon} = \min \braces{i\given p_i\leq \upsilon(x_i)}$.
    According to the definition of {\sc{MinCoupler}}, $\textsc{MinCoupler}(\mu, r) \neq \textsc{MinCoupler}(\upsilon, r)$ if and only if 
    $x_{i^*_{\mu}} \neq x_{i^*_{\upsilon}}$.
    Observe that
    \begin{align*}
        \P *_{r\sim \mathcal{R}(\mu,\delta)}{x_{i^*_{\mu}} \neq x_{i^*_{\upsilon}}} 
        &=
        \frac{
            \P *_{r\sim [0,1]}{i^*_{\mu} < i^*_{\upsilon},~ r\in \mathcal{R}(\mu,\delta)} 
            \cdot 
            \P *{x_{i_{\mu}^*}\neq x_{i_{\upsilon}^*} \given i^*_{\mu} < i^*_{\upsilon},~r\in \mathcal{R}(\mu,\delta)}
        }
        {\P*_{r\sim [0,1]}{r\in \mathcal{R}(\mu, \delta)}}.
    \end{align*}
    On the denominator, we have,
    \begin{align*}
        \P*_{r\sim [0,1]}{r\in \mathcal{R}(\mu, \delta)} 
        &=
        \sum_{i\geq 1}
        \P*_{r\sim [0,1]}{i^*_{\mu} = i} \cdot \P *_{r\sim [0,1]}{p_i \geq (1-\delta) \mu(x_i) \given p_i \leq \mu(x_i)}
        \\
        &=
        \sum_{i\geq 1}
        \P*_{r\sim [0,1]}{i^*_{\mu} = i} \cdot \delta = \delta. 
        \tag{$p_i\sim [0,1]$}
    \end{align*}
    On the other hand, let $i^*_{\min}=\min\braces{i^*_{\mu},i^*_{\upsilon}}$. 
    As discussed in the preliminary, we have $i^*_{\min} = \min\braces{i\given p_i \leq \max\braces{\mu(x_i), \upsilon(x_i)}}$.
    For any $i\geq 1$,
    \begin{align*}
        &\P *_{r\sim [0,1]}{i^*_{\mu}<i^*_{\upsilon},~ r\in \mathcal{R}(\mu,\delta) \given i^*_{\min}=i}
        \\
        =&
        \P *_{r\sim [0,1]}{p_i>\upsilon(x_i), ~ p_i \in \bracks*{(1-\delta)\mu(x_i), ~ \mu(x_i)} \given i^*_{\min}=i}
        \\
        \geq&
        \P *_{r\sim [0,1]}{\upsilon(x_i)<(1-\delta)\mu(x_i),~ p_i \in \bracks*{(1-\delta)\mu(x_i), ~ \mu(x_i)} \given i^*_{\min}=i}
        \\
        =&
        \P *_{r\sim [0,1]}{\upsilon(x_i)<(1-\delta)\mu(x_i), ~ p_i \in \bracks*{(1-\delta)\mu(x_i), ~ \mu(x_i)} \given p_i \leq \max\braces{\mu(x_i),\upsilon(x_i)}}
        \tag{$(x_i,p_i)$s are drawn independently}
        \\
        =&
        \frac{\P *_{r\sim [0,1]}{\upsilon(x_i)<(1-\delta)\mu(x_i),~ p_i \in \bracks*{(1-\delta)\mu(x_i), ~ \mu(x_i)} }}
        {\P*{p_i \leq \max\braces{\mu(x_i),\upsilon(x_i)}}}
        \\
        =&
        \frac{q^{-1}\delta \sum_{x\in [q]} \mu(x)}
        {q^{-1} \sum_{x\in [q]} \max\braces{\mu(x),\upsilon(x)}}
        \cdot \P *_{r\sim [0,1]}{\upsilon(x_i)<(1-\delta)\mu(x_i) \given p_i \in \bracks*{(1-\delta)\mu(x_i), ~ \mu(x_i)} }
        \tag{$x_i\sim [q], p_i\sim [0,1]$}
        \\
        =&
        \frac{ \delta }{ 1+\tv{\mu, \upsilon}} \cdot \P *_{x\sim \mu}{\upsilon(x)<(1-\delta)\mu(x) }
        \geq
        \frac{ \delta }{ 2} \cdot \P *_{x\sim \mu}{\upsilon(x)<(1-\delta)\mu(x) }.
    \end{align*}
    Therefore, we have 
    \[
        \P_{r\sim [0,1]}{i^*_{\mu}<i^*_{\upsilon}, r\in \mathcal{R}(\mu,\delta)}\geq \frac{\delta}{2}\cdot \P_{x\sim \mu}{\upsilon(x)<(1-\delta)\mu(x)}.
    \]
    In addition, for any $i\geq 1$, the event that $i=i^*_{\mu}<i^*_{\upsilon}$ and $r\in \mathcal{R}(\mu,\delta)$ is independent of the value of any $x_j$ for $j>i$.
    Since for any $j\geq 1$ and any (possibly random) $x\in [q]$, $\P{x_j\neq x}\geq 1-\upsilon_{\max}$, 
    we have 
    \[
    \P{x_{i^*_{\mu}}\neq x_{i^*_{\upsilon}} \given i^*_{\mu}<i^*_{\upsilon}, r\in \mathcal{R}(\mu,\delta)} \geq 1-\upsilon_{\max}.
    \]
    Therefore, on the numerator, we have,
    \begin{align*}
        \P *_{r\sim \mathcal{R}(\mu,\delta)}{x_{i^*_{\mu}} \neq x_{i^*_{\upsilon}}}
        &\geq 
        \frac{ \delta (1-\upsilon_{\max}) }{ 2} \cdot \P *_{x\sim \mu}{\upsilon(x)<(1-\delta)\mu(x) }.
    \end{align*}
    Putting things together, we obtain that $\P_{r\sim \mathcal{R}(\mu,\delta)}{x_{i^*_{\mu}}\neq x_{i^*_{\upsilon}}}\geq \frac{1-\upsilon_{\max}}{2}\cdot \P *_{x\sim \mu}{\upsilon(x)<(1-\delta)\mu(x) }$ and thus the lemma.

\subsection{Proof of \cref{lem:coef-of-gramschmidt}}
\label{proof:coef-of-gramschmidt}
    We shall prove this lemma by induction on $i$ of the following statements: 
    with probability at least $1-i\cdot O(n^{-4})$, we have
    \begin{align}
    \label{eqn:coef-of-gramschmidt}
    \forall \ell\in [i], \quad \sum_{j=1}^i c_{j\ell}^2 \leq 2 \cdot \parens*{1+\frac{12\log n}{m}}^i
    \end{align}
    When $i=1$, Gram-Schmidt procedure gives $y'_1=y_1/\norm{y_1}_2$.
    Because of the Laurent-Massart bound, $\norm{y_1}_2^2\in 1\pm 0.1$ with probability at least $1-2^{-n^{\Omega(1)}}$. 
    Therefore, $c_{11}^2 \leq 1.2$ with probability $1-2^{-n^{\Omega(1)}}$.

    Suppose $k>1$ and we have shown the statements for any $i<k$.
    Next, we show the statements for $i=k$.
    We only consider the randomness of $y_k$ and consider the cases where \cref{eqn:coef-of-gramschmidt} holds and $c_{jj}^2 \leq 1.2$ holds for any $j\in [i]$.
    According to the Gram-Schmidt procedure, $y'_k$ is obtained as follows: $y''_k = y_k - \sum_{j\in [k-1]} \angles{y_k, y'_j} y'_j$ and $y'_k = y''_k/\norm{y''_k}_2$.
    Since $y_k\sim N(0, \frac{1}{m}I_m)$ and $\braces{y'_j}_{j\in [k-1]}$ are orthonormal, $\angles{y_k,y'_j}$ are i.i.d.s following $N(0,\frac{1}{m})$. 
    Suppose $y''_k = y_k + \sum_{j\in [k-1]} c'_{kj}y_j$.
    Because we have 
    \begin{align*}
        \sum_{j=1}^{k-1} \angles{y_k,y'_j} y'_j 
        = 
        \sum_{j=1}^{k-1} \angles{y_k,y_j'} \cdot 
        \sum_{\ell=1}^j c_{j\ell} y_{\ell}
        =
        \sum_{\ell=1}^{k-1} 
        \sum_{j=\ell}^{k-1} c_{j\ell}\angles{y_k, y'_j}
        y_{\ell},
    \end{align*}
    we have $c'_{kk}=1$ and $c'_{k\ell}\sim N\big(0, \frac{1}{m}\sum_{j=\ell}^{k-1} c_{j\ell}^2\big)$ for any $\ell<k$.
    Therefore, for any $\ell\in [k-1]$, with probability at least $1-O(n^{-5})$,
    \begin{align*}
        c'^{2}_{k\ell} 
        \leq 
        \frac{10\log n}{m} \cdot 
        \sum_{j=\ell}^{k-1} c_{j\ell}^2 
        \leq 
        \frac{10\log n}{m} 
        \cdot 2\parens*{1+\frac{12\log n}{m}}^{k-1} 
    \end{align*}
    Then, according to the second step of obtaining $y'_k$, we have $c_{k\ell}^2=c'^2_{k\ell}/\norm{y''_k}_2^2$ and $c_{kk}^2=1/\norm{y''_k}_2^2$.
    Note that $\norm{y''_k}_2^2 = \norm{y_k}_2^2 - \sum_{j\in [k-1]} (\angles{y_k,y'_j})^2$.
    Since $\angles{y_k,y'_j}\sim N(0, \frac{1}{m})$ and $k<m/20$, we have $\norm{y''_k}_2^2 \geq 0.9$ with probability at least $1-2^{-n^{\Omega(1)}}$.
    Therefore, with probability at least $1-2^{-n^{\Omega(1)}}$, $c_{kk}^2\leq 2$, and with probability at least $1-O(n^{-4})$, for any $\ell\in [k-1]$, we have
    $$c_{k\ell}^2 \leq c'^2_{k\ell}/0.9 \leq \frac{12\log n}{m} \cdot 2 \parens*{1+\frac{12\log n}{m}}^{k-1}.$$
    Since \cref{eqn:coef-of-gramschmidt} for $i=k-1$ holds with probability at least $1-(k-1)O(n^{-4})$, we obtain the proof for $i=k$.

    Finally, because $n\leq m/20$, for any $\ell\in [n]$, we have
    \begin{align*}
        \sum_{j\in [n]} c_{j\ell}^2 \leq 2\cdot \parens*{1+\frac{12\log n}{m}}^n \leq 2\cdot \parens*{1+\frac{12\log n}{m}}^{m/20}\leq 2n.
    \end{align*}

\subsection{Proof of \cref{lem:ub-diff-between-conditionings}}
\label{proof:ub-diff-between-conditionings}
Suppose $y'_1,y'_2,\cdots, y'_n$ are the vectors generated by the Gram-Schmidt orthogonalization procedure on $y_1,y_2,\cdots, y_n$.
    We assume that the conditions in \cref{lem:diff-between-XandV} and \cref{lem:coef-of-gramschmidt} hold, which happens with probability $1-O(n^{-3})$.
    
    Consider we expand these vectors to an orthonormal basis $y'_1,y'_2,\cdots, y'_n, y'_{n+1}, \cdots, y'_m$ of $\R^m$.
    For any $i\in [n]$ and $0\leq j<i$, suppose $y_{i|j}$ is the projection of $y_i$ on the linear span of $y_1,\cdots, y_j$ and let $y_{i|j}^\perp=y_i-y_{i|j}$.
    We have $y_{i|j} = \sum_{k=1}^j \angles{y_i, y'_j} y'_j$ and $y_{i|j}^{\perp} = \sum_{k=j+1}^m \angles{y_i,y'_j} y'_j$.
    We can rewrite $V_i = \angles{y_{i|j}, z} + \angles{y_{i|j}^{\perp}, z}$ for any $j<i$.
    Therefore, for any $a<i$, we can rewrite
    \begin{align*}
        \parens*{V_i\given \braces{X_j}_{j\in[i-1]}} &= \parens*{\angles{y_{i|a}, z} \given \braces{X_j}_{j\in [a]}} + \parens*{\angles{y_{i|i-1}-y_{i|a},z} \given \braces{X_j}_{j\in[i-1]}} + \angles{y_{i|i-1}^{\perp},z},
        \\
        \parens*{V_i\given \braces{X_j}_{j\in[a]}} &= \parens*{\angles{y_{i|a}, z} \given \braces{X_j}_{j\in[a]}} + \angles{y_{i|a}^{\perp},z}.
    \end{align*}
    Note that $\norm{y_{i|i-1}-y_{i|a}}^2 = \sum_{k=a+1}^{i-1} (\angles{y_i,y'_j})^2$. 
    Because $y_i\sim N(0,\frac{1}{m}I_m)$, $y'_j$ are orthonormal, and $y'_j (j<i)$s are independent with $y_i$, $\angles{y_i,y'_j}$ are i.i.d. variables following $N(0,\frac{1}{m})$. 
    Therefore, according to the Laurent-Massart bound, $\norm{y_{i|i-1}-y_{i|a}}^2\in (1\pm 0.01)^2 \cdot \frac{i-a-1}{m}$ with probability at least $1-2^{-n^{\Omega(1)}}$. 
    This implies $\norm{y_{i|i-1}-y_{i|a}}\geq 1.5n^{-1/3}$ with probability at least $1-2^{-n^{\Omega(1)}}$.
    Since $n\leq m/20$, we similarly have $\norm{y_{i|i-1}}^2,\norm{y_{i|a}}^2\leq 0.1$ and $\norm{y_{i|i-1}^{\perp}}^2,\norm{y_{i|a}^{\perp}}^2\in [0.9,1.1]$ with probability at least $1-2^{-n^{\Omega(1)}}$.
    We shall assume these conditions in the rest of the proof.
    
    Because of our assumption (where the condition in \cref{lem:diff-between-XandV} holds), $\forall i\in [n], V_i-X_i\in [0,n^{-4}]$.
    Observe that
    \begin{align*}
        \angles{y_{i|a},z} 
        &= 
        \sum_{j\in [a]} \angles{y_i,y'_j}\angles{y'_j, z} 
        \\
        &= 
        \sum_{j\in [a]} \angles{y_i,y'_j} \sum_{k\in [j]} c_{jk} \angles{y_k,z}
        \\
        &=
        \sum_{j\in [a]} \angles{y_i,y'_j} \sum_{k\in [j]} c_{jk} \parens*{X_k+(V_k-X_k)}
        \tag{$V_k=\angles{y_k,z}$}
        \\
        &=
        \sum_{k\in [a]} \parens*{\sum_{j=k}^{a} c_{jk} \angles{y_i,y'_j}} \cdot \parens*{X_k+(V_k-X_k)}
        \\
        &\stackrel{\text{def}}{=}
        \mathcal{E}_{1} + 
        \underbrace{\sum_{k\in [a]} \parens*{\sum_{j=k}^{a} c_{jk} \angles{y_i,y'_j}} \cdot X_k}_{B}
    \end{align*}
    According to the Cauchy-Schwarz inequality and \cref{lem:coef-of-gramschmidt}, all the coefficients $(\sum_{j=k}^{a} c_{jk} \angles{y_i,y'_j})^2 \leq (\sum_{j=1}^n c_{jk}^2)(\sum_{j=1}^a (\angles{y_i,y'_j})^2)\leq 2n^{2}\norm{y_{i|a}}^2$.
    According to our assumption of $\norm{y_{i|a}}^2\leq 0.1$,
    $\mathcal{E}_{1} \in \pm 2n^{-2}$.

    On the other hand, we similarly have 
    \begin{align*}
        \angles{y_{i|i-1} - y_{i|a},z} 
        &=
        \sum_{k\in [i-1]} \parens*{\sum_{j=\max\braces{k,a+1}}^{i-1} c_{jk} \angles{y_i,y'_j}} \cdot \parens*{X_k+(V_k-X_k)}
        \\
        &\stackrel{\text{def}}{=}
        \mathcal{E}_{2} +
        \underbrace{\sum_{k\in [i-1]} \parens*{\sum_{j=\max\braces{k,a+1}}^{i-1} c_{jk} \angles{y_i,y'_j}} \cdot X_k}_{C},
    \end{align*}
    and $\mathcal{E}_2 \in \pm 2n^{-2}$.
    Because $z\sim N(0,I_m)$, $\angles{y_{i|i-1}-y_{i|a}, z}\sim N(0, \norm{y_{i|i-1}-y_{i|a}}^2)$, and we have $\P{\angles{y_{i|i-1}-y_{i|a},z} \leq -\norm{y_{i|i-1}-y_{i|a}}}\geq 0.15$.
    Since $\mathcal{E}_2 \in \pm 2n^{-2}$ and $\norm{y_{i|i-1}-y_{i|a}}\leq -1.5n^{-1/3}$, we have $\P{C\leq -n^{-1/3}}\geq 0.15$.

    Note that conditioning on $\braces{X_j}_{j\in [i-1]}$, $B$ and $C$ are fixed. 
    Consider the cases where $C\leq -n^{-1/3}$.
    Since $z\sim N(0,I_m)$, for any valid choice of $X_1,\cdots, X_{i-1}$, the distributions $\mu \sim \parens{X_i\given \braces{X_j}_{j\in [a]}}$ and $\upsilon \sim \parens{X_i\given \braces{X_j}_{j\in [a]}}$ can be concluded as follows:
    \begin{enumerate}
        \item $\mu$ is the value applying $\mathsf{round}$ on the sum of the fixed value $B$,
        random Gaussian $N(0, \norm{y_{i|a}^{\perp}}^2)$ and random variables $\parens{\mathcal{E}_1\given \braces{X_j}_{j\in [a]}}\in \pm 2n^{-2}$.
        \item $\upsilon$ is the value applying $\mathsf{round}$ on the sum of the fixed values $B,C$, 
        random Gaussian $N(0, \norm{y_{i|i-1}^{\perp}}^2)$ and random variables $\parens{\mathcal{E}_1\given \braces{X_j}_{j\in [a]}},\parens{\mathcal{E}_2\given \braces{X_j}_{j\in [i-1]}}\in \pm 2n^{-2}$.
    \end{enumerate}
    Suppose $f_1,f_2$ are the density functions of $\parens{\mathcal{E}_1\given \braces{X_j}_{j\in [a]}}$ and $\parens{\mathcal{E}_2\given \braces{X_j}_{j\in [i-1]}}$.
    We have for any integer $k\in [-n^4\log n, n^4\log n]$,
    \begin{align*}
        \mu(k) &= \int_{x=kn^{-4}}^{(k+1)n^{-4}} \int_{e_1} f_1(e_1)\cdot \frac{\exp\parens{-(x-e_1-B)^2/2\norm{y_{i|a}^{\perp}}^2}}{\sqrt{2\pi \norm{y_{i|a}^{\perp}}^2}} de_1 dx
        \\
        \upsilon(k) &= \int_{x=kn^{-4}}^{(k+1)n^{-4}} \int_{e_1, e_2} f_1(e_1)f_2(e_2) \cdot \frac{\exp\parens{-(x-e_1-e_2-B-C)^2/2\norm{y_{i|i-1}^{\perp}}^2}}{\sqrt{2\pi \norm{y_{i|i-1}^{\perp}}^2}} de_1 de_2 dx
    \end{align*}
    If $C<0$, for any integer $k\in \braces{k\in \Z \given kn^{-4}\in (4n^{-2}+B+0.1, 4n^{-2}+B+0.2)}:=\mathcal{K}$, 
    \begin{align*}
        \mu(k) &\geq \frac{\exp\parens{-(kn^{-4}-B+7n^{-2})^2/2\norm{y_{i|a}^{\perp}}^2}}{\sqrt{2\pi} \norm{y_{i|a}^{\perp}}} \cdot n^{-4}
        \\
        \upsilon(k) &\leq \frac{\exp\parens{-(kn^{-4}-B+|C|)^2/2\norm{y_{i|i-1}^{\perp}}^2}}{\sqrt{2\pi} \norm{y_{i|i-1}^{\perp}}} \cdot n^{-4}
    \end{align*}
    In this case, $\upsilon(k)/\mu(k)$ is at most
    \begin{align*}
        &
        \frac{\norm{y_{i|i-1}^{\perp}}}{\norm{y_{i|a}^{\perp}}} \cdot \exp\parens*{
            \frac{1}{2}
            \parens*{\frac{kn^{-4}-B+7n^{-2}}{\norm{y_{i|a}^{\perp}}} - \frac{kn^{-4}-B+|C|}{\norm{y_{i|i-1}^{\perp}}}}
            \parens*{\frac{kn^{-4}-B+7n^{-2}}{\norm{y_{i|a}^{\perp}}} + \frac{kn^{-4}-B+|C|}{\norm{y_{i|i-1}^{\perp}}}}
        }
        \\
        = &
        \frac{\norm{y_{i|i-1}^{\perp}}}{\norm{y_{i|a}^{\perp}}} \cdot \exp\parens*{
            \Theta(1)
            \cdot 
            \parens*{\frac{kn^{-4}-B+7n^{-2}}{\norm{y_{i|a}^{\perp}}} - \frac{kn^{-4}-B+7n^{-2}}{\norm{y_{i|i-1}^{\perp}}} - \frac{|C|-7n^{-2}}{\norm{y_{i|i-1}^{\perp}}}}
        }
        \\
        \leq &
        \exp\parens*{
            \Theta(-|C|+7n^{-2})
        }
        =
        1-\Omega(n^{-1/3}) \tag{$\norm{y_{i|a}^{\perp}}\geq \norm{y_{i|i-1}^{\perp}}$}
    \end{align*}
    In addition, we have 
    \begin{align*}
        \sum_{k\in \mathcal{K}} \mu(k) \geq \int_{0.1+7n^{-2}}^{0.2-7n^{-2}} \frac{\exp\parens*{-x^2/2\norm{y_{i|a}^{\perp}}}}{\sqrt{2\pi}\norm{y_{i|a}^{\perp}}} dx = \Omega(1).
    \end{align*}
    
    Therefore, we conclude that $\P *_{v\sim \mu}{\frac{\upsilon(k)}{\mu(k)}\leq 1-\Omega(n^{-1/3})}\geq \Omega(1)$ if $C\leq -n^{-1/3}$.
    In addition, it is clear that $\upsilon_{\max} = O(n^{-4})$. 
    Because $C\leq -n^{-1/3}$ happens with a constant probability, we complete the proof.}
    
    \Tag<sigconf>{\PrintBibliography}
\end{document}